\renewenvironment{abstract}
{\begin{center}
		\textbf{Abstract}
	\end{center}
	\list{}{ 
		\setlength{\leftmargin}{0.05\textwidth}
		\setlength{\rightmargin}{\leftmargin}
	}
	\item\relax} 
{\endlist}
\newenvironment{keywords}
{\begin{trivlist}\item[]{\bfseries Keywords.}}
	{\end{trivlist}}
\numberwithin{equation}{section}
\theoremstyle{plain} 
\newtheorem{theorem}{Theorem}[section] 
\newtheorem{corollary}[theorem]{Corollary}
\newtheorem{lemma}[theorem]{Lemma}
\newtheorem{proposition}[theorem]{Proposition}
\newtheorem{definition}[theorem]{Definition}
\newtheorem{remark}[theorem]{Remark}
\newtheorem{example}[theorem]{Example}
\newcommand{\N}{\mathbb{N}}
\newcommand{\R}{\mathbb{R}}
\newcommand\restr[2]{{
		\left.\kern-\nulldelimiterspace 
		#1 
		\vphantom{\big|} 
		\right|_{#2} 
}}
\newcommand{\norm}[2][]{\left\|#2\right\|_{#1}}
\newcommand{\Lpnorm}[2][]{\ifthenelse{\equal{#1}{}}{\norm{#2}_{L^p}}{\norm{#2}_{L^p(#1)}}}
\newcommand{\Hknorm}[2][]{\ifthenelse{\equal{#1}{}}{\norm{#2}_{H^k}}{\norm{#2}_{H^k(#1)}}}
\newcommand{\setdef}[2]{\left\lbrace #1 \ : \ #2 \right\rbrace}
\renewcommand{\ker}{\text{\normalfont ker\,}}
\newcommand{\QV}[2][]{\ifthenelse{\equal{#1}{}}{\langle #1 \rangle}{\langle #1,#2 \rangle}}
\renewcommand{\P}{\mathbb{P}}
\newcommand{\ind}{\mathbbm{1}}
\newcounter{author}
\renewcommand*\author[1]{%
	\stepcounter{author}%
	\ifnum\c@author=1
	\gdef\@author{#1}%
	\else
	\xdef\@author{\unexpanded\expandafter{\@author\and#1}}%
	\fi
	\csgdef{author@\the\c@author}{#1}}
\newcommand*\email[1]{%
	\csgdef{email@\the\c@author}{#1}}
\newcommand*\address[1]{%
	\csgdef{address@\the\c@author}{#1}}
	\xdef\author@count{\the\c@author}%
\newcommand*\print@authors{%
	\ifnum\c@author>\author@count
	\else
	\print@author{\the\c@author}%
	\advance\c@author by 1
	\expandafter\print@authors
	\fi}
\newcommand*\print@author[1]{%
	\par\medskip
	\begin{tabular}{@{}l@{}}%
		\textsc{\csuse{author@#1}}\\
		\csuse{address@#1}\\
		\textit{E-mail address}: \csuse{email@#1}
\end{tabular}}
\title{\textrm{\textbf{\Large Characterizing the detailed balance property by means of measurements in chemical networks}}}
\author{\large Eugenia Franco}
\address{University of Bonn, Institute for Applied Mathematics \\ Endenicher Allee 60 \\ D-53115 Bonn \\ GERMANY}
\email{franco@iam.uni-bonn.de}
\author{\large Bernhard Kepka}
\address{University of Bonn, Institute for Applied Mathematics \\ Endenicher Allee 60 \\ D-53115 Bonn \\ GERMANY}
\email{kepka@iam.uni-bonn.de}
\author{\large Juan J. L. Vel\'{a}zquez}
\address{University of Bonn, Institute for Applied Mathematics \\ Endenicher Allee 60 \\ D-53115 Bonn \\ GERMANY}
\email{velazquez@iam.uni-bonn.de}
\date{}
\begin{document}
	\maketitle

\begin{abstract}
    In this paper we study how to determine if a linear biochemical network satisfies the detailed balance condition, without knowing the details of all the reactions taking place in the network. To this end, we use the formalism of response functions $R_{ij} (t) $ that measure how the system reacts to the injection of the substance $j$ at time $t=0$, by measuring the concentration of the substance $i \neq j$ for $t >0$. In particular, we obtain a condition involving two reciprocal measurements (i.e.~$R_{ij}(t), \, R_{ji}(t)$) that is necessary, but not sufficient for the detailed balance condition to hold in the network. Moreover, we prove that this necessary condition is also sufficient if a topological condition is satisfied, as well as a stability property that guarantees that the chemical rates are not fine-tuned. 
\end{abstract}

\begin{keywords}
    Detailed balance; biochemical network; response function; cut vertex.
\end{keywords}

\tableofcontents

\section{Introduction}
A common feature of most of the subsystems that are parts of a biological organism is that they operate out of equilibrium. Therefore they must exchange energy with their surroundings in order to function. 

The most common molecule used by cells to transport energy is Adenosine triphosphate (ATP), although other molecules as Guanosine triphosphate (GTP), Nicotinamide Adenine Dinucleotide Phosphate (NADPH) are sometimes also used. 
The energy stored in ATP, or in any other aforementioned molecules, is realised during a chemical reaction via the dephosphorilization of ATP in Adenosine diphosphate (ADP). 

We now mention few examples of biological systems for which it has been conclusively shown that they work in out of equilibrium conditions. A well known case are the kinetic proof-reading mechanisms of the immune system. 
These mechanisms allow the immune system  to obtain a much larger discrimination between antigens, than the one that could be expected from the difference of their affinities with the receptors of the immune cells. This was studied by Hopfield \cite{hopfield1974kinetic} and Ninio \cite{ninio1975kinetic}. 
Similar mechanisms take place in the recognition and error correcting systems acting in processes like mRNA transcription, DNA duplication  and others, see \cite{Boeger} for a review on the topic.

Another example  of biological system for which it has been experimentally shown that the system works out of equilibrium is the actively beating Chlamydomonasflagella (see \cite{battle2016broken}). 
In general all the systems involving molecular motors, like kinesins, which are involved in the transport of proteins inside the cells, or like the actin/myosin motors, which act as propellers for eukaryotic cells \cite{chaffey2003alberts}, \cite{magnasco1994molecular}, or bacteria like Listeria \cite{tilney1989actin}, must necessarily work out of equilibrium in order to transport molecules or to produce mechanical work to yield cell displacement.

As another example of biological system working out of equilibrium we refer to the pump and leak mechanism, that is used by cells to maintain a constant size, despite the huge difference of osmotic pressure between the interior and the exterior of the cell, see \cite{tosteson1960regulation} as well as \cite{keener2009mathematical} and \cite{mori2012mathematical}. 
A related fact is the way in which the family of ABC proteins, which are on the membrane and acquire different configurations, 
allow molecules to traverse the membrane. The concentrations of substance on both the sides of the membrane are out of equilibrium, therefore, 
in order to be able to select the molecules allowed to cross the membrane, the ABC proteins need to use energy in the form of ATP (see \cite{flatt2023abc}).
We remark also that it has been suggested (see \cite{kurbel2011donnan}) that the values of the Voltage potential on the membrane of some red blood cells are created by a passive mechanism, known as Donnan effect, without requiring active sodium/potassium pumps.

As we will explain in detail later, in this paper we are interested in understanding if a biochemical system satisfies the \textit{detailed balance condition} or not, by measuring the evolution of some suitably chosen substances in the system. A biochemical  system satisfies the detailed balance condition if all the reactions are balanced at the stationary state. 
We refer to Section \ref{sec:DB} for a precise definition of the detailed balance property for the type of systems of interest for this paper. 

An essential feature of all the biological systems introduced above, that require the use of energy, is that they function due to the fact that they are open systems, working in out of equilibrium conditions. More precisely, in all the examples above, the system is able to work because there exists a continuous influx of molecules able to store energy (for instance ATP) and a continuous outflux of by-product (for instance ADP).
However, we can approximate the dynamics of these open systems by means of effective systems that are closed and for which the detailed balance fails, see Section \ref{sec:lack of DB} for more details on this approximation. 
This is the reason why we are interested in studying systems that do not satisfy the detailed balance condition.

In this paper we study how to determine if a linear biochemical system satisfies the detailed balance condition. To this end, we use the formalism of the response functions, introduced in \cite{thurley2018modeling}. Some mathematical questions referring to the response function formalism were studied in \cite{franco2023description}. 
The main idea behind this approach is to study the main properties of a linear biochemical system analysing how the system reacts to certain inputs. 
More precisely, we study a biochemical system by analysing the properties of the response functions $R_{ij}(t)$. These response functions describe the evolution in time of the concentration of the substance $i$, after the arrival of a molecule of substance $j$ to the system. 
The advantage of studying a biochemical system via the response functions $R_{ij}(t)$ is that these functions can be, at least in principle, measured, without requiring a detailed knowledge of all the rates of the reactions that take place in the biochemical system. 

 We restrict the attention to the case of linear biochemical systems with a finite number of states, say $L\in\N$. 
This simplifies the analysis due to the fact that we have a clear correspondence between systems of ODEs and graphs, that will allow us to deduce results using graph theory. 
The relation  between graphs and biochemical reaction networks is more involved in the non-linear case, see \cite{dal2023geometry,feinberg2019foundations}. 

We stress that the analysis of biochemical systems, that do not satisfy the detailed balance property, is a very active area in bio-physics, see for instance \cite{battle2016broken,jiang2003entropy,li2019quantifying,martinez2019inferring,zia2007probability}. 
The novelty of this paper is that we approach the problem using the formalism of response functions.
Although, as we explain in Section \ref{sec:stable properties}, the results obtained in this paper can be understood also as results for stochastic processes, making the connection with the approach studied in \cite{martinez2019inferring}.

In this paper we prove that, if a linear biochemical system satisfies the detailed balance condition, then for every states $i, j $ there exists a constant $c>0$ such that
\begin{equation} \label{eq:intro p(DB)}
R_{j i } (t) =   c R_{i j }(t) \text{ for every } t \geq 0. 
\end{equation}
We remark that \eqref{eq:intro p(DB)} provides a necessary condition for the detailed balance property. In particular, consider a linear biochemical system, with a finite number of states, such that there exists a couple of states, say $1 , 2$, and two positive times $t_1 \neq t_2$, such that $R_{12} (t_1)/R_{21 }(t_1) \neq R_{12} (t_2)/R_{21 }(t_2)$. Then the system does not satisfy the detailed balance condition. 

It is then natural to ask the following question. Assume that equality \eqref{eq:intro p(DB)} holds for a particular choice of states $i $ and $j$. Can we infer that the detailed balance condition holds? 
If not, can we infer the detailed balance condition by  performing more measurements? 

We prove that, if the number of states in the system, $L$, is equal to $3$, then the fact that two states in the system satisfy the property \eqref{eq:intro p(DB)} is equivalent to the detailed balance property of the system.
However, this result holds only for $L=3 $ and we prove it to be false for $L>3$, via a counterexample in $L=4$ (see Example \ref{exam:L=4 strongly connected}).

Nevertheless, since we are dealing with biological applications, it is important to understand if the fact that \eqref{eq:intro p(DB)} holds for a biochemical network, for some $i$ and $j$, is a \textit{robust} property. 
Indeed, small mutations can be expected to modify the chemical rates of a system. Therefore, chemical systems, that do not satisfy the detailed balance property, and satisfy \eqref{eq:intro p(DB)} for some $i,j$ only for some \textit{fine-tuned} rates, cannot be expected from the biological point of view, unless one finds an underlying reason for the network to have fine-tuning.

This motivates us to introduce a definition of stability classes of biochemical systems. A stability class is just a set of biochemical systems involving the same set of substances and whose reactions share the network architecture. 
In particular, if a biochemical system in a certain class is such that the substances $i$ and $j$ do not interact with each other (namely an element with state $i$ cannot jump to state $j$ and vice versa), then the same will be true for all the biochemical systems in the class. 
Moreover if at stationary state a biochemical system is such that the reactions between the substances $i$ and $j$ balance, then the same will be true for all the biochemical systems in the class.

These  features are properties that are not easily modified by evolution, because they would require a very substantial mutation or a very large number of mutations. 
In other words, small mutations, leading to changes in the coefficients of a reaction network, will not modify the class of the network.  
Therefore, we will say that a property is stable or robust in a certain class, if small changes in the reactions rates do not destroy the property. 
We refer to Section \ref{sec:class and stability} for the precise definition of stability classes. 

One could impose different concepts robustness and of classes of networks from the ones that we introduce in this paper.
However, we will see that the definitions that we give not only have some biochemical rationale, but they are also flexible enough to allow us to prove rigorous mathematical results.

In particular we prove that the fact that \eqref{eq:intro p(DB)} holds for some pair $i,j $ is an unstable property for many of the stability classes of biochemical systems that do not satisfy the detailed balance condition. 
This means that if we slightly modify the reactions rates of the biochemical network in one of these stability classes, then we can obtain a biochemical network belonging to the same class, for which \eqref{eq:intro p(DB)} does not hold for any $i,j$. 
This means that to have a biochemical system belonging to this class and satisfying \eqref{eq:intro p(DB)} for some pair $i,j$, without satisfying the detailed balance condition, requires a fine-tuning of the parameters in the system.

Hence, from the practical point of view, if we assume that fine-tuning of the chemical rates does not take place and if we know a priori that a biochemical system belongs to a stability class with a suitable architecture and we have a couple of reciprocal measurements $R_{ij},\, R_{ji} $ satisfying \eqref{eq:intro p(DB)}, then we can infer that the detailed balance condition is satisfied.
An example of class of biochemical systems for which \eqref{eq:intro p(DB)} is unstable is the class of systems in which each of the substances interact with all the other substances in the system and there is at least one reaction in which detailed balance fails, i.e.~one reaction is not balanced at the stationary state. See Corollary \ref{cor:unstable PDB complete graph with one detailed balance} in Section~\ref{sec:stable pdb implies cut vertex}. 

We also prove that there exists a class of biochemical systems that do not satisfy the detailed balance condition and that satisfy \eqref{eq:intro p(DB)} for some states $i,j$ in a stable way. 
This class of system is characterized by a topological property of the graph induced by the reactions among the substances in the system. 

The specific property of the architecture of the network that allows us to have \eqref{eq:intro p(DB)} in a stable manner, without satisfying the detailed balance condition, is the existence of a cut vertex in the graph corresponding to the reaction network. A \textit{cut vertex} is a vertex of the graph with the property that removing it, as well as all the edges connected to it, results in a disconnected graph. The consequence of this is that, if the property $\eqref{eq:intro p(DB)}$ holds for one couple of vertices $(i, j)$, then the detailed balance condition is satisfied, unless either the biochemical system has a particular architecture or the reaction rates are fine-tuned.

The results of this paper can also be interpreted in terms of so-called hidden Markov processes. Let us mention that the detailed balance condition is referred to as reversibility in the theory of stochastic processes.
The equations under consideration in this paper describe also the evolution of the probability densities of a set of states of a Markov process. Hidden Markov processes are characterized by the fact that only some of the states are visible. If both states $(i,j)=(1,2) $ are in the observable part of the Markov process, then the condition \eqref{eq:intro p(DB)} indicates an observable condition that allows to determine (for stable systems in the sense explained above) if the detailed balance condition holds. Finally we recall that hidden Markov processes are a particular type of semi-Markov processes, for which the transition probabilities are given by suitable integral operators, see \cite{feller1991introduction}.

The results in this paper have some analogies with the theory of inverse problems, in the sense that we try to determine if a property  (reversibility, i.e.~detailed balance) holds for a system, by performing a set of measurements.
The type of equations considered in this paper can be thought as parabolic equations in a discrete setting. The theory of inverse problems for elliptic and parabolic equations is very well developed. We refer to the book \cite{isakov2006inverse} for examples of problems of determining the coefficients of an elliptic or parabolic equation from a small set of measurements. However, we are not aware of results determining the reversibility of the underlying stochastic process from some measurements. 
It would be natural to consider the type of problems studied in this paper in the continuous setting (i.e.~to determine the reversibility of some stochastic process from some particular measurements). 

In this paper, we are mainly interested in the study of biochemical systems that are closed, i.e.~we assume that the number of substances in the system does not change in time. 
In Section \ref{sec:extended db} we explain how to generalize some of the concepts and the results of the paper for nonconservative systems, i.e.~some substances are degraded or added to the system.
More precisely, we introduce the notion of extended detailed balance and we prove that \eqref{eq:intro p(DB)} is a necessary condition to have that the extended detailed balance condition holds. 

Finally, in Section \ref{sec:non reverse measurements} we illustrate  why reverse measurements, as in \eqref{eq:intro p(DB)}, are the natural types of measurements to study if a system satisfies or not the detailed balance condition. 
Indeed if these two measurements are such that \eqref{eq:intro p(DB)} is false, then we can already deduce that the system does not satisfy the detailed balance condition. 
In Section \ref{sec:non reverse measurements} we show that it is not possible to formulate a necessary condition for the detailed balance in terms of two measurements that are not mutually reversed, as they are in \eqref{eq:intro p(DB)}. 
In Section \ref{sec:non reverse measurements} we also show that, to be able prove that \eqref{eq:intro p(DB)} implies the detailed balance condition without any stability assumption on the rates, one needs to verify \eqref{eq:intro p(DB)} for a number of reciprocal measurements that is of the order of $L/2 $ when $L$ is large. Here, $L$ is the size of the system. 
Hence, when $L $ is large, this approach would not be feasible, at least from the practical point of view. 

We wish to point out that the theory of biochemical systems is a well developed area and a huge amount of work has been done (also for non-linear systems) in order to relate the behaviour of the solutions to chemical networks in terms of topological properties of the graph, see \cite{dal2023geometry,feinberg1972complex, feinberg1972chemical,feinberg2019foundations}. 

For the reader's convenience we state here the main result of this paper in an informal way. A  precise statement is given in Theorem \ref{thm:stable pdb implies articulation}.

\begin{theorem}
    Consider a biochemical network without cut vertices for which property $\eqref{eq:intro p(DB)}$ holds and is stable under small, admissible perturbations of the chemical rates. Then, the detailed balance condition holds.
\end{theorem}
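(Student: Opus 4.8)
The plan is to first upgrade the hypothesis ``\eqref{eq:intro p(DB)} holds and is stable'' to the much stronger statement ``\eqref{eq:intro p(DB)} holds for \emph{every} admissible assignment of reaction rates in the stability class of the network'', and then to derive the detailed balance condition from this together with the absence of cut vertices, by a combinatorial argument on walks in the graph $G$ associated to the network. Throughout we may assume $G$ is connected (otherwise argue on connected components), so that ``no cut vertex'' means $G$ is $2$-connected.

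I would start by isolating the algebraic content of \eqref{eq:intro p(DB)}. Let $A$ be the matrix of the linear system $\dot p=Ap$, so that $R_{ij}(t)=(e^{tA})_{ij}$. After a Laplace transform and Cramer's rule, the existence of a $c$ with $R_{ji}(t)=c\,R_{ij}(t)$ for all $t\ge 0$ is equivalent to the proportionality of the $(j,i)$ and $(i,j)$ cofactors of $\lambda I-A$ as polynomials in $\lambda$, and hence, comparing Taylor coefficients at $t=0$, to the family of relations $(A^{n})_{ji}\,(A^{m})_{ij}=(A^{m})_{ji}\,(A^{n})_{ij}$ for all $n,m\ge 0$; the constant $c$ is then recovered from the lowest order at which $A^{n}$ has a nonzero $(i,j)$-entry, for instance $c=A_{ji}/A_{ij}$ when $i$ and $j$ react directly. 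Each such relation is a polynomial identity in the finitely many reaction rates, so the set of rate vectors of the class that satisfy \eqref{eq:intro p(DB)} is cut out inside the class by polynomial equations, hence is relatively closed. The stability hypothesis says precisely that our rate vector is an interior point of this set relative to the class; since the class --- which is cut out by imposing that a prescribed set of reactions be balanced at the stationary state --- is irreducible (a point to be checked separately), the set must be the whole class. Thus the relations above hold identically, so in particular $(A^{n})_{ji}=c\,(A^{n})_{ij}$ for all $n$ for every admissible choice of rates.

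The combinatorial heart of the argument is next. Detailed balance is equivalent to Kolmogorov's cycle criterion --- the product of forward rates equals the product of backward rates around every cycle of $G$ --- and since the cycle space of $G$ is generated by the fundamental cycles of any spanning tree, it is enough to verify this for a fixed finite family of cycles. Now $(A^{n})_{ij}$ is the sum over walks of length $n$ from $j$ to $i$ of the product of the rates traversed, reversal is an involution between walks $j\to i$ and walks $i\to j$, and the ratio of the weight of a walk to the weight of its reversal is the product of the ``local'' Kolmogorov ratios along the walk. Fix a cycle $\gamma$ along which Kolmogorov's criterion fails. Using that $G$ is $2$-connected --- so that by Menger's theorem $i$ and $j$, and more generally the relevant vertices of $\gamma$, are joined by two internally disjoint paths --- I would specialize the reaction rates, along an admissible path in the class, to a degenerate regime in which the walk sums $(A^{n})_{ij}$ and $(A^{n})_{ji}$ are both dominated by an explicitly controlled family of walks assembled from $\gamma$ and two internally disjoint $i$--$j$ connectors. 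One then reads off that in this regime the leading asymptotics of $(A^{n})_{ji}/(A^{n})_{ij}$ depend on $n$, whereas the identity of the previous paragraph forces this ratio to be the single constant $c$; this contradiction shows that $\gamma$ cannot exist, i.e.\ Kolmogorov's criterion holds for all cycles and hence detailed balance holds. That two internally disjoint routes are available is used in an essential way: a cut vertex separating $i$ from $j$ would funnel every walk through a common bottleneck and let a Kolmogorov defect lying ``behind'' the bottleneck cancel identically in the ratio --- which is exactly the mechanism underlying Example \ref{exam:L=4 strongly connected}. Packaged as a statement about the existence of a cut vertex, this is Theorem \ref{thm:stable pdb implies articulation}.

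I expect the main obstacle to be this last degeneration step: one has to choose the limiting regime of rates --- in effect a tropical limit of the walk sums --- so that the leading behaviour of both $(A^{n})_{ij}$ and $(A^{n})_{ji}$ is controlled by a tractable set of walks whose reversal ratio detects the defect of $\gamma$, while keeping the rates strictly positive and inside the class (so the designated reactions stay balanced), and while controlling the ``lazy'' walks coming from the diagonal entries of $A$, which are not free parameters, as well as the case in which $i$ and $j$ are not adjacent. Establishing the irreducibility of the class, used in the second paragraph, is a further point requiring a separate, careful argument.
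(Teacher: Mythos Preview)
Your approach differs substantially from the paper's, and the gap you yourself flag in the degeneration step is real and serious enough that the sketch as written does not constitute a proof.

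The paper does \emph{not} pass through your first step of extending \eqref{eq:intro p(DB)} from a neighbourhood to the entire stability class via irreducibility. Instead it works entirely locally around the given matrix $A$. Assuming detailed balance fails on some edge $\alpha=(P,Q)$, the paper first proves a combinatorial lemma (Theorem~\ref{thm:no articulation point, then path}): in a graph without cut vertices, for any edge $\alpha$ and any two vertices $I,F$ there is a \emph{path} from $I$ to $F$ containing $\alpha$. This is stronger than what Menger gives directly and is what is actually needed; the paper proves it from scratch. With such a path $\pi\in\Pi_\alpha(1,2)$ of length $n$ in hand, the paper introduces a perturbation with one independent small parameter $\varepsilon(e)$ for each edge $e\in\pi\setminus\{\alpha\}$, chosen so that the perturbation preserves the steady state $N$ and stays in the class (each $D(e)$ is itself a detailed-balance two-state perturbation on the edge $e$). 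Stability forces $\Delta_n(\overline A)=0$ for all small parameters. Taking the mixed partial derivative $\prod_{e\in\pi\setminus\{\alpha\}}\partial/\partial\varepsilon(e)$ kills every walk of length $n$ except $\pi$ and its reverse $\pi^*$ --- because any other walk of that length must miss at least one edge of $\pi\setminus\{\alpha\}$ --- and what survives is precisely the Kolmogorov defect $N_{\alpha_1}A^{\alpha_1}_{\alpha_2}/N_{\alpha_2}-A^{\alpha_2}_{\alpha_1}$ on $\alpha$, which is nonzero by hypothesis. Contradiction.

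This differentiation trick is the substitute for your degeneration/tropical-limit idea, and it sidesteps exactly the difficulties you list: there is no need to drive rates to a boundary, no need to control which walks are asymptotically dominant, and the diagonal entries (self-loops) disappear automatically because $\pi$ is a path and so the mixed derivative annihilates any walk with a repeated vertex. Your proposed degeneration, by contrast, would have to be carried out \emph{inside} the class $\mathcal A_{E_N,E_B}$ --- so the designated balanced edges must remain balanced for the moving steady state --- while simultaneously arranging that a controlled family of walks dominates both $(A^n)_{ij}$ and $(A^n)_{ji}$; you give no mechanism for this, and the constraint that off-diagonal entries remain strictly positive makes a genuine tropical limit unavailable. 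Your first step also carries a nontrivial burden: the class is cut out by conditions $A_{\alpha_2\alpha_1}N_{\alpha_1}=A_{\alpha_1\alpha_2}N_{\alpha_2}$ where $N$ is the \emph{Perron eigenvector} of $A$, so irreducibility of the real locus is not obvious and would need its own argument. The paper's local perturbative method avoids this entirely.
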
 
Here, admissible perturbations are changes that keep zero reaction rates which were originally zero and keep the detailed balance property on  a set of reactions for which the detailed balance property is expected to hold, see Section \ref{sec:class and stability}.

\subsection{Plan of the paper}
In Section \ref{sec:linear biochemical systems} we introduce the definition of linear biochemical system, we include the definition of detailed balance in this context and we introduce some basic results, useful to understand if a system satisfies the detailed balance condition or not in particular situations. 

In Section \ref{sec:p DB} we introduce the definition of pathwise detailed balance, we explain its relations with the detailed balance condition and with equality \eqref{eq:intro p(DB)}. We also provide a graphical interpretation of the pathwise detailed balance condition, that will require us to introduce also the concept of pathwise symmetry. 

In Section \ref{sec:class and stability} we introduce the concept of stability of a property in a suitably defined class of networks. 
In this section we also explain  in detail the biological motivation for the definition of class that we propose and the reasons why the stability property is relevant for biological applications. The stability allows to give a precise mathematical definition of the concept of robustness and fine-tuning of the parameters. Finally, we prove that the pathwise detailed balance condition is stable for a set of classes. 

In Section \ref{sec:stable pdb implies cut vertex} we prove that the existence of a cut vertex is a necessary property for the stability of the pathwise detailed balance condition. 
This implies the instability of the pathwise detailed balance property for all the classes of networks that do not have a cut vertex. 

In Section \ref{sec:Stochastic} we give an interpretation of \eqref{eq:intro p(DB)} in terms of stochastic processes. 
In Section \ref{sec:extended db} we extend the concept of detailed balance to nonconservative systems, i.e. systems without mass conservation. 
In Section \ref{sec:non reverse measurements} we explain why we concentrate on reverse measurements of the form \eqref{eq:intro p(DB)}. 
Finally, we provide some concluding remarks in Section \ref{sec:conclusion}.

\subsection{Notations}
We define $\R_+ := (0,\infty) $, $\R_*=[0, \infty)$ and write 
\begin{equation} \label{def e} 
\textbf{e}_n =(1, \dots, 1 )^T \in \mathbb R^n.
\end{equation}
Vectors will always be column vectors, i.e.~$v \in \mathbb R^n$ is such that $v=(v_1 \dots, v_n)^T $. 
We denote the scalar product between two vectors $v_1, v_2 $ as $\langle v_1 , v_2 \rangle $. We use the notation $ v_1 \otimes v_2  $ to indicate the tensor product between the two vectors $v_1, v_2.$
We use the following notation 
\[
D=\operatorname{diag} (\{ v_i \}_{i=1}^n )
\]
to denote the diagonal matrix $D \in \mathbb R^{n \times n } $ with $D_{ii}= v_i $ and $D_{ij}=0$ for $i\neq j$. Furthermore, we write $\norm{A}$ for the norm of a matrix $A\in \R^{n\times m}$ with respect to any matrix norm. Finally, for a matrix $A\in \R^{n\times m}$ we denote by $A^T\in\R^{m\times n} $ its transpose.

\section{Linear biochemical systems and detailed balance}\label{sec:linear biochemical systems}
In this section we give a precise definition of the biochemical systems that we consider in this paper.
Then we introduce the notion of detailed balance as well as some necessary or sufficient conditions that allow, in some specific contexts, to check if the detailed balance condition holds (see Section \ref{sec:DB}). We also explain why it is relevant to analyse biochemical systems that do not satisfy the detailed balance condition (see Section \ref{sec:lack of DB}). 
Before doing that we recall some notions of graph theory that are used in this paper (see Section \ref{sec:graph}). 

\subsection{Notions of graph theory} \label{sec:graph}
We recall here some notions of graph theory that will be used in this paper, see e.g. \cite{Bondy1976graph}. 
Let $\mathcal G= (V,  E)$ be a directed graph with vertices $V$ and edges $E \subset V \times V $.
Given an edge $e=(e_1, e_2)$ we say that its ends are the vertices $e_1$ and $e_2$. We use the following notation for the reverse edge,  $e^*:=(e_2, e_1)$.
We say that $\mathcal G $ is a \textit{symmetric directed graph} if $e =(e_1, e_2) \in E $ implies that $e^*=(e_2 , e_1 ) \in E$. 
Notice that we can associate to $\mathcal G $ a (undirected) graph, by replacing every directed edge $(e_1, e_2) \in E $ with an undirected edge and removing the reverse edge $(e_2, e_1)$. 
Finally, a graph is the \textit{complete} graph if $E=V^2$.

We recall now the definition of \textit{walk}, \textit{path} and \textit{cycle}. 
A walk $w $ in $\mathcal G$ is a finite non-null sequence $v_0 e_1 v_1 e_2 v_2 \dots e_k v_k $ whose terms are alternatively vertices and edges such that, for $1 \leq i \leq k$, the ends of
$e_1$ are $v_{i-1}$ and $v_i$, i.e.~$e_1=(v_{i-1}, v_i)$.
If the first vertex of $w$ is $v_0$ and the last is $v_k$, we say that $w$ is a walk from $v_0$ to $v_k$, or a $(v_0, v_k)$-walk. 
Moreover, $\ell(w):=k$ is the length of the walk. We denote the set of walks in $\mathcal G$ by $W$. Furthermore, we write $W_{v u}^{(n)} $ for the set of $(v, u)$-walks of length $n$. 
If the edges and the vertices of a walk $w$ are distinct, then $w$ is called a path. 
A walk is closed if it has positive length and its origin and terminus are the
same. A closed walk, without repeated edges, whose origin and internal vertices are distinct is a cycle. 
Let $w $ be a walk and let $v \in V $. 
We say that $v \in w $ if the sequence $w$ contains the vertex $v$. Similarly we say that an edge $e$ is such that $ e \in w $ if $e$ belongs to the sequence $w$.
Moreover, if $e \in w $, we denote with $w \setminus \{  e \} $ the set of the edges in $w$ different from $e.$
Finally we denote with $w_1 \cap w_2 $ the set of the vertices that belong both to $w_1 $ and to $w_2$. 
A $(v_i, v_j)$\textit{-section of a walk} $v_0 e_1 v_1 e_2 v_2 \dots e_k v_k $ is a walk that is a subsequence $v_i e_{j+1} v_{i+1}, \dots e_j v_j$
of consecutive terms of $w$. We will use the notation $w_{(v_i, v_j)} $ to indicate the $(v_i, v_j)$-section of $w$. 
Let $\gamma $ be a walk. Assume that $v_1, v_2 \in \gamma $ then we use the following notation
\begin{equation}\label{distance between wertices on a path}
    \operatorname{dist}_\gamma (v_1,v_2) = \ell(\gamma_{(v_1, v_2)} ) 
\end{equation}
Let $w_1 $ be a $(v_0, v_k)$-walk and let $w_2$ be a $(v_k, v_n)$-walk. 
Then $w:=w_1 \oplus w_2 $ is the walk obtained by concatenating $w_1$ and $w_2$, i.e.~$w = v_0\ldots v_k\ldots v_n$.

Two vertices $u$ and $v$ of $\mathcal G$ are said to be connected if there is a $(u, v)$-path in $\mathcal G$.
If  every pair of vertices in a graph are connected then we say that the graph $\mathcal G $ is a \textit{connected graph}. 
A \textit{tree} is a connected undirected graph with no cycles. A tree is a \textit{spanning tree} of a graph $\mathcal G$ if it includes every vertex of $\mathcal G$ and is a subgraph of $\mathcal G$. 
A \textit{cycle graph} is a graph that consists only of a cycle.

\subsection{Linear biochemical systems} 
We consider a finite chemical system and denote its state space by $\Omega :=\{1, \dots, L\} $, where $L$ is the number of states in the system. Furthermore, we assume the reactions are linear. Hence we deal only with reactions of the form  
\[
(i) \rightarrow (j) 
\]
that take place at rate $A^i_j \geq 0$. 

The dynamics of the concentrations  $n(t)=(n_1(t),\ldots, n_L(t))^T \in \mathbb R_*^L$ is given by the following system of ODEs
\begin{equation} \label{eq:ODE}
\frac{d n }{dt} = A n, \quad n(0)=n_0,
\end{equation}
with $n_0 \in \mathbb R_*^L$. 
In this paper we assume that the matrix $A\in \mathbb{R}^{L \times L }$ is \textit{Markovian}, i.e.~
\begin{equation} \label{structure of A} 
	A_{ii}=-\sum_{k\in \Omega \backslash \{i\}}A_{ki} =-\sum_{k\in \Omega \backslash \{i\}}A^i_{k}  \ \ \text{for all }i\in \Omega \ \ ,\ \ A_{ki}=A^i_k \geq 0\ \text{for }k\neq i.
\end{equation}
The Markovianity property of $A \in \mathbb R^{L\times L}$ implies that $\textbf{e}_{L}^TA=0$, where $\textbf{e}_L$ is given by \eqref{def e}, hence the total number of elements in the system is conserved. Another way of interpreting the solution $n(t)$ to the system of ODEs is the probability distribution of a pure-jump Markov process with rates $A_{ij}$ and initial distribution $n_0$.

The system of ODEs \eqref{eq:ODE} induces a graph structure on the state space $\Omega$. More precisely, we define the set of edges $ E\subset \Omega^2 $ by
\begin{align*}
    E=\setdef{(i,j)\in\Omega^2 }{A^i_j \neq 0}.
\end{align*}
The graph $\mathcal G(A)= (\Omega , E)$ is directed. Notice that this graph includes also self-loops. If we associate to each directed edge $(i,j)$ the weight $A^i_j$, then $\mathcal G(A) $ is a weighted and directed graph. We will refer to the weighted graph (i.e.~including the values of the chemical rates) as biochemical network.

In this paper we assume that the graph $\mathcal G(A) $ is connected, i.e.~the undirected underlying graph obtained by replacing all directed edges of $\mathcal G(A)$ with undirected edges is a connected graph. 

Moreover,  we will assume that the matrix $A$ is \textit{ergodic}, i.e.~we assume that the underlying Markov process is irreducible, in other words, every two states can be connected by a walk in the graph associated to $A$, and hence the system of ODEs \eqref{eq:ODE} has a unique steady state $N=(N_1, \dots, N_L)^T \in \mathbb R_+^L$ with $\| N \|_1 =\sum_{i=1}^L N_i=1 $ such that $AN=0$. Furthermore, $n(t) \rightarrow N $ as $t \rightarrow \infty$ due to the Perron-Frobenius theorem, see e.g. \cite[Chapter VIII]{feller1991introduction}.
For notational reasons it is therefore useful to define the following set 
\[
\mathcal A(L):=\{  A \in \mathbb R^{L\times L }: A  \text{ is Markovian and  ergodic} \}.  
\]
To simplify the notation, we sometimes abbreviate $\mathcal A(L)$ with $  \mathcal A$. We now define the response functions in terms of the matrix $A$. 
Given $i,j \in \Omega $ we define the \textit{response function} $R_{ij}$ as 
\[
R_{ij} (t) := 
\langle e_j , e^{ t A  } e_i \rangle \text{ for } i, j \in  \Omega. 
\]
In other words $R_{ij} $ is the $j$-th component of the solution  $n(t)$ of the ODE \eqref{eq:ODE} with initial condition $n_0 = e_i$. Here, $e_i$ is the $i$-th vector of the canonical basis in $\mathbb R^L$. 
Therefore, as indicated in the introduction, the response function $R_{ij}(t)$ measures the response of the chemical system to a signal, which is just the injection of the substance $i$ at time zero, by measuring the changes in time of the concentration of the substance $j $.
We notice that this formula for the response function is in agreement with the expression of response functions corresponding to systems of ODEs derived in \cite[Lemma 2.9]{franco2023description} by considering the chemical system as a compartment with influxes in $i $ and outfluxes in $j$. 

The response function $R_{ij}$ has an interpretation in the theory of hidden Markov processes. Indeed, these are stochastic processes for which there is an underlying Markov process, but only a small number of states are observable. For instance, suppose that in the Markov process with probability densities solving \eqref{eq:ODE} only the states $1$ and $2$ are observable. Then, the response functions $R_{12},\, R_{21}$ will be observable, but not the other response functions $R_{ij}$ with $(i,j)\neq(1,2), \, (2,1)$. It is then natural to derive a condition involving only $R_{12},\, R_{21}$  in order to determine if the detailed balance condition holds. Hidden Markov processes have been extensively applied in mathematical biology (see for instance \cite{martinez2019inferring}). 

In the paper it will be, in some cases, useful to divide a biochemical systems in compartments and to analyse the concentrations of elements in each of the compartments.
To this end it is useful to introduce the following notation. 
Let $A \in \mathbb R^{ L \times L } $. 
Let $X=\{ \alpha_1, \dots, \alpha_n \} $ be a partition of $\{ 1, \dots, L\}$, i.e.~$\Omega = \cup_{i=1}^n \alpha_i $ with $\alpha_i \cap \alpha_j = \emptyset$ for every $i \neq j $. 
We define the following blocks of the matrix $A.$
Assume that $\beta, \gamma \in X $. 
Then we define the matrix $A_{\beta\gamma} \in \mathbb R^{|\gamma | \times| \beta |}$
as
\begin{equation} \label{A alpha beta}
( A_{\beta \gamma })_{ji} := A_{ji} \quad  \text{ for } \quad i \in  \gamma,\ j \in \beta. 
\end{equation}
The matrix $A_{\beta \gamma}$ encodes the interactions between the compartment $\beta $ and $\gamma$. 
Moreover, we define the matrix $E_\beta $ as 
\begin{equation}\label{A alpha alpha}
(E_\beta )_{ji} := A_{ji} \quad \text{ for } i, j \in \beta  , \text{ with } i \neq j \text{ and } (E_{\beta} )_{ii} := - \sum_{ j \in \beta} A_{ji}. 
\end{equation}
Therefore the matrix $E_\beta $ describes the evolution within the compartment $\beta $ ignoring the interaction with states outside $\beta$.
Finally, given a compartment $\beta $ we define the matrix 
\begin{equation} \label{A loss}
C_\beta = \operatorname{diag}\left( \left\{ \sum_{\alpha \in X \setminus \{\beta\}}\sum_{j\in \alpha} A_{ji} \right\}_{i\in\beta} \right) .
\end{equation}
The matrix $C_\beta $ describes the loss of elements from the compartment $\beta $ to any other compartment.

\subsection{Detailed balance} \label{sec:DB}
In this section we introduce the definition of detailed balance for the matrix $A$ and explain that, the fact that the matrix $A$ satisfies or not the detailed balance condition, is related to the topology of the graph $\mathcal G (A)$ induced by the matrix $A$. 
\begin{definition}[Detailed balance]\label{def:DB} 
We say that a matrix $A \in \mathcal A(L)$
satisfies the detailed balance condition if $ A_{ij}N_j=A_{ji}N_i $ for all $ i,j\in \Omega $, where $N \in \mathbb R_+^{L \times L}$ is the normalized steady state of $A$.
\end{definition}
We now give an equivalent condition of detailed balance in terms of the matrix $ B=S^{-1}AS$, where
\begin{align} \label{eq:S}
    S:=\operatorname{diag}\left( \{\sqrt{N_j}\}_{j\in \Omega} \right).
\end{align}
Here, $N$ is the unique (with $ \| N\|_1=1$) steady state of the ergodic matrix $A$. Let us mention that the matrix $B=S^{-1}AS$ has the same left and a right eigenvector $(\sqrt{N}_i)_{i=1}^L$ with respect to the eigenvalue zero.
\begin{proposition}\label{prop:DB and Symmetry}
 Assume $A \in \mathcal A(L) $ satisfies the detailed balance condition. Then the matrix $B=S^{-1} A S $ is symmetric and such that 
 $\ker (B) = \ker (B^T) = \operatorname{span}(v)$ where $v_i=\sqrt{N_i}$ for $ i \in \Omega $.
 
 Vice versa, assume that $B \in \mathbb R^{L \times L} $ is symmetric and such that 
  $\ker (B) = \ker (B^T) = \operatorname{span}(v)$. 
  Then the matrix $A:= S_v B S_v^{-1}$, with 
  \begin{equation} \label{eq:Sv}
   S_v := \operatorname{diag} (\{ v_i \}_{i\in \Omega}), 
  \end{equation}
  has the unique steady state $N \in \mathbb R_+^L $ with $N_i=(v_i)^2$ for $ i \in \Omega $. Moreover $A$ satisfies the detailed balance condition.
\end{proposition}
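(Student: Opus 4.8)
The plan is to reduce both implications to the elementary entrywise relation between $A$ and $B=S^{-1}AS$. Since $S=\operatorname{diag}(\{\sqrt{N_j}\}_{j\in\Omega})$ is diagonal and positive, one has $B_{ij}=\sqrt{N_j/N_i}\,A_{ij}$ for all $i,j\in\Omega$, and moreover $Sv=N$ and $S^{-1}N=v$ where $v_i=\sqrt{N_i}$. These identities are the only computations involved; everything else is bookkeeping about kernels and about the Markovian structure.

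For the first implication, suppose $A\in\mathcal A(L)$ satisfies detailed balance. From $B_{ij}-B_{ji}=\frac{1}{\sqrt{N_iN_j}}\left(A_{ij}N_j-A_{ji}N_i\right)$ and the identity $A_{ij}N_j=A_{ji}N_i$ we get $B=B^T$. For the kernel, $Bv=S^{-1}AS\,v=S^{-1}(AN)=0$, and $v\neq 0$ since $N\in\R_+^L$, so $\operatorname{span}(v)\subseteq\ker(B)$. Ergodicity of $A$ gives, via Perron--Frobenius, that the steady state is unique up to scalar multiples, i.e.~$\dim\ker(A)=1$; since $B$ is similar to $A$, also $\dim\ker(B)=1$, hence $\ker(B)=\operatorname{span}(v)$, and symmetry gives $\ker(B)=\ker(B^T)$.

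For the converse, put $A:=S_vBS_v^{-1}$ with $S_v=\operatorname{diag}(\{v_i\}_{i\in\Omega})$, so $A$ is similar to $B$ and $\dim\ker(A)=\dim\ker(B)=1$. I would first check that $A\in\mathcal A(L)$: the column sums vanish because $\textbf{e}_L^TA=v^TBS_v^{-1}=(Bv)^TS_v^{-1}=0$ (using $B=B^T$ and $Bv=0$), while the off-diagonal entries $A_{ki}=(v_k/v_i)B_{ki}$ are nonnegative provided the off-diagonal entries of $B$ are nonnegative (an assumption implicit in the statement, required for $A$ to be Markovian) and the associated graph is connected, so that $A$ is also ergodic. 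Next, since $S_v^{-1}N=v$ for $N_i=v_i^2$, we get $AN=S_vBv=0$, so $N$ is a steady state; it is positive, and after normalizing $v$ so that $\|v\|_2=1$ it satisfies $\|N\|_1=1$, and by $\dim\ker(A)=1$ it is the unique such steady state (hence it coincides with the Perron--Frobenius steady state). Finally, detailed balance holds because $A_{ij}N_j=(v_i/v_j)B_{ij}\,v_j^2=v_iv_jB_{ij}=v_iv_jB_{ji}=A_{ji}N_i$.

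The only step that is not purely formal is the simplicity of the zero eigenvalue: in both directions it is what upgrades ``$v$ lies in the kernel'' to ``the kernel equals $\operatorname{span}(v)$'' and secures uniqueness of the steady state, and it rests on ergodicity together with Perron--Frobenius. A secondary point in the converse direction is verifying that the constructed $A$ genuinely lies in $\mathcal A(L)$, which is where the sign and connectivity properties of $B$ enter; the detailed balance identity itself, however, is a bare consequence of $B=B^T$ and of the conjugation by $S_v$, and needs nothing further.
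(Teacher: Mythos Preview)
Your proof is correct and follows essentially the same route as the paper's: both directions boil down to the entrywise relation $B_{ij}=\sqrt{N_j/N_i}\,A_{ij}$ and the observation that $A$ and $B$ are similar, so their kernels have the same dimension. The paper's proof is much terser---it dispatches the first direction in a single sentence and for the converse only sketches the uniqueness of the steady state---whereas you spell out the column-sum computation $\textbf{e}_L^TA=0$, the detailed balance identity $A_{ij}N_j=v_iv_jB_{ij}=A_{ji}N_i$, and correctly flag that sign and connectivity properties of $B$ are implicitly needed for $A\in\mathcal A(L)$ (a point the paper's proof also glosses over).
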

\begin{proof}
    The first statement follows by the definition of $B$ and by the fact that $A$ satisfies the detailed balance condition.
    
    Assume instead $B= B^T $ and $\ker(B)=\ker (B^T) = \operatorname{span} (v) $. Then by definition the matrix $A$ is such that $AN=0$ for $N \in \mathbb R^L $ with $N_i = (v_i)^2$.
    Moreover assume that there exists a vector $w$ such that $A w =0$. Then $S^{-1} w $ would be in $\ker(B) =\operatorname{span} (v) $. Therefore, up to a constant, $w = N $.  
    Hence $S=S_v$ and $A$ satisfies the detailed balance condition. 
\end{proof}

Since we are considering linear biochemical systems, as explained in Section \ref{sec:linear biochemical systems}, we can associate to every system a graph $\mathcal G $ on the set of the states $\Omega $. 
This is very useful since from the geometry of the graphs we can, in some cases, infer that the system satisfies the detailed balance property. 
In the case of non-linear system the correspondence between ODE systems and graphs is more involved and the analysis of the detailed balance property in the non-linear systems is therefore more challenging (see \cite{dal2023geometry}). 

\begin{remark} \label{remark:DB for L=2}
Let $L=2 $. 
Consider a matrix $A \in \mathcal A (2)$. 
If $A^1_2 >0$ and $A^2_1 >0 $, then $A$ satisfies the detailed balance condition. 
\end{remark}

We provide now a sufficient condition that guarantees that \eqref{eq:ODE} satisfies the detailed balance condition. 
This result is a particular case of Proposition 14.3.3 in \cite{feinberg2019foundations} and we refer there for the proof.

\begin{proposition}[Sufficient geometrical condition for detailed balance]\label{prop:DB on trees}
 Assume that $A \in \mathcal A$. 
 Assume that the graph $\mathcal G(A)$ is a connected directed symmetric graph.
Let $\mathcal G$ be the corresponding (undirected) graph. If $\mathcal G$ is a tree, then $A$ satisfies the detailed balance condition. 
\end{proposition}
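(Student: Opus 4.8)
The plan is to exploit the structure of a tree: there are no cycles, so detailed balance can be built up edge-by-edge along the unique paths of the tree. First I would recall (or set up) the standard fact that for an ergodic Markovian matrix $A$, detailed balance with respect to the steady state $N$ is equivalent to the Kolmogorov cycle condition: for every cycle $v_0 e_1 v_1 \ldots e_k v_0$ in $\mathcal G(A)$, the product of the forward rates equals the product of the reverse rates along the cycle. Since $\mathcal G(A)$ is a symmetric directed graph whose underlying undirected graph $\mathcal G$ is a tree, $\mathcal G$ has no cycles at all, so the cycle condition is vacuously satisfied; hence $A$ satisfies detailed balance. This is essentially the content of Feinberg's Proposition 14.3.3, so the main work is just to make the reduction to the cycle condition explicit in the present notation.

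Alternatively — and this is the route I would actually write out, since it is self-contained and fits the linear-algebraic framework already introduced — I would construct the steady state $N$ directly and verify the detailed balance relations by hand. Pick any vertex $r \in \Omega$ as a root and set $N_r$ to an arbitrary positive value. For any other vertex $i$, let $r = u_0, u_1, \ldots, u_m = i$ be the unique path in the tree from $r$ to $i$, and define
\[
N_i := N_r \prod_{\ell=1}^{m} \frac{A^{u_{\ell-1}}_{u_\ell}}{A^{u_\ell}_{u_{\ell-1}}}.
\]
Because $\mathcal G(A)$ is symmetric, each rate $A^{u_\ell}_{u_{\ell-1}}$ appearing in a denominator is strictly positive, so $N_i \in \R_+$ is well defined; and because the tree path between any two vertices is unique, the definition is consistent (there is no cycle around which the product could fail to close up). By construction, for every edge $(i,j) \in E$ of the tree we then have $A^i_j N_j = A^j_i N_i$, i.e.~$A_{ji}N_i = A_{ij}N_j$ (for a tree every edge of $E$ — in its undirected form — is a tree edge, since adding any further edge would create a cycle). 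For the remaining pairs $i,j$ with $A^i_j = A^j_i = 0$ the detailed balance identity $A_{ij}N_j = A_{ji}N_i = 0$ is trivial.

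It remains to check that this $N$ is actually the steady state, i.e.~that $AN = 0$. Using the detailed balance relations just established, for each $i \in \Omega$,
\[
(AN)_i = \sum_{j \in \Omega} A_{ij} N_j = A_{ii}N_i + \sum_{j \neq i} A_{ij}N_j = -\Big(\sum_{k \neq i} A_{ki}\Big) N_i + \sum_{j \neq i} A_{ji} N_i = 0,
\]
where in the last step I used $A_{ij}N_j = A_{ji}N_i$ term by term and the Markovian structure \eqref{structure of A}. Since $A$ is ergodic it has a one-dimensional kernel, so after normalizing $\|N\|_1 = 1$ this $N$ is \emph{the} steady state, and the relations $A_{ij}N_j = A_{ji}N_i$ are precisely the detailed balance condition of Definition \ref{def:DB}.

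\textbf{Expected main obstacle.} The only genuinely substantive point is the well-definedness/consistency of $N_i$: one must argue that the product defining $N_i$ is independent of any choices, which is exactly where acyclicity of the tree enters. Once that is granted, everything else is bookkeeping with \eqref{structure of A}. A minor care point is to make sure that "every edge of the connected symmetric graph whose underlying graph is a tree is a tree edge" is stated cleanly, so that no edge of $E$ is left unaccounted for when claiming detailed balance holds for \emph{all} pairs $i,j$.
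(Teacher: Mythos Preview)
Your proof is correct. The paper itself does not give an argument for this proposition: it simply states that the result is a particular case of Proposition~14.3.3 in Feinberg's book and refers there for the proof. So there is nothing to compare against beyond noting that your self-contained construction---fixing a root, defining $N_i$ via the product of rate ratios along the unique tree path, verifying $A_{ij}N_j=A_{ji}N_i$ on every edge, and then checking $AN=0$ from detailed balance plus the Markovian column-sum condition---is the standard direct argument and is entirely sound. The Kolmogorov cycle-condition route you sketch first is equally valid and is morally what underlies the Feinberg reference; your explicit construction has the advantage of staying within the linear-algebra framework the paper has already set up.
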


\bigskip 

Consider a matrix $A \in \mathcal A$ that defines a connected and reversible graph $\mathcal G $. 
Consider any spanning tree $\mathcal T$ of $\mathcal G $. 
Then, by Proposition \ref{prop:DB on trees}, we know that the matrix $A_\mathcal T \in \mathbb R^{L \times L}$, corresponding to the tree, satisfies the detailed balance condition. Therefore there exists a vector $\mu \in \mathbb R_+^L $ with $\sum_{i=1}^L \mu_i =1$ such that $B=S^{-1} A_{\mathcal T} S $ is symmetric, where $S$ is given by \eqref{eq:S} with respect to $\mu$. 
Notice that the vector $\mu $ is not necessary the steady state of $A$, unless, as we will see in the next corollary, $A$ satisfies the detailed balance condition. 
In the following text we refer to the vector $\mu $ as the \textit{energy vector associated with the spanning tree} $\mathcal T$. This terminology refers to the fact that $\mu$ can be written as $(e^{-E_1}, \ldots, e^{-E_L})/Z$, where $E_i$ is the free energy of the state $i$ and $Z$ a normalization constant (see \cite{zia2007probability}).

We now present a useful corollary of Proposition \ref{prop:DB on trees}. 

\begin{corollary}\label{cor:DB implies invariant measures of subtree are the same}
Assume that  $A\in \mathcal A$ and assume that the corresponding graph $\mathcal G=\mathcal G(A)$ is a connected directed symmetric graph. 
Let $N \in \mathbb R_+^L $ be the steady state of $A$. 
Then the three following conditions are equivalent:
\begin{enumerate}[(a)]
    \item $A$ satisfies the detailed balance condition.
    \item Consider any spanning tree $\mathcal T$ of $\mathcal G $. Let $\mu $ be the energy vector associated with $\mathcal T$ and consider the matrix $S:=\operatorname{diag}\left( \{\sqrt{\mu_j}\}_{j\in \Omega} \right)$. Then $B=S^{-1} A S $ is symmetric. 
    \item  The energy vector associated with any spanning tree of $\mathcal G $ is equal to the steady state of $A$. 
\end{enumerate}
\end{corollary}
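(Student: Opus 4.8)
The plan is to prove the cycle of implications $(a)\Rightarrow(c)\Rightarrow(b)\Rightarrow(a)$, using Proposition \ref{prop:DB on trees} and Proposition \ref{prop:DB and Symmetry} as the main tools. The starting observation is that, since $\mathcal G$ is connected, directed, and symmetric, any spanning tree $\mathcal T$ of the underlying undirected graph lifts to a connected directed symmetric subgraph, so that the associated matrix $A_{\mathcal T}$ lies in $\mathcal A$ and, by Proposition \ref{prop:DB on trees}, satisfies the detailed balance condition. Hence the energy vector $\mu$ associated with $\mathcal T$ is well defined: it is (up to normalization) the steady state of $A_{\mathcal T}$, and $S^{-1}A_{\mathcal T}S$ is symmetric with $S = \operatorname{diag}(\{\sqrt{\mu_j}\}_j)$, by Proposition \ref{prop:DB and Symmetry}.

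For $(a)\Rightarrow(c)$: suppose $A$ satisfies detailed balance, so $A_{ij}N_j = A_{ji}N_i$ for all $i,j$. Pick any spanning tree $\mathcal T$ with energy vector $\mu$. The detailed balance relation for $A$, restricted to the edges of $\mathcal T$, reads $A^i_j N_j = A^j_i N_i$ for every edge $(i,j)\in\mathcal T$; but these are exactly the detailed balance equations for $A_{\mathcal T}$, since $A_{\mathcal T}$ and $A$ have the same off-diagonal entries along the edges of $\mathcal T$. A detailed balance (reversible) measure on a tree is uniquely determined up to a multiplicative constant by the ratios $N_i/N_j$ along the edges — one fixes $N_{i_0}$ at the root and propagates outward, the tree structure guaranteeing consistency and the absence of cycles guaranteeing no over-determination. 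Since $\mu$ is the normalized reversible measure for $A_{\mathcal T}$ and $N$ restricted to the same ratios is also reversible for $A_{\mathcal T}$ with $\|N\|_1 = 1$, we get $\mu = N$. This holds for every spanning tree, which is $(c)$.

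For $(c)\Rightarrow(b)$: if the energy vector $\mu$ of a given spanning tree $\mathcal T$ equals the steady state $N$ of $A$, then $S = \operatorname{diag}(\{\sqrt{N_j}\}_j)$ and $B = S^{-1}AS$ has $v=(\sqrt{N_i})_i$ as both a left and right null vector (as noted in the text after \eqref{eq:S}); I then need symmetry of $B$. Here I would invoke that $\mu$ being the energy vector of $\mathcal T$ means $S^{-1}A_{\mathcal T}S$ is symmetric, i.e. $A^i_j\sqrt{\mu_j/\mu_i}$ is a symmetric array along tree edges; combined with $\mu=N$, detailed balance holds along all tree edges of $A$. The remaining point — upgrading detailed balance on the edges of one spanning tree to detailed balance on all edges of $\mathcal G$ — is the delicate step; I expect this to be handled via the cycle structure: for any non-tree edge $(i,j)$, there is a unique cycle in $\mathcal T\cup\{(i,j)\}$, and one shows the Kolmogorov cycle condition forces $A_{ij}N_j = A_{ji}N_i$. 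This is exactly where the symmetry of the full graph $\mathcal G$ and the ergodicity of $A$ (so that $N$ is genuinely the unique steady state, $AN=0$) must be used; the argument is essentially that the steady-state equation $AN=0$ plus balance on a spanning tree forces balance everywhere. Once $B$ is symmetric, $(b)$ follows, and $(b)\Rightarrow(a)$ is then immediate from Proposition \ref{prop:DB and Symmetry}, since a symmetric $B$ with $\ker B = \operatorname{span}(v)$ yields a detailed-balanced $A$ whose steady state has components $v_i^2 = \mu_i$; but $(b)$ presupposes $B = S^{-1}AS$ is the original matrix, so $A$ itself satisfies detailed balance and $N = \mu$.

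The main obstacle I anticipate is the implication "detailed balance on a spanning tree plus $AN=0$ implies detailed balance on all of $\mathcal G$" — i.e. closing the loop in $(c)\Rightarrow(b)\Rightarrow(a)$. The clean way is probably: write the steady-state flux balance at each vertex, use that along tree edges the net flux $A_{ij}N_j - A_{ji}N_i$ vanishes, and argue inductively on leaves of $\mathcal T$ that all net fluxes across all edges vanish. Removing a leaf of $\mathcal T$ (and the corresponding reverse edge pair of $\mathcal G$) reduces to a smaller graph, but one must be careful that non-tree edges incident to that leaf also carry zero net flux — this needs the vertex balance equation at the leaf together with the already-established zero net flux on its unique tree edge, forcing the sum of net fluxes on the non-tree edges at that leaf to vanish, and then an additional argument (or a more global linear-algebra statement about the flux being a circulation supported on cycles, killed by the tree condition) to conclude each one is individually zero.
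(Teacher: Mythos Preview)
Your cycle $(a)\Rightarrow(c)\Rightarrow(b)\Rightarrow(a)$ and your argument for $(a)\Rightarrow(c)$ match the paper. The gap is in $(c)\Rightarrow(b)$: you try to use condition $(c)$ for a \emph{single} spanning tree $\mathcal T$ and then upgrade detailed balance from the edges of $\mathcal T$ to all edges of $\mathcal G$ via a flux/leaf-removal argument. That upgrade fails in general. For instance, take $L=4$ with $\mathcal G$ the complete graph and $\mathcal T$ the star centered at vertex $1$. Detailed balance on the three tree edges together with $AN=0$ gives, at vertex $2$, only the aggregate identity $A_{23}N_3 + A_{24}N_4 = A_{32}N_2 + A_{42}N_2$, which does not force the two non-tree fluxes to vanish individually. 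Your own final paragraph correctly senses this obstruction but does not resolve it.

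The paper's proof avoids this entirely by exploiting the full strength of $(c)$: the energy vector of \emph{every} spanning tree equals $N$. Since every edge of a connected graph lies in some spanning tree, pick for each edge $(i,j)$ a spanning tree $\mathcal T'$ containing it; then $(c)$ applied to $\mathcal T'$ gives $(A_{\mathcal T'})_{ij}N_j = (A_{\mathcal T'})_{ji}N_i$, which is $A_{ij}N_j = A_{ji}N_i$. Ranging over all edges yields detailed balance for $A$, and Proposition~\ref{prop:DB and Symmetry} then gives $(b)$. No Kolmogorov cycle condition or inductive flux argument is needed.
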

\begin{proof}
We prove that $(a)$ implies $(c)$. 
By definition of detailed balance $(a)$ implies that
$A_{ij} N_j =A_{ji} N_i$ for every $i,j \in \Omega$. 
On the other hand, Proposition \ref{prop:DB on trees} implies that for any spanning tree $\mathcal T$ of $\mathcal G $ we have that 
${(A_{\mathcal T})}_{ij } \mu_j = {(A_{\mathcal T})}_{ji } \mu_i$ for every $i, j \in \Omega$. Notice that for every $(i, j ) \in \mathcal T $, we have that $A_{ij}={(A_\mathcal T)}_{ij}$ and $A_{ji}={(A_\mathcal T)}_{ji}$. Hence for every $i,j \in \Omega $ we have that $ N_j/N_i = \mu_j / \mu_i $, hence due to the normalization of $\mu $ and $N$ it follows that $N=\mu $. This implies $(c)$. 

We now prove that $(c) $ implies $(b)$. 
Therefore assume that $(c) $ holds.
Then consider a spanning tree $\mathcal T$.  For every $(i,j) \in \mathcal T $ we have that $A_{ij} N_j  = {(A_{\mathcal T})}_{ij} N_j ={(A_{\mathcal T})}_{ji} N_i = A_{ij} N_j$. Hence the detailed balance condition holds on the edges of $\mathcal G$ that belong also to $\mathcal T$. Repeating this argument for all the spanning trees of $\mathcal G $ we deduce that $A$ satisfies the detailed balance condition. By Proposition \ref{prop:DB and Symmetry} statement $(b)$ follows. 

Finally, we prove that $(b) $ implies $(a)$. 
From $(b)$ we have that for every spanning tree $\mathcal T $ it holds that $ A_{ji} {\mu_{\mathcal T}}_i  = A_{ij} {\mu_{\mathcal T}} _j$ for every $i, j \in \Omega$, where $\mu_{\mathcal T}$ is the energy vector associated with $\mathcal T $. This implies that $\mu_{ \mathcal T}= N $ and $A$ satisfies detailed balance.
\end{proof}

\subsection{Lack of detailed balance in biochemical systems} \label{sec:lack of DB}
In this section we explain why it is relevant to study biochemical systems for which the detailed balance condition does not hold. 
We now recall that the reversibility in time of the quantum mechanical equations imply that in all the physical processes the detailed balance condition holds (see \cite{boyd1974detailed}).
Therefore, in principle, one might think that the detailed balance condition holds for all biochemical systems.
However, most of the biochemical systems are open systems, where the constant influxes/outfluxes (for instance of ATP/ADP) keep the system out of equilibrium.

If these outfluxes/influxes of molecules are the dominant factors determining the concentrations of some of the molecules in the system, it is possible to approximate the evolution of the concentration of the chemical substances in the systems by a reduced biochemical network. In this reduced network the evolution of the substances whose concentrations are determined by external fluxes can be ignored and the concentrations of the remaining substances are determined via a biochemical system that does not have detailed balance. 

We can illustrate this issue with an example.
Suppose that one of the reactions in the biochemical system is 
\begin{equation} \label{eq:non linear example}
A + B \leftrightarrows C + D. 
\end{equation}
We assume that the reaction takes place at constant temperature and constant pressure. We will denote the chemical reaction rates for the direct and inverse reaction respectively as $K_+$ and $K_-$. 
Then the detailed balance condition for the reaction \eqref{eq:non linear example} yields 
\[
K_- N_A N_B = K_+ N_C N_D 
\]
where $N_A, N_B, N_C, N_D $ are the equilibrium concentrations at given pressure and temperature. 

On the other hand, suppose that the system is open, in particular assume that there is a an influx and outflux respectively of substances $A$ and $C$ in the system. 
This results in having an approximately constant concentration of substances $n_A $, $n_C$ in the system. Where in general we can have $N_A \neq n_A $, $N_C \neq n_C$. 
In particular, also the stationary values would differ. Namely, the stationary values of $B, C$ in the open system are $\overline N_C \neq N_C $, $\overline N_D \neq N_D $. 
Moreover, in general, we would have that 
\begin{equation} \label{almost lack of DB}
K_- n_A \overline N_B \neq  K_+ n_C \overline N_D. 
\end{equation}

Due to the fact that the concentrations $n_A$ and $n_C $ are constant we can replace the reaction \eqref{eq:non linear example} by the effective reaction
\begin{equation} \label{eq:linear example}
B \leftrightarrows  D. 
\end{equation}
The effective reaction rates are $\overline K_- =K_- n_A  $ and $\overline K_+ = K_+ n_C$. 
Then \eqref{almost lack of DB} implies that, in general, we have 
\[
\overline K_- \overline N_B \neq \overline K_+  \overline N_D. 
\]
Hence the effective reaction \eqref{eq:linear example} does not satisfy the detailed balance condition. 

This argument explains why it is reasonable, in certain situations, to consider systems that do not satisfy the detailed balance condition. The lack of detailed balance is linked to the presence of fluxes injecting and removing substances to the system, hence keeping the system out of equilibrium.

\section{Pathwise detailed balance} \label{sec:p DB}
In this section we introduce the definition of pathwise detailed balance and of pathwise symmetry. In Section \ref{sec:ps and pdb} we study the relation between these two properties. 
Moreover we show that a matrix $A$ satisfies the pathwise detailed balance property if and only if \eqref{eq:intro p(DB)} holds. 
In Section \ref{sec:DB and p DB} we also prove that, if a matrix $A$ satisfies the detailed balance condition, then it satisfies the pathwise detailed balance condition. 
Moreover, we show via an example, Example \ref{exam:L=4 strongly connected}, that the pathwise detailed balance condition does not imply the detailed balance condition. 

\subsection{Pathwise detailed balance and pathwise symmetry} \label{sec:ps and pdb}
We start this section by giving the definition of pathwise detailed balance. 
\begin{definition}[Pathwise detailed balance] \label{def:PDB}
We say that a matrix $A \in \mathcal A(L)$ satisfies the pathwise detailed balance condition with respect to two states $i,\, j \in \Omega$, $i\neq j$ such that
\begin{equation} \label{eq:PDB}
\langle e_i , A^n e_j \rangle = \frac{N_i}{N_j} \langle e_j , A^n e_i \rangle \quad \text{ for every } n \in \mathbb N,
\end{equation}
where $N$ is the steady state of the matrix $A$.
\end{definition}
If a matrix $A$ satisfies the pathwise detailed balance condition with respect to the states $(i,j)$, we can assume without loss of generality that $(i,j)=(1,2)$. Furthermore, when we need to specify that a matrix $A$ satisfies the pathwise detailed balance condition with respect to two specific states, say $1,2$, we will say that it satisfies the $(12)$-pathwise detailed balance. 
As we did for the detailed balance condition, we want to relate the pathwise detailed balance condition of $A$ with the symmetry properties of the rescaled matrix $ S^{-1} A S $. To this end we introduce the following definition. 

\begin{definition} [Pathwise symmetry]\label{def:Ps}
We say that a matrix $B \in \R^{L\times L}$ is pathwise symmetric with respect to two vertices $i,\, j \in \Omega $, $i\neq j$ if
\begin{equation} \label{eq:PS}
\langle e_i , B^n e_j \rangle =\langle e_j , B^n e_i \rangle \quad \text{ for every } n \in \mathbb N.
\end{equation}
\end{definition}
As before, we can assume without loss of generality that $(i,j)=(1,2)$. Furthermore, when we need to specify that a matrix $B$ satisfies pathwise symmetry with respect to two specific states $1,2 \in \Omega$ we say that it satisfies the $(12)$-pathwise symmetry. 

In the following we study the equivalence of pathwise detailed balance and pathwise symmetry. In particular, we prove that, if a matrix $A$ satisfies the pathwise detailed balance condition, then the matrix $B= S^{-1} A S $, with $S$ defined by \eqref{eq:S} is pathwise symmetric. Furthermore, we provide additional conditions on a pathwise symmetric matrix $B$ that guarantee that there exists a matrix $A$ satisfying the pathwise detailed balance condition and $B=S^{-1} A S $. 
\begin{proposition} \label{prop:pdb and ps}
    Assume that $A \in \mathcal A(L) $ satisfies the $(12)$-pathwise detailed balance condition. Then the matrix $B:= S^{-1} A S $, with $S$ defined as in \eqref{eq:S} with respect to $N$, the steady state of $A$, is such that $\ker (B)=\ker(B^T )= \operatorname{span}(v)$ with $v_i = \sqrt{N_i} $ for every $i \in \{ 1, \dots, L \} $. Moreover $B$ is $(12)$-pathwise symmetric. 
    
    Vice-versa assume that $B \in \R^{L\times L}$ is ergodic and is such that $\ker (B)=\ker(B^T )= \operatorname{span}(v)$ and assume $B$ is $(12)$-pathwise symmetric.
    Then $A:= S_v B S_v^{-1} $, where $S_v$ is given by \eqref{eq:Sv}, is such that $A \in \mathcal A $ and $A$ satisfies the $(12)$-pathwise detailed balance condition. 
\end{proposition}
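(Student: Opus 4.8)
The plan is to prove the two directions separately, in each case translating the pathwise detailed balance condition \eqref{eq:PDB} for $A$ into the pathwise symmetry condition \eqref{eq:PS} for $B=S^{-1}AS$ via the conjugation relation, and handling the kernel statement by a direct computation. First I would establish the algebraic dictionary: since $B=S^{-1}AS$ we have $B^n = S^{-1}A^n S$ for every $n\in\N$, and because $S=\operatorname{diag}(\{\sqrt{N_j}\}_j)$ is symmetric, $\langle e_i, B^n e_j\rangle = \langle Se_i, A^n S e_j\rangle = \sqrt{N_i}\sqrt{N_j}\,\langle e_i, A^n e_j\rangle$. Hence $\langle e_i, B^n e_j\rangle = \langle e_j, B^n e_i\rangle$ is equivalent to $\sqrt{N_iN_j}\,\langle e_i, A^n e_j\rangle = \sqrt{N_iN_j}\,\langle e_j, A^n e_i\rangle$, which — after dividing by $N_j$ and using $N_i,N_j>0$ — is exactly $\langle e_i, A^n e_j\rangle = \tfrac{N_i}{N_j}\langle e_j, A^n e_i\rangle$, i.e.\ \eqref{eq:PDB} with $(i,j)=(1,2)$. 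So the pathwise symmetry of $B$ and the pathwise detailed balance of $A$ are tautologically equivalent once the similarity $B=S_vBS_v^{-1}\leftrightarrow A$ with the correct diagonal matrix is in place; the real content is identifying the right diagonal matrix, which is where the kernel conditions enter.

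For the first (forward) direction: assume $A\in\mathcal A(L)$ satisfies $(12)$-pathwise detailed balance and set $S$, $B$ as in the statement with respect to the steady state $N$. Since $AN=0$ and $N_i=(v_i)^2$ with $v_i=\sqrt{N_i}$, we get $Bv = S^{-1}AS v = S^{-1}A N = 0$ (as $Sv=N$ coordinatewise, up to the obvious identification), so $v\in\ker(B)$. Because $A$ is ergodic, $0$ is a simple eigenvalue of $A$, hence of the similar matrix $B$, so $\ker(B)=\operatorname{span}(v)$. For $\ker(B^T)$: Markovianity gives $\mathbf e_L^T A = 0$, so $B^T v = S A^T S^{-1} v$; since $S^{-1}v = \mathbf e_L$ coordinatewise and $A^T\mathbf e_L = 0$, we get $B^T v = 0$, and again simplicity of the eigenvalue $0$ for $B^T$ forces $\ker(B^T)=\operatorname{span}(v)$. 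The pathwise symmetry of $B$ then follows from the dictionary above. This direction is essentially identical in structure to the first half of Proposition \ref{prop:DB and Symmetry}, just with $(12)$-pathwise symmetry replacing full symmetry.

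For the converse: given $B\in\R^{L\times L}$ ergodic with $\ker(B)=\ker(B^T)=\operatorname{span}(v)$ and $B$ $(12)$-pathwise symmetric, define $A:=S_vBS_v^{-1}$ with $S_v=\operatorname{diag}(\{v_i\}_i)$. One must check (i) $A$ is Markovian and ergodic, i.e.\ $A\in\mathcal A$, and (ii) $A$ satisfies $(12)$-pathwise detailed balance with respect to its steady state. For (i), ergodicity of $A$ is immediate from similarity with $B$ together with connectedness, and the Markovian structure \eqref{structure of A} needs to be extracted: $A^i_k = A_{ki}=(v_k/v_i)B_{ki}\ge 0$ for $k\neq i$ requires $B_{ki}\ge 0$ off-diagonal, and $\mathbf e_L^T A = 0$ requires $B^Tv=0$ — which is exactly the hypothesis $v\in\ker(B^T)$; the column-sum condition $A_{ii}=-\sum_{k\neq i}A_{ki}$ then follows. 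Here one should be slightly careful: the hypotheses as stated may need $B$ to additionally have nonnegative off-diagonal entries for $A$ to be a genuine rate matrix — I would flag this and check whether "ergodic" in the paper's convention for $B$ already encodes that, or add it as a standing assumption. For (ii), the steady state of $A$ is $N$ with $N_i=(v_i)^2$ (since $AN = S_vBS_v^{-1}N = S_vBv = 0$ and simplicity of the zero eigenvalue pins it down up to normalization; $\sum N_i = \sum v_i^2$, which we normalize to $1$), and then $S_v = S$ for this $N$, so the dictionary gives $(12)$-pathwise detailed balance of $A$ from $(12)$-pathwise symmetry of $B$.

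The main obstacle I anticipate is the bookkeeping in (i) of the converse — verifying that $A=S_vBS_v^{-1}$ genuinely lands in $\mathcal A(L)$, in particular that the sign/structure conditions \eqref{structure of A} hold, since this is where the kernel hypothesis on $B^T$ does real work and where an implicit nonnegativity-of-$B$ assumption may be hiding. Everything else reduces to the conjugation identity $B^n=S^{-1}A^nS$ and the simplicity of the zero eigenvalue for ergodic matrices, both of which are routine.
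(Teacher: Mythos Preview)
Your approach is essentially identical to the paper's: both directions proceed by the conjugation identity $B^n=S^{-1}A^nS$, the kernel statements are handled by pushing $AN=0$ and $\mathbf e_L^TA=0$ through $S$, and the converse uses $v^TB=0$ to recover Markovianity of $A=S_vBS_v^{-1}$.

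One slip to correct in your dictionary step: from $B^n=S^{-1}A^nS$ you get $\langle e_i,B^ne_j\rangle=\langle S^{-1}e_i,A^nSe_j\rangle=\tfrac{\sqrt{N_j}}{\sqrt{N_i}}\langle e_i,A^ne_j\rangle$, not $\langle Se_i,A^nSe_j\rangle=\sqrt{N_iN_j}\langle e_i,A^ne_j\rangle$; with your (incorrect) factor the equivalence would collapse to $\langle e_i,A^ne_j\rangle=\langle e_j,A^ne_i\rangle$, which is not \eqref{eq:PDB}. With the correct factor the equivalence $\langle e_i,B^ne_j\rangle=\langle e_j,B^ne_i\rangle \Leftrightarrow \langle e_i,A^ne_j\rangle=\tfrac{N_i}{N_j}\langle e_j,A^ne_i\rangle$ drops out immediately, as you intend. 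Your observation that the converse tacitly needs $B_{ki}\ge 0$ off-diagonal for $A$ to land in $\mathcal A(L)$ is valid; the paper's proof also does not check this explicitly.
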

\begin{proof}
    We firstly assume that $A \in \mathcal A $ satisfies the pathwise detailed balance condition. 
    Then by definition of $B$ we have that $AN =0$ implies $B S^{-1 } N =0$ as well as $(S^{-1} N)^T B =0 $. Hence $S^{-1} N = v \in \ker (B) \cap \ker (B^T)$. 
    Now assume that $w \neq v $ is such that $B w =0$. We obtain that $A Sw =0$ which implies that $Sw = c N $ for some $c>0$. Hence $ w \in \operatorname{span} (v)$. Moreover, we can argue similarly when $B^T w =0$. Thus, we deduce that $\ker(B)=\ker(B^T) = \operatorname{span}(v)$. 
    Finally $B$ is pathwise symmetric, indeed for every $n \in \mathbb N$
    \begin{align*}
        \langle e_2 ,B^n e_1 \rangle= \langle e_1 , (S^{-1} A S )^n e_2 \rangle=
        \frac{\sqrt{N_2}}{\sqrt{N_1} } \langle e_1 , A^n  e_2 \rangle  =\frac{\sqrt{N_1}}{\sqrt{N_2} } \langle e_2 , A^n e_1 \rangle= \langle e_2 , (S^{-1} A S )^n e_1 \rangle =\langle e_2 ,B^n e_1 \rangle. 
   \end{align*} 
   Assume now that  $B$ is such that $\ker (B)=\ker(B^T )= \operatorname{span}(v)$ and is pathwise symmetric.
   Then, by the definition of $A$ and the properties of $B$ we deduce that $ \ker(A) = \operatorname{span} (S_v v)$.
   Moreover, since $v^T B=0$ we have that $e^T A= e^T S_v B S_v^{-1} = v^T B S_v^{-1} =0 $. Hence $A\in \mathcal A $ and the steady state of $A$ is the vector $N$ such that $N_i = v_i^2$ for every $i = 1 \dots L $. 
   Then the path symmetry of $B$ implies that for every $n \in \mathbb N$ we have 
   \begin{align*}
        \frac{v_2}{v_1} \langle e_1 , A^n  e_2 \rangle &= \langle e_1 , (S_v^{-1} A S_v )^n e_2 \rangle = \langle e_1 ,B^n e_2 \rangle=\langle e_2 ,B^n e_1 \rangle= \langle e_2 , (S_v^{-1} A S_v )^n e_1 \rangle \\
        & =\frac{v_1}{v_2} \langle e_2 , A^n e_1 \rangle. 
   \end{align*} 
    Hence $A$ satisfies the pathwise detailed balance condition.  
\end{proof}

We now give an interpretation of the detailed balance condition in terms of graphs. 
The following lemma will allow us to interpret the pathwise detailed balance property in terms of the rates of the walks of the graph $\mathcal G =\mathcal G(A)$ induced by $A$. In particular, we can associate to any walk $w$ a number via the following map $\textbf{a}_A: W \rightarrow \mathbb R$ such that 
\begin{equation}\label{rate of walk}
\textbf{a}_A(w)= \prod_{ e=(e_1, e_2) \in w}  A^{e_1}_{ e_2} . 
\end{equation}
Let us mention that also self-loops are included in the above product.

\begin{lemma} \label{lem:PDB graph}
Consider $A \in \mathcal{A}(L)$ and $\mathcal G $ be induced by $A$. Let $1,2\in \Omega $.
Then for every $n \geq 1 $ we have that
\[
\langle e_1, A^n e_2  \rangle =  \sum_{w \in W_{12}^{(n)}}  \textbf{a}_A (w)
\]
where $\textbf{a}_A$ is given by \eqref{rate of walk}. 
\end{lemma}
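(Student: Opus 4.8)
The plan is to prove the identity $\langle e_1, A^n e_2 \rangle = \sum_{w \in W_{12}^{(n)}} \mathbf{a}_A(w)$ by induction on $n$, exploiting the standard combinatorial interpretation of matrix powers in terms of weighted walks. This is essentially the observation that $(A^n)_{12}$ is a sum over all length-$n$ sequences of edges, which is precisely the set $W_{12}^{(n)}$ of $(1,2)$-walks of length $n$ in $\mathcal G(A)$ (self-loops included, since the graph includes self-loops and the product in \eqref{rate of walk} runs over all edges of the walk).

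First I would set up the base case $n=1$: by definition $\langle e_1, A e_2 \rangle = A_{12} = A^2_1$, and $W_{12}^{(1)}$ consists of the single walk $2\, e\, 1$ with $e=(2,1)$ (or is empty if $A^2_1 = 0$, in which case both sides vanish), whose weight is $\mathbf{a}_A(w) = A^2_1$. So the identity holds for $n=1$. Next, assuming the identity for some $n \geq 1$, I would write $\langle e_1, A^{n+1} e_2 \rangle = \langle e_1, A (A^n e_2) \rangle = \sum_{k \in \Omega} A_{1k} \langle e_k, A^n e_2 \rangle$, apply the inductive hypothesis to each term $\langle e_k, A^n e_2\rangle = \sum_{w \in W_{k2}^{(n)}} \mathbf{a}_A(w)$, and then observe that a $(1,2)$-walk of length $n+1$ is uniquely decomposed as a first edge $(k,1)$ — wait, one must be careful here about the direction/order convention. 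Looking at the definition of walk in Section \ref{sec:graph}, a $(v_0, v_k)$-walk is $v_0 e_1 v_1 \dots e_k v_k$ with $e_i = (v_{i-1}, v_i)$, and $W_{12}^{(n)}$ is the set of $(1,2)$-walks of length $n$. Since $A_{12} = A^1_2$ appears as $\langle e_1, A e_2\rangle$... actually $\langle e_1, A e_2\rangle = (Ae_2)_1 = A_{12}$, and $A_{12} = A^2_1$ by \eqref{structure of A}, i.e. the rate of the edge $(2,1)$. So $\langle e_1, A^n e_2\rangle$ collects walks that go \emph{from} $2$ \emph{to} $1$; one should match this against $\mathbf{a}_A$ evaluated on walks in $W_{21}^{(n)}$, or equivalently interpret $\langle e_1, A^n e_2 \rangle$ via walks from $2$ to $1$. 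I would carefully fix the convention so that the peeling is consistent: write $\langle e_1, A^{n+1} e_2\rangle = \sum_k \langle e_1, A^n e_k \rangle A_{k2}$ and peel the \emph{last} edge $(2,k)$ of a walk from $2$, or dually peel the first edge; either way the key algebraic fact is the multiplicativity $\mathbf{a}_A(w_1 \oplus w_2) = \mathbf{a}_A(w_1)\,\mathbf{a}_A(w_2)$, which is immediate from \eqref{rate of walk}.

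The inductive step then amounts to the bijection between $W_{12}^{(n+1)}$ and $\bigsqcup_{k} \{$edge$\} \times W_{k2}^{(n)}$ (with the appropriate direction convention), under which weights multiply correctly, together with the convention that a term with $A_{1k}=0$ contributes both a zero summand on the left and an empty index set on the right. Concretely: every $(1,2)$-walk $w$ of length $n+1$ can be written $w = e \oplus w'$ where $e$ is the first edge $(1, v_1)$ and $w' \in W_{v_1 2}^{(n)}$, and conversely; since $\mathbf{a}_A(w) = A^1_{v_1} \mathbf{a}_A(w') = A_{v_1 1}\mathbf{a}_A(w')$, summing over $w'$ and then over $v_1$ reproduces $\sum_k A_{k1}\langle \dots\rangle$... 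At this point the only genuine subtlety — and the main thing to get right — is bookkeeping the transpose/direction conventions so that the matrix index pattern in $\langle e_1, A^n e_2 \rangle = (A^n)_{12}$ matches the walk-orientation convention of Section \ref{sec:graph}; the combinatorics and the multiplicativity of $\mathbf{a}_A$ are routine. I would therefore state explicitly at the start which of $(A^n)_{12}$ versus $(A^n)_{21}$ is being expanded and over which set of walks, verify the $n=1$ case against that convention, and then carry out the one-line induction. No analytic difficulty arises; the ``hard part'' is purely notational consistency.
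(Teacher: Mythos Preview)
Your proposal is correct and is exactly the standard argument the paper has in mind; the paper's own proof is the single line ``The statement follows from the definition of $A^n$.'' Your careful bookkeeping of the direction conventions is well justified---indeed, comparing the lemma as stated with the immediately following Remark~\ref{remark:diff A pdb} suggests the indices in one of the two may be transposed---but either way the underlying combinatorics is the routine induction you describe.
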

\begin{proof}
    The statement follows from the definition of $A^n$. 
\end{proof}

\begin{remark} \label{remark:diff A pdb}
When the matrix $A$ is Markovian end ergodic, Lemma \ref{lem:PDB graph} implies that $A$ satisfies the $(12)$-pathwise detailed balance condition if and only if, for every $n \geq 1 $ we have that
\begin{equation}\label{differential}
\Delta_n(A) :=  N_1 \langle  e_2, A^n e_1 \rangle  - N_2 \langle e_1, A^n e_2 \rangle =  N_1 \sum_{w \in W_{12}^{(n)}} \textbf{a}_A (w) - N_2\sum_{w \in W^{(n)}_{21} } \textbf{a}_A(w) =0. 
\end{equation}
where $\textbf{a}_A$ is given by \eqref{rate of walk}. 
\end{remark}

\begin{remark}
As a consequence of Remark \ref{remark:diff A pdb} and of Proposition \ref{prop:pdb and ps} we deduce that $A$ satisfies the pathwise detailed balance property if the graph corresponding to $B= S^{-1} A S $ is such that the sum of the weights of the $(1,2)$-walks of length $n$ is equal to the sum of the weights of the $(2,1)$-walks of length $n$. 

Indeed Proposition \ref{prop:pdb and ps} guarantees that if $A $ satisfies the $(12)$-pathwise detailed balance condition, then $B$ is $(12)$-pathwise symmetric.  
Consider the weighted graph induced by the coefficients of $B$, i.e.~$\mathcal G (B)$.
Then \eqref{eq:PS} holds.     
By Lemma \ref{lem:PDB graph} we have that 
the condition \eqref{eq:PS} is equivalent to 
\begin{equation}\label{pdb and paths}  
 \sum_{w  \in W_{12}^{(n)}} \textbf{a}_B(w) = \sum_{w  \in W_{21}^{(n)}} \textbf{a}_B(w), \quad \forall n \geq 1,
\end{equation}
where $\textbf{a}_B(w)$ is the function defined in \eqref{rate of walk} with respect to the matrix $B$. 
\end{remark}

As suggested by the graphical interpretation of the pathwise detailed balance property, the following lemma states that, to prove that the pathwise detailed balance condition holds for a matrix $A$, it is enough to check the property \eqref{eq:PDB} for $n \leq L-1$. 
\begin{lemma} \label{lem:L-1 powers to prove PDB}
Let  $A \in \mathcal A(L)$. 
If \eqref{eq:PDB} holds for every $n \leq L-1$, then \eqref{eq:PDB} holds for every $n \in \mathbb N$. 
\end{lemma}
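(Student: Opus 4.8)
The plan is to reduce the infinitely many identities in \eqref{eq:PDB} to the finitely many with $n\le L-1$ by means of the Cayley--Hamilton theorem, exploiting the fact that \eqref{eq:PDB} is a \emph{linear} condition on the matrix $A^n$.

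First I would recast the condition. Fixing the two states $i\neq j$, define the linear functional $\ell\colon \R^{L\times L}\to\R$ by $\ell(M):=N_i\langle e_j,Me_i\rangle-N_j\langle e_i,Me_j\rangle=N_iM_{ji}-N_jM_{ij}$, where $N$ is the steady state of $A$. In the notation of Remark \ref{remark:diff A pdb} this is just $\ell(A^n)=\Delta_n(A)$, and \eqref{eq:PDB} for a given exponent $n$ is precisely $\ell(A^n)=0$. Since $i\neq j$, the diagonal-free entries of the identity vanish, so $\ell(\Id)=0$; hence the hypothesis of the lemma yields $\ell(A^n)=0$ for every $n\in\{0,1,\dots,L-1\}$.

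Next I would invoke the Cayley--Hamilton theorem. Let $p(\lambda)=\lambda^L+c_{L-1}\lambda^{L-1}+\dots+c_1\lambda+c_0$ be the characteristic polynomial of $A$; then $p(A)=0$, and therefore $A^{m+L}=-\sum_{k=0}^{L-1}c_k A^{m+k}$ for every $m\ge 0$. A strong induction on $n$ now closes the argument: the cases $n\le L-1$ are exactly the hypothesis, and for $n=m+L\ge L$ the linearity of $\ell$ gives $\ell(A^n)=-\sum_{k=0}^{L-1}c_k\,\ell(A^{m+k})=0$, since every exponent $m+k$ is strictly smaller than $n$ and hence covered by the induction hypothesis. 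This proves \eqref{eq:PDB} for all $n\in\N$.

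I do not anticipate a genuine obstacle; the argument is routine linear algebra. The only points that need a little care are: (i) the base of the induction must include $n=0$, equivalently $\ell(\Id)=0$, because the Cayley--Hamilton recursion for $A^{m+L}$ with $m=0$ involves the $I$-term — and $\ell(\Id)=0$ relies on $i\neq j$, which is part of Definition \ref{def:PDB}; and (ii) one could alternatively run the same computation for $B=S^{-1}AS$ and pathwise symmetry \eqref{eq:PS} via Proposition \ref{prop:pdb and ps}, but working directly with $A$ and the functional $\ell$ is the shortest route.
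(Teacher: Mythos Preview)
Your proof is correct and uses the same idea as the paper's own proof, namely the Cayley--Hamilton theorem combined with the linearity of the condition \eqref{eq:PDB}. The paper's argument is terser and implicitly uses that $A$ is Markovian (so $\det A=0$, hence $c_0=0$) to write $A^n\in\operatorname{span}(A^i)_{i=1}^{L-1}$ without the identity term, whereas you cover the $n=0$ case explicitly via $\ell(\Id)=0$; both routes are fine.
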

\begin{proof}
This follows by Cayley-Hamilton Theorem (see \cite{hefferon2006linear}), which guarantees that for every $n \geq L $ we have $A^n \in  \operatorname{span}(A^i)_{i=1}^{L-1} $. 
\end{proof}

In the following proposition we state that the pathwise detailed balance condition is equivalent to \eqref{eq:intro p(DB)}. 
\begin{proposition} \label{prop:R12=R21 implies PSEUDO DB}
    Assume that $A\in \mathcal A(L)$ with steady state $N \in \mathbb R_+^L$.
  If the matrix $A$ satisfies the $(12)$-pathwise detailed balance condition, then 
    \begin{equation}\label{eq:R12=cR21}
    R_{12 }(t)=\frac{N_2}{N_1}  R_{21}(t) \text{ for all } t \geq 0.
    \end{equation} 
    Vice-versa, if there exists a $c>0$ such that
     \begin{equation}\label{eq:R12=cR21 inv measure}
    R_{12 }(t)=c R_{21}(t) \text{ for all } t \geq 0, 
    \end{equation}
    then $c=N_2/N_1 $ and $A$ satisfies the $(12)$-pathwise detailed balance condition. 
\end{proposition}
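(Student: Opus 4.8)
The plan is to expand both response functions as power series in $t$ and to compare Taylor coefficients, using the definition $R_{12}(t)=\langle e_2, e^{tA}e_1\rangle$, $R_{21}(t)=\langle e_1, e^{tA}e_2\rangle$. Since $e^{tA}=\sum_{n\ge 0}\tfrac{t^n}{n!}A^n$ with the series converging for every $t\in\R$, we have
\[
R_{12}(t)=\sum_{n\ge 0}\frac{t^n}{n!}\langle e_2, A^n e_1\rangle,\qquad R_{21}(t)=\sum_{n\ge 0}\frac{t^n}{n!}\langle e_1, A^n e_2\rangle,
\]
and both are entire functions of $t$, with $n$-th derivative at $t=0$ equal to $\langle e_2, A^n e_1\rangle$ and $\langle e_1, A^n e_2\rangle$ respectively.

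For the first implication I would simply substitute the $(12)$-pathwise detailed balance relation term by term: \eqref{eq:PDB} gives $\langle e_1, A^n e_2\rangle=\tfrac{N_1}{N_2}\langle e_2, A^n e_1\rangle$ for every $n\ge 1$, and this also holds trivially for $n=0$ since both sides vanish. Summing against $\tfrac{t^n}{n!}$ yields $R_{21}(t)=\tfrac{N_1}{N_2}R_{12}(t)$, i.e.\ $R_{12}(t)=\tfrac{N_2}{N_1}R_{21}(t)$, which is \eqref{eq:R12=cR21}. For the converse, assuming $R_{12}(t)=cR_{21}(t)$ for all $t\ge 0$ with some $c>0$: two entire functions coinciding on $[0,\infty)$ coincide everywhere, hence their Taylor coefficients at $0$ agree, giving $\langle e_2, A^n e_1\rangle=c\,\langle e_1, A^n e_2\rangle$ for every $n\ge 0$. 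To pin down $c$, I would let $t\to\infty$: since $\|e_1\|_1=\|e_2\|_1=1$, the convergence of the solution of \eqref{eq:ODE} to the steady state $N$ (Perron--Frobenius, as recalled in Section~\ref{sec:linear biochemical systems}) gives $e^{tA}e_1\to N$ and $e^{tA}e_2\to N$, so $R_{12}(t)\to\langle e_2,N\rangle=N_2$ and $R_{21}(t)\to\langle e_1,N\rangle=N_1$. Passing to the limit in $R_{12}(t)=cR_{21}(t)$ and using $N_1>0$ forces $c=N_2/N_1$. Then $\langle e_2, A^n e_1\rangle=\tfrac{N_2}{N_1}\langle e_1, A^n e_2\rangle$, equivalently $\langle e_1, A^n e_2\rangle=\tfrac{N_1}{N_2}\langle e_2, A^n e_1\rangle$ for all $n$, which is exactly the $(12)$-pathwise detailed balance condition \eqref{eq:PDB}.

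There is no real obstacle here; the statement is essentially a bookkeeping identity between the power series of $e^{tA}$ and the sequence $(A^n)_n$. The only two points needing a little care are the justification of the term-by-term comparison (handled by analyticity of $R_{12},R_{21}$, or equivalently by differentiating at $t=0$) and the identification of the constant $c$, for which the large-time asymptotics $R_{12}(t)\to N_2$, $R_{21}(t)\to N_1$ together with the strict positivity of the ergodic steady state $N$ are precisely what is required.
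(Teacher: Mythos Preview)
Your proof is correct and follows essentially the same approach as the paper's: expand $e^{tA}$ as a power series and match Taylor coefficients for the equivalence, and use the ergodic limit $e^{tA}e_i\to N$ to identify the constant $c$. You are slightly more explicit than the paper about the analyticity justification and the trivial $n=0$ case, but the argument is the same.
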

\begin{proof}
Assume that the matrix $A$ satisfies the pathwise detailed balance condition. 
Then \eqref{eq:PDB} holds for any $n \in \mathbb N$, hence for every $t>0 $ we have that $\langle e_2 , e^{ t A} e_1 \rangle = \frac{N_2}{N_1} \langle e_1 , e^{ t A} e_2 \rangle $ and \eqref{eq:R12=cR21 inv measure} follows.  

Vice-versa, assume that \eqref{eq:R12=cR21} holds.
Then $\langle e_2 , e^{ t A} e_1 \rangle = c \langle e_1 , e^{ t A} e_2 \rangle $. 
Since the matrix $A$ is ergodic and the steady state is $N$ we deduce that $e^{t A} e_1 \rightarrow N $ and $e^{tA} e_2 \rightarrow N $ as $t \rightarrow \infty. $
Hence $\langle e_2, e^{t A} e_1\rangle  \rightarrow N_2 $ and $\langle e_1 , e^{tA} e_2 \rangle \rightarrow N_1 $ as $t \rightarrow \infty. $ Hence $c= N_2/N_1$. 
Using the Taylor expansion of $e^{tA}$ and identifying powers of $t$ we deduce that for every $n \geq 0 $
\[
\langle e_2, A^n e_1 \rangle = \frac{N_2}{N_1} \langle  e_1, A^n e_2  \rangle.
\]
\end{proof}

\subsection{Pathwise detailed balance and detailed balance} \label{sec:DB and p DB}
In this section we study the relation between the detailed balance condition and the pathwise detailed balance condition. 
First of all we prove that the detailed balance condition implies the pathwise detailed balance condition, see Proposition \ref{lem:DB implies pDB}. 
Then we prove that, if the graph $\mathcal G(A)$ is a cycle graph, then  the matrix $A$ satisfies the detailed balance condition if and only if $A$ satisfies the pathwise detailed balance condition. In particular this implies, see Lemma \ref{cor:L=3}, that, when $L=3$ the pathwise detailed balance condition is equivalent to the detailed balance condition. 

\begin{proposition} \label{lem:DB implies pDB}
        Assume that the matrix $A\in \mathcal A$ satisfies the detailed balance condition.
        Then $A$ satisfies the pathwise detailed balance condition with respect to any states $i,j \in \Omega$, $i\neq j$. 
\end{proposition}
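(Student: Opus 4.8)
The plan is to reduce everything to the symmetrized matrix $B = S^{-1}AS$ with $S$ as in \eqref{eq:S}, and then quote the machinery already built in Section~\ref{sec:ps and pdb}. First I would invoke Proposition~\ref{prop:DB and Symmetry}: since $A$ satisfies the detailed balance condition, the matrix $B$ is symmetric and $\ker(B) = \ker(B^T) = \operatorname{span}(v)$ with $v_i = \sqrt{N_i}$. A symmetric matrix has symmetric powers, so $\langle e_i, B^n e_j\rangle = \langle e_j, B^n e_i\rangle$ for every $n \in \N$ and every pair $i \neq j$; that is, $B$ is $(ij)$-pathwise symmetric for \emph{all} pairs of distinct states, not just one chosen pair.

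Next I would check the remaining hypotheses of Proposition~\ref{prop:pdb and ps}: $B$ is ergodic because $B = S^{-1}AS$ is a similarity transform of the ergodic matrix $A$ (the off-diagonal sign pattern and irreducibility are preserved since $S$ is a positive diagonal matrix, and $\mathbf e^T B = 0$ follows from $v^T B = 0$), and the kernel condition $\ker(B) = \ker(B^T) = \operatorname{span}(v)$ was just established. Hence Proposition~\ref{prop:pdb and ps} applies and yields that $A = S_v B S_v^{-1}$ satisfies the $(ij)$-pathwise detailed balance condition; since $i \neq j$ was arbitrary, this gives the claim for every pair.

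Alternatively, and perhaps more transparently, one can carry out the one-line computation directly: using $B^n = S^{-1} A^n S$ and $S = \operatorname{diag}(\{\sqrt{N_k}\})$, symmetry of $B^n$ reads
\[
\frac{\sqrt{N_j}}{\sqrt{N_i}}\,\langle e_i, A^n e_j\rangle \;=\; \langle e_i, B^n e_j\rangle \;=\; \langle e_j, B^n e_i\rangle \;=\; \frac{\sqrt{N_i}}{\sqrt{N_j}}\,\langle e_j, A^n e_i\rangle,
\]
which rearranges to $\langle e_i, A^n e_j\rangle = \tfrac{N_i}{N_j}\langle e_j, A^n e_i\rangle$, i.e.\ exactly \eqref{eq:PDB}. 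I would present this computation as the proof and mention the Proposition~\ref{prop:pdb and ps} route as a remark. There is no real obstacle here: the only thing to be careful about is bookkeeping the factors $\sqrt{N_i}, \sqrt{N_j}$ coming from conjugation by $S$, and noting that the argument works uniformly over all pairs $i\neq j$ because symmetry of $B$ is a global statement, in contrast to the single-pair pathwise symmetry assumed in the converse directions.
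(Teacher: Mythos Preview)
Your proposal is correct and essentially identical to the paper's proof: the paper invokes Proposition~\ref{prop:DB and Symmetry} to get $B=S^{-1}AS$ symmetric and then writes out exactly the chain of equalities you give in your ``alternative'' display. One small slip in your first route: $\mathbf e_L^T B = 0$ does \emph{not} follow from $v^T B = 0$ (the matrix $B$ is not Markovian in the paper's sense), but this is irrelevant since ergodicity here just means irreducibility, which is indeed preserved under conjugation by a positive diagonal matrix.
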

\begin{proof}
   The matrix $A$ satisfies the detailed balance condition, hence the matrix $B=S^{-1} A S $ is symmetric by Proposition \ref{prop:DB and Symmetry}. Here, the matrix $S$ is given in \eqref{eq:S} via the steady state $N$.
   Therefore for every $i,j \in \Omega$ we have that  
   \begin{align*}
\frac{\sqrt{N_j}}{\sqrt{N_i} } \langle e_i , A^n  e_j \rangle &= \langle e_i , (S^{-1} A S )^n e_j \rangle = \langle e_i ,B^n e_j \rangle=\langle e_j ,B^n e_i \rangle= \langle e_j , (S^{-1} A S )^n e_i \rangle \\
& =\frac{\sqrt{N_i}}{\sqrt{N_j} } \langle e_j , A^n e_i \rangle,
   \end{align*} 
   for every $n \in \mathbb N$. 
\end{proof}

As a second step we prove that if $L \geq 3 $, then the detailed balance property is equivalent to the pathwise detailed balance property, if the graph induced by $A$ is a cycle graph. 
\begin{lemma}\label{lem:p-DB iff DB on cycles}
 Let $ L \geq  3 $. Assume that the graph $\mathcal G=\mathcal G(A) $ induced by the matrix $A $ is a directed symmetric cycle graph.
 Then the matrix $A$ satisfies the pathwise detailed balance condition if and only if it satisfies the detailed balance condition. 
\end{lemma}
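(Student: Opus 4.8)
The plan is to prove the nontrivial implication: if $A$ induces a directed symmetric cycle graph and satisfies $(12)$-pathwise detailed balance, then $A$ satisfies detailed balance. (The converse is Proposition~\ref{lem:DB implies pDB}.) First I would set up notation for the cycle. Label the vertices of the $L$-cycle so that the edges are $\{k,k+1\}$ for $k\in\Z/L\Z$, with $1$ and $2$ adjacent, say in the positions chosen so that one arc from $1$ to $2$ has length one and the complementary arc has length $L-1$. Write $p_k = A^{k+1}_{k}$ for the ``clockwise'' rates and $q_k = A^{k}_{k+1}$ for the ``counterclockwise'' rates. By Kolmogorov's criterion (or directly: a cycle graph has exactly one independent cycle), detailed balance for $A$ is equivalent to the single condition that the product of all clockwise rates around the cycle equals the product of all counterclockwise rates, i.e. $\prod_k p_k = \prod_k q_k$.

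Next I would exploit the graphical reformulation in Remark~\ref{remark:diff A pdb}: pathwise detailed balance is equivalent to $\Delta_n(A)=0$ for all $n\ge 1$, and by Lemma~\ref{lem:L-1 powers to prove PDB} it suffices to consider $1\le n\le L-1$. The key structural fact about a cycle is that walks between two fixed vertices are extremely rigid: a $(1,2)$-walk of length $n$ consists of some number of back-and-forth oscillations plus a net displacement, and because the only simple ways to get from $1$ to $2$ are the short arc (length $1$) and the long arc (length $L-1$), the combinatorics of $W_{12}^{(n)}$ and $W_{21}^{(n)}$ can be matched term by term for small $n$. Concretely, for $n \le L-1$ and $n$ of the right parity, every $(1,2)$-walk of length $n$ either stays within the short arc or traverses the long arc; I would compare it with the reversed walk (reverse the sequence of edges), which is a $(2,1)$-walk of the same length, and track how $\textbf{a}_A$ and the steady-state weights $N_1,N_2$ transform. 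Using $A N = 0$ on a cycle, the steady state satisfies the explicit ``flux'' recursion relating $N_k, N_{k+1}$ via $p_k, q_k$, so $N_1/N_2$ is a ratio of products of rates along an arc.

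The crucial step is then the following: after cancelling the bulk oscillatory factors that are common to a $(1,2)$-walk and its reverse, $\Delta_n(A)=0$ forces, for $n=L-1$ (the first length at which the long arc contributes), a relation of the form $N_1 \cdot (\text{product of long-arc rates one way}) = N_2 \cdot (\text{product of long-arc rates the other way})$, while for $n=1$ one already gets $N_1 p_{?} $-type balance on the short edge; combining these, and using the flux recursion for $N$, collapses to exactly $\prod_k p_k = \prod_k q_k$, which is detailed balance. I expect the main obstacle to be the bookkeeping in this term-by-term matching of $W_{12}^{(n)}$ with $W_{21}^{(n)}$: one must check that the oscillatory ``excursion'' contributions really do cancel and isolate cleanly, and handle the parity constraint (the short arc has even length difference from the long arc only when $L$ is odd vs.\ even), so the argument may split into the cases $L$ odd and $L$ even, or more cleanly be organized by using $n=L-1$ together with $n=1$ and an induction downward. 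An alternative, possibly cleaner route I would keep in reserve: pass to $B=S^{-1}AS$ and use pathwise symmetry \eqref{pdb and paths}, since on a cycle the weights $\textbf{a}_B(w)$ for a walk and its reverse differ by a fixed ratio determined by the single cycle product, so $(12)$-pathwise symmetry of $B$ immediately pins that ratio to $1$; then Proposition~\ref{prop:pdb and ps} and Corollary~\ref{cor:DB implies invariant measures of subtree are the same} close the argument.
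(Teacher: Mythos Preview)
Your alternative route is closest to what the paper does, but there is one idea you are missing and one claim that is wrong as stated. The paper does \emph{not} conjugate by the steady state $N$; instead it removes the edge $(1,2)$ to get a spanning tree $\mathcal T$, takes the associated energy vector $\overline N$ (Proposition~\ref{prop:DB on trees}), and sets $B=S^{-1}AS$ with $S=\operatorname{diag}(\sqrt{\overline N_j})$. The point of this choice is that $B$ is then symmetric on \emph{every} edge except $(1,2)$, so the only possible source of asymmetry between $\textbf{a}_B(w)$ and $\textbf{a}_B(w^*)$ is localized to that single edge. Pathwise detailed balance of $A$ translates into $\langle e_1,B^n e_2\rangle=c\,\langle e_2,B^n e_1\rangle$ with $c=\tfrac{\overline N_2 N_1}{\overline N_1 N_2}$, and $n=1$ gives $c=B_{12}/B_{21}$.

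Your claim that ``the weights $\textbf{a}_B(w)$ for a walk and its reverse differ by a fixed ratio'' is false: the ratio is $(B_{12}/B_{21})^{m_w}$, where $m_w$ is the \emph{net signed} number of $(1,2)$-crossings of $w$, and this varies with the walk. What rescues the argument at $n=L-1$ is a short flow computation on the cycle: for any $(1,2)$-walk of length $L-1$ one has $m_w\in\{0,1\}$, and $m_w=0$ occurs only for the long arc (net flow $-1$ on each of the $L-1$ long-arc edges already uses up all $L-1$ steps, leaving no room for self-loops, backtracking, or the $(1,2)$ edge). All $m_w=1$ terms then cancel in the identity $\langle e_1,B^{L-1}e_2\rangle - c\,\langle e_2,B^{L-1}e_1\rangle=0$, and the surviving long-arc term forces $c=1$, hence $B_{12}=B_{21}$, hence $B=B^T$, hence detailed balance by Corollary~\ref{cor:DB implies invariant measures of subtree are the same}. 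Your main Kolmogorov approach is not wrong, but it amounts to redoing this same cancellation without the spanning-tree conjugation; that trick is precisely what makes the oscillatory ``excursion'' contributions you were worried about drop out automatically, with no parity case-split needed.
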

\begin{proof}
Consider the spanning tree defined by $\mathcal T=(V, E \setminus \{ (1,2) , (2,1) \}$ and consider the associated energy vector $\overline N $. 
Define the matrix $S:=\operatorname{diag} \left( \{ \sqrt{\overline{N}_j } \}_{j\in V } \right) $  and the matrix $B:=S^{-1} A S $. 
Due to the fact that the detailed balance condition holds on $\mathcal T $ we have that $B_{ij} = B_{ji } $ for every $(i,j) \neq (1,2) .$
The pathwise detailed balance condition implies that 
\begin{align*} 
    \langle e_1, B^{L-1} e_2 \rangle &=\langle S^{-1} e_1, A^{L-1} Se_2 \rangle = \frac{\sqrt{\overline N_2}}{\sqrt{\overline N_1 }} \langle e_1, A^{L-1} e_2 \rangle  = \frac{\sqrt{\overline N_2}}{\sqrt{\overline N_1 }} \frac{N_1}{N_2} \langle e_2, A^{L-1} e_1 \rangle 
    \\
    &= \frac{\overline N_2}{\overline N_1 }\frac{N_1}{N_2} \langle e_2, B^{L-1} e_1 \rangle 
\end{align*}
as well as $\langle e_1, B e_2 \rangle = \frac{\overline N_2}{\overline N_1 }\frac{N_1}{N_2} \langle e_2, B e_1 \rangle. $

Now notice that 
\[
\langle e_2, B^{L-1} e_1 \rangle = B_{j_1 1} B_{j_2 j_1} \dots B_{2, j_{L-1}} =  \langle e_1, B^{L-1} e_2 \rangle =  \frac{\overline N_2}{\overline N_1 }\frac{N_1}{N_2} \langle e_2, B^{L-1} e_1 \rangle.
\]
This implies that $1= \frac{\overline N_2}{\overline N_1 }\frac{N_1}{N_2}$, hence $B_{12}=B_{21}$. This implies the detailed balance condition for $A$.
\end{proof}

Finally, if $L =3 $, then the detailed balance property is equivalent to the pathwise detailed balance property. 
\begin{corollary}\label{cor:L=3}
    Assume that $A \in \mathcal A (3) $. Then, the matrix $A$ satisfies the detailed balance condition if and only if it satisfies the pathwise detailed balance condition. 
\end{corollary}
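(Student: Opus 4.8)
The plan is to prove the two implications separately. The forward implication, detailed balance $\Rightarrow$ pathwise detailed balance, is immediate from Proposition \ref{lem:DB implies pDB} (which in fact gives the $(ij)$-pathwise detailed balance for \emph{every} pair). So the content is the converse, and for $L=3$ I would argue directly rather than routing through Lemma \ref{lem:p-DB iff DB on cycles}: a single use of \eqref{eq:PDB} at $n=1$, combined with the stationarity equation $AN=0$ and Markovianity, already forces the full detailed balance relation.

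Concretely, assume $A\in\mathcal A(3)$ satisfies the $(12)$-pathwise detailed balance condition. Taking $n=1$ in \eqref{eq:PDB} and using the convention $A_{ki}=A^i_k$ from \eqref{structure of A}, so that $\langle e_1, Ae_2\rangle = A_{12}$ and $\langle e_2, Ae_1\rangle = A_{21}$, one obtains $N_2 A_{12}=N_1 A_{21}$, i.e.\ the detailed balance identity on the pair $\{1,2\}$. Next I would write out the first component of $AN=0$, namely $A_{11}N_1+A_{12}N_2+A_{13}N_3=0$, substitute the Markovianity relation $A_{11}=-(A_{21}+A_{31})$ and the identity $A_{12}N_2=A_{21}N_1$ just obtained; the terms carrying state $2$ cancel and one is left with $A_{13}N_3=A_{31}N_1$. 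Applying the same manipulation to the second component of $AN=0$, with $A_{22}=-(A_{12}+A_{32})$, yields $A_{23}N_3=A_{32}N_2$. These three identities are exactly the detailed balance condition for $A$ (all remaining pairs, and the equal-index cases, are trivial), so $A$ satisfies detailed balance; and then Proposition \ref{lem:DB implies pDB} returns pathwise detailed balance with respect to an arbitrary pair, closing the loop and giving the stated equivalence.

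I expect essentially no serious obstacle: the argument is a two-line linear-algebra computation. The only points requiring care are the index bookkeeping (making sure $\langle e_1,Ae_2\rangle$ is really the rate $A^2_1=A_{12}$ and not its reverse) and the observation that the clean cancellation in $AN=0$ producing the third relation "for free" is special to $L=3$, where deleting the two observed states $1,2$ leaves a single state. (Alternatively one could split into the case where the underlying undirected graph of $\mathcal G(A)$ is a tree — detailed balance then holds automatically by Proposition \ref{prop:DB on trees} — and the triangle case, where Lemma \ref{lem:p-DB iff DB on cycles} applies once \eqref{eq:PDB} at $n=1$ is used to see the edge $\{1,2\}$ is bidirectional; but the direct computation above avoids this case distinction entirely.)
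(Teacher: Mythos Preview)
Your argument is correct, and it takes a genuinely different route from the paper. The paper proceeds by a case split on the topology of the underlying undirected graph: for $L=3$ a connected graph is either a tree (so Proposition~\ref{prop:DB on trees} gives detailed balance outright) or a triangle (so Lemma~\ref{lem:p-DB iff DB on cycles} applies). Your proof avoids both the case split and the appeal to Lemma~\ref{lem:p-DB iff DB on cycles} by observing that the single relation $A_{12}N_2=A_{21}N_1$ from $n=1$, fed into two rows of $AN=0$ together with Markovianity, forces the remaining two detailed-balance identities by direct cancellation. This is more elementary and makes transparent why the converse is special to $L=3$: with only one hidden state, the stationarity equations have just enough room for the cancellation to close. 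The paper's approach, by contrast, is more structural and shows how the $L=3$ result sits inside the general cycle result; it also previews the machinery (spanning trees, energy vectors) used later. Your parenthetical already identifies this alternative, so you clearly see both routes.
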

\begin{proof} 
    For $L=3$ a connected graph is either a cycle or a tree. In the first case the claim follows from Lemma \ref{lem:p-DB iff DB on cycles}. In the second case detailed balance is always satisfied due to Proposition \ref{prop:DB on trees}.
\end{proof}

If $L>3$ then the fact that the matrix $A$ satisfies the pathwise detailed balance condition does not imply that $A$ satisfies the detailed balance condition, as we show in the following example. 
\begin{example} \label{exam:L=4 strongly connected}
Consider the matrix 
    \[
    A= \left( \begin{matrix}
        & -3 & 1  & 1  & 1 \\
        & 1  & -3 & 1  & 1 \\
        & 1  & 1  & -3 & 2  \\
        & 1  & 1  & 1  & -4  \\
    \end{matrix} \right). 
    \]
    See Figure \ref{fig:example nonDB but pDB} to visualize the graph induced by the matrix $A$. 
This matrix has steady state $N=\frac{1}{5}(5/4, 5/4, 3/2,1)$. However, its spanning tree with edges $\{(1,3),(1,2),(2,4) \}$ corresponding to the matrix
\[ 
        A_{\mathcal{T}}= \left( \begin{matrix}
        & -2 & 1  & 1  & 0 \\
        & 1  & -2 & 0  & 1 \\
        & 1  & 0  & -1 & 0  \\
        & 0  & 1  & 0  & -1  \\
    \end{matrix} \right).  \]
and has energy vector $\frac{1}{4}(1,1,1,1)$. By Corollary \ref{cor:DB implies invariant measures of subtree are the same} $A$ does not have the detailed balance property. 
However,  $\langle e_2 , A^n e_1\rangle =\frac{N_1}{N_2}  \langle e_1 , A^n e_2 \rangle = \langle e_1 , A^n e_2 \rangle $ for $n=1,2,3 $.
    \begin{figure}[ht]
	\centering
	\includegraphics[width=0.2\linewidth]{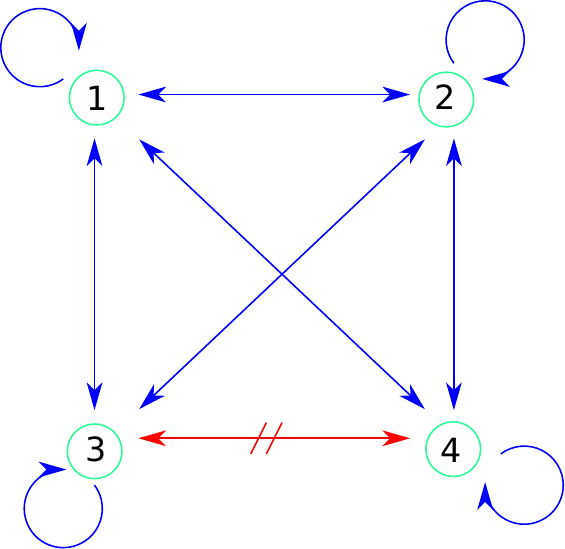}
	\caption{Graph corresponding to the matrix $A$. In red and marked with the symbol $//$ we have the only reaction without detailed balance of the network. 
 }
	\label{fig:example nonDB but pDB}
\end{figure}
\end{example}
This example shows that for $L>3$ the pathwise detailed balance condition does not imply the detailed balance condition. Nevertheless, as we stated in the introduction pathwise detailed balance will be a sufficient condition for detailed balance if we assume a topological property of the graph as well as a suitable defined stability concept on a class of graphs. This is the reason to introduce the definition of stability classes of graphs in the next section.

\section{Stability classes of chemical networks}
\label{sec:class and stability}
In this section we introduce the definition of classes of networks.
This definition will allow us to speak about the stability of certain properties (in particular the pathwise detailed balance condition) of a specific representative of the class. 

\subsection{Stability classes} \label{sec:classes of graphs}
In this section we give the definition of classes of networks. 
This definition has a biological rationale. Specifically, the network classes are defined in such a way that a significant change in the architecture of the network is needed to transform one class into another. 
The class of a network could change in time due to evolution, but this would require a significantly large number of mutations or a relevant mutation.

We now precise what we mean exactly with significant changes in the network, that would require important mutations to take place. 
In the case of linear networks, that we consider in this paper, we assume that significant changes are: 
\begin{enumerate}
    \item the addition of a new substance, that interacts in a non-trivial manner with some of the substances previously available in the network; 
    \item the removal of one of the substances of the network, including all the edges connecting this substance with any of the substances of the network; 
    \item the creation of a new edge connecting two of the previously existing substances in the network; 
    \item the removal of one edge connecting two of the substances present in the network; 
    \item the transformation of some of the edge connections for which the detailed balance condition holds to a new connection where the detailed balance condition does not hold.    
\end{enumerate}

Therefore we  say that the \textit{architecture} of a network is characterized by the set of substances, the set of edges connecting them, and the set of connections for which detailed balance holds. 
Notice that, any of the modifications of the network indicated in the points (a)-(e), implies a modification of the architecture of the network. We assume that important modifications as the ones in (a)-(e) require rare significant mutations or a very large number of small mutations. 

We stress that the definition of architecture of the network defines a topological space on the set of the chemical rates of the network. There are other possible topological spaces on networks that allow more general modifications as the one we introduce in \eqref{eq:alternative ball in a class} below. For instance, it might allow the formation of new weak reactions between substances that were not interacting before. Some of these alternative topologies can be justified by the way in which evolution is expected to produce chemical pathways \cite{noda2018metabolite}.

We define the classes of networks as a set of chemical networks sharing the same architecture. More precisely we have the following definition of class of matrices. 
\begin{definition}[Class of matrices]\label{def:class}
Assume $L \geq 3 $. 
Let 
\begin{equation} \label{def:S} 
\mathcal S:= \{ 1, \dots L \}^2=\Omega^2. 
\end{equation}
Assume that $E_B \subset \mathcal S$, $E_{n-B} \subset \mathcal S $ are such that $E_B \cap E_{n-B} = \emptyset. $
Let $E_N:= \mathcal S\setminus (E_B \cup E_{n-B})$
We define the \textit{class of matrices} $\mathcal A_{E_N, E_B}$ as 
\begin{equation}\label{eq:def class}
    \mathcal A_{E_N, E_B} : = \left\{\ \parbox[c]{.70\linewidth}{ $A \in \mathcal A(L) $ with normalized  steady state $ N \in \mathbb R_+^L  :  \  A_{\alpha_2 \alpha_1}=A_{\alpha_1, \alpha_2} =0, \\ \forall \alpha \in E_N$  and $  A_{\alpha_2 \alpha_1} N_{\alpha_1} = A_{\alpha_1 \alpha_2} N_{\alpha_2}, \forall \alpha \in E_B$  }\ \right\}. 
\end{equation}

\end{definition}
Since to every matrix $A \in \mathcal A_{E_N, E_B} $ we can associate a graph $\mathcal G(A)$, a class of matrices corresponds to a class of networks. 
According to Definition \ref{def:class}, a class of networks is characterized by the set of nodes $\Omega=\{ 1, \dots , L \} $, a set of edges for which the detailed balance condition holds, that is $E_B $, and a subset of edges for which the detailed balance condition is not imposed, that is $E_{n-B}$. 
The network is then uniquely characterized by means of the chemical coefficients $K_\alpha>0$ for the reaction $\alpha \in E_B \cup E_{n-B} $. We have that $K_\alpha =0$ if $\alpha \in E_N$. 

\begin{example}
    Let $A_1, A_2  \in \mathcal A $ satisfy the detailed balance condition and assume that $\mathcal G(A_1) $ and $\mathcal G(A_2) $ are complete graphs.
    Then $A_1, A_2 \in \mathcal A_{\emptyset, \mathcal S} $. 
\end{example}

Notice that in each network class we can define a distance between two networks characterized by chemical coefficients $ \{ K_\alpha \}_{\alpha \in E_B \cup E_{n-B}}$ and $ \{ \overline{K}_\alpha \}_{\alpha \in E_B \cup E_{n-B}}$, using for instance the euclidean distance between these vectors. It is natural to assume that most of the mutations are small, i.e.~they preserve the architecture of the network and just modify the coefficients $\{ K_\alpha \} $ by a small amount. 
On the other hand, some rare mutations or a combinations of several small mutations can lead to a change in the architecture of the network.

The significant modifications in (a)-(e) have been chosen with the idea that such changes would require important mutations (or long sequences of small mutations). 
We assume that addition or elimination of substances or edges in the network requires substantial changes. 
The same applies also to modifications of reactions with detailed balance that result in reactions without detailed balance and vice versa. 

This might be expected because the reactions in which the detailed balance condition holds are usually associated to interactions with some additional molecules, typically ATP, see Section \ref{sec:lack of DB}. The introduction of this additional molecule in one reaction can be expected to be a significant change. This is the reason why we include the set of reactions for which  detailed balance holds or fails as part of our definition of classes of graphs.

\subsection{Stability/instability of a property in a stability class} \label{sec:stable properties}
In this section we introduce the concept of stability of a certain property, say $(P)$, in a class of graphs. 
As explained in Section \ref{sec:classes of graphs}, our definition of network class allows to define a concept of distance between two networks in the same class. 
It turns out that it allows also to define suitable concepts of stability and instability that will be relevant for us later. 

The two reciprocal measurements $R_{12}$ and $R_{21}$ are not sufficient in general to determine if a system satisfies the detailed balance condition or not. 
In Section \ref{sec:non reverse measurements} we show that measurements that are not reciprocal are less efficient to determine the detailed balance condition compared to the reciprocal measurements $R_{12} $ and $R_{21}$. 
Moreover, we show that in order to be able to deduce detailed balance from pathwise detailed balance we need a number of measurements of the order of $L$, recalling that $L$ is the size of the state space $\Omega$. This means that, from the practical point of view it is impossible to prove detailed balance from pathwise detailed balance for large networks. This is the reason why, instead, we focus on analyzing the stability of pathwise detailed balance.

Consider a class $\mathcal A_{E_N , E_B } $ for some choice of $E_N , E_B \subset \mathcal S $ where $\mathcal S$ is as in \eqref{def:S}. Given any $K \in \mathcal A_{E_N, E_B}$ we can define a ball around $K$ in the class of matrices $\mathcal A_{E_N, E_B}$ by
\begin{equation}\label{eq:ball in a class}
    B_\delta (K):= \{ A \in \mathcal A_{E_N , E_B} : \| K - A \| < \delta \} 
\end{equation}
for some $0 < \delta < \infty $. This induces a topology that we denote by $\mathcal T(E_N , E_B)$.

Alternatively, we can also define the following topology, abbreviated by $\mathcal{T}_w(E_B)$, induced by the following open balls around matrices $K\in\mathcal A_{E_N, E_B} $
\begin{equation}
    B_\delta (K):= \{ A \in \mathcal{A}_{E_B^*}: \| K - A \| < \delta, \quad E_B\subset E_B^* \},
\end{equation}
where we define
\begin{align}\label{eq:alternative ball in a class}
    \mathcal{A}_{E_B}:= \{ A \in \mathcal A(L) \text{ with normalized  steady state } N \in \mathbb R_+^L :  A_{\alpha_2 \alpha_1} N_{\alpha_1} = A_{\alpha_1 \alpha_2} N_{\alpha_2}, \forall \alpha \in E_B  \}.
\end{align}
The balls $B_\delta (K) $ in the topology $\mathcal {T}_w(E_B)$ are larger than the ones in the topology $\mathcal T (E_N, E_B)$. Indeed, in the definition of $B_\delta (K) $ for the topology $\mathcal{T}_w (E_B) $ we take into account also of the possible formation of a reaction between substances that, according to the rates in $K$, are not interacting. 
Therefore even if $\mathcal G(K)$ does not have a certain edge, a small perturbation of $K$ in the topology $\mathcal{T}_w(E_B)$ might create this edge with a small reaction rate. 
Nevertheless, all edges in which detailed balance holds are required to satisfy detailed balance. This weaker topology is suitable for biochemical networks in which evolution can produce the formation of weak reactions between substances of the network that were not interacting before, as for instance in \cite{noda2018metabolite}.

\begin{definition} \label{def:stability}
    Assume $L>1$ and let $E_N, E_B \subset \mathcal S $ be as in Definition \ref{def:class}.
    Assume that $A \in \mathcal A_{E_N ,E_B } \subset \mathbb R^{L \times L}$ satisfies the  property $(P)$. Then the property $(P) $ of $A$ is called stable in the class $\mathcal A_{E_N, E_B} $ if there exists an open set $\mathcal O$ of the topology $\mathcal{T}$ containing $A$ and such that $\forall \bar A \in \mathcal O$ property $(P)$ holds. Otherwise, the property $(P)$ is unstable.
\end{definition}
In this paper the topology mentioned in Definition \ref{def:stability} is either $\mathcal{T}(E_N,E_B)$ or $\mathcal{T}(E_B)$. But it would be possible to consider also other topologies. 

Notice that the concepts of stability and instability introduced above is rather natural from the point of view of biological applications. 
Indeed, as explained in Section \ref{sec:classes of graphs}, small mutations can be expected to modify the chemical rates within one given network class, without modifying the class, i.e.~the network architecture. 

Since the specific values of the coefficients characterizing a network are determined by evolution, we cannot expect to have an unstable property $(P)$ for a network, because small mutations would modify the coefficients and would modify the properties of the network. 
In other words, if a network has a certain unstable property, it would require \textit{fine-tuning} of the parameters. 
As a consequence, in the following, we will assume that the only networks that appear in biological applications are the ones which are stable, in the sense of Definition \ref{def:stability} with respect to a topology $\mathcal T$. The specific topology would depend on the way in which evolution generates different networks.

Notice that, in some particular biochemical systems, some specific properties of the proteins involved in the reaction networks or the DNA template yielding these proteins could provide additional restrictions to the values of the chemical coefficients $\{ K_\alpha \} $. 
In such cases, additional constraints should be added to the class of chemical networks, but in principle an approach similar to the one used in this paper should apply to that case. 

We conclude this section by  proving that the set of matrices that do not satisfy the detailed balance condition is an open subset of $\mathcal A$.
In other words we prove that  the lack of detailed balance is stable under small changes of the coefficients $K_{\alpha}$ that characterize the graph, induced by mutations.

\begin{proposition} \label{prop:non-B open set}
Let $L>1 $.
The set 
\[ 
\mathcal A_{n-B}:= \{ A \in \mathcal A: A \text{ does not satisfy the detailed balance condition} \} 
\]
is an open subset of $\mathcal A$. 
\end{proposition}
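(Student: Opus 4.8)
The plan is to exhibit the complement of $\mathcal A_{n-B}$ in $\mathcal A$, namely the set $\mathcal A_{B}$ of matrices satisfying the detailed balance condition, as a relatively closed subset of $\mathcal A$, by characterizing detailed balance through finitely many continuous functions that vanish exactly on $\mathcal A_{B}$. The starting point is that for $A \in \mathcal A$ the normalized steady state $N = N(A) \in \mathbb R_+^L$ depends continuously on $A$; indeed $N$ is determined by the linear system $AN = 0$, $\|N\|_1 = 1$, and since $A$ is ergodic the kernel is one-dimensional, so $N(A)$ can be written (e.g.\ via Cramer's rule on a suitable $(L-1)\times(L-1)$ minor, or as a normalized Perron eigenvector) as a rational, hence continuous, function of the entries of $A$ on the open set $\mathcal A$. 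Then detailed balance is the condition that the finitely many functions $F_{ij}(A) := A_{ij}N_j(A) - A_{ji}N_i(A)$, for $i < j$, all vanish; each $F_{ij}$ is continuous on $\mathcal A$.

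First I would make precise that $\mathcal A$ is an open subset of $\mathbb R^{L\times L}$, or at least an open subset of the affine subspace cut out by the Markovianity relations $\uv_L^T A = 0$ and $A_{ki}\ge 0$ for $k\neq i$: Markovianity is a closed affine condition, but ergodicity (irreducibility of the associated graph together with strict positivity enabling a unique steady state) is an open condition — a small perturbation preserving the sign pattern keeps the graph irreducible, and one can perturb within $\mathcal A$ freely in the directions that keep the column sums zero. This is the point that needs a careful but routine argument, since $\mathcal A$ is not open in all of $\mathbb R^{L\times L}$; the relevant topology is the subspace topology, and balls $B_\delta(K)$ as in \eqref{eq:ball in a class} are taken within $\mathcal A$, so it suffices to show $\mathcal A_{n-B}$ is open in $\mathcal A$ with its subspace topology.

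With continuity of $N(\cdot)$ and of each $F_{ij}(\cdot)$ on $\mathcal A$ in hand, the proof concludes quickly: $A \in \mathcal A_{n-B}$ iff $F_{ij}(A) \neq 0$ for some pair $i<j$, i.e.\ $\mathcal A_{n-B} = \bigcup_{i<j} F_{ij}^{-1}(\mathbb R \setminus \{0\})$, which is a finite union of preimages of the open set $\mathbb R\setminus\{0\}$ under continuous maps, hence open in $\mathcal A$. Equivalently, $\mathcal A_B = \bigcap_{i<j} F_{ij}^{-1}(\{0\})$ is closed in $\mathcal A$, so its complement is open.

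The main obstacle — really the only non-formal point — is establishing the continuous dependence of the normalized steady state $N(A)$ on $A$ throughout $\mathcal A$, and being careful about in which space $\mathcal A$ is open. One clean way to handle the first: fix $A_0 \in \mathcal A$; by ergodicity $0$ is a simple eigenvalue of $A_0$ with one-dimensional kernel and the rest of the spectrum bounded away from $0$, so by standard perturbation theory for simple eigenvalues the spectral projection onto $\ker A$ depends analytically (in particular continuously) on $A$ near $A_0$; applying it to a fixed vector and normalizing in $\ell^1$ gives $N(A)$ continuous near $A_0$, and since the normalized steady state lies in $\mathbb R_+^L$ the normalization is well-defined and smooth. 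Alternatively, one invokes the matrix-tree theorem, which gives $N_i(A)$ explicitly as a ratio of sums of spanning-tree weights, manifestly a continuous (rational with nonvanishing denominator on $\mathcal A$) function of the entries of $A$. I would use the matrix-tree formula, since it is elementary and makes the continuity transparent without invoking perturbation theory.
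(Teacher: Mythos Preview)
Your proposal is correct and follows essentially the same approach as the paper: both arguments show that the complement $\mathcal A_B$ is closed by using the continuity of the normalized steady state $N(A)$ on $\mathcal A$ (the paper invokes simplicity of the eigenvalue $0$, your perturbation-theory option) and then observing that detailed balance is precisely the vanishing of the continuous functions $A_{ij}N_j(A)-A_{ji}N_i(A)$. The paper phrases this via a sequential limit while you phrase it as a preimage of a closed set, but the content is the same; your additional remarks on the matrix-tree formula and on the ambient topology are helpful elaborations rather than a different route.
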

\begin{proof}
    We prove that the set of Markovian ergodic matrices satisfying the detailed balance condition is closed. To this end, consider a converging sequence of matrices $A_n \in \mathcal A$,  such that $A_n\to A$, with $A \in \mathcal A$, as $n \rightarrow \infty$ and such that $A_n$ satisfy the detailed balance condition for each $n\in \N$. We want to prove that also the limit $A$ satisfies the detailed balance condition. 
    Since the eigenvalue $\lambda=0$ corresponding to the steady state $N$ of $A$ is simple we have $\mu_n\to N$ as $n \rightarrow \infty $, where $\mu_n$ is the steady state of $A_n$. Passing to the limit in the equality
    \begin{align*}
        {(A_n)}_{ij} {(\mu_n)}_j = {(A_n)}_{ji} {(\mu_n)}_i
    \end{align*}
    for all $i,\, j\in \Omega$ yields the detailed balance condition for $A$.
    Hence the set of matrices satisfying the detailed balance condition is closed. In particular, $\mathcal A_{n-B} $ is open. 
\end{proof}

\subsection{Classes of graphs for which the pathwise detailed balance condition is stable}
In this section we introduce a special class of matrices that do not necessarily satisfy the detailed balance condition, but that satisfy the pathwise detailed balance condition in a stable manner.
As we will see in this section, this class of graphs is characterized by a topological feature of the corresponding network, namely the existence of a cut vertex. 

We start this section by recalling the definition of cut vertex.
\begin{definition}[Cut vertex]\label{def:articulation}
Let $\mathcal G=(V,E) $ be a connected graph. 
A vertex $X \in V$ is a cut vertex if $E$ can be partitioned into two non empty subsets $E_1$ and $E_2$ such that $\mathcal G [E_1]$ and $\mathcal G [E _2]$ have just the vertex $X$ in
common. 
\end{definition}

We now introduce a class of matrices that correspond to a graph that has a cut vertex that partitions $E$ in such a way that the edges that do not satisfy the detailed balance condition belong all to the same part of the graph. 
More precisely, the definition is the following. 
\begin{definition}[A family of stability classes with cut vertices]\label{def:Stability class with cut vertex}
    Let $L > 3 $. Let $E_N, E_B \subset \mathcal S$ be as in Definition \ref{def:class}. We say that $\mathcal A_{E_N, E_B} $ is a stability class with cut vertices if $E_N ,\, E_B $ are such that every $\mathcal G(A)$, with $A\in \mathcal A_{E_N , E_B}$, has a cut vertex $X$ such that either $  E_1 \subset E_B$ or $E_2 \subset E_B$ where $E_1, E_2 $ are the two sets in Definition \ref{def:articulation}. We denote by $\mathcal{C}_{E_N , E_B}$ this class of matrices.
\end{definition}

\begin{example}
    \begin{figure}[ht]
	\centering
	\includegraphics[width=0.2\linewidth]{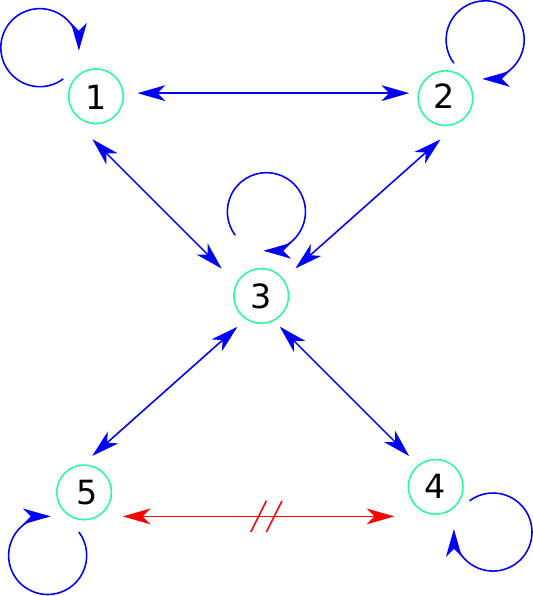}
	\caption{Graph corresponding to the matrix $A$. In red and marked with the symbol $//$ we have the only reaction without detailed balance of the network.}\label{fig:stable class}
\end{figure}
Let $L=5$. Assume that $E_N=\{ (1,4), (2,5) \} $ and assume that $E_{n-B}=\{ (4,5) \} $. See Figure \ref{fig:stable class} for a visualization of the network. Then $\mathcal A_{E_N, E_B } $ is a stability class with the cut vertex $X=3$.
Hence  $\mathcal A_{E_N, E_B } =  \mathcal C_{E_N, E_B } $. 
\end{example}

\bigskip 

\begin{theorem} \label{prop:stable DB}
    Let $L > 3$. Let $E_N , E_B  \subset \mathcal S$ be as in Definition \ref{def:class} and such that the class $\mathcal A_{E_N, E_B} = \mathcal C_{E_N, E_B} $ is a stability class with cut vertices. Every matrix $A \in \mathcal C_{E_N, E_B} $ satisfies the $(12)$-pathwise detailed balance condition with $1,2 \in \mathcal G[E_1]$.
\end{theorem}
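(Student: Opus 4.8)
The plan is to pass to the symmetrized matrix $B:=S^{-1}AS$ (with $S$ as in \eqref{eq:S}) and reduce the claim to a combinatorial identity about weighted walks, which I would then prove by an explicit bijection. Since $\mathcal C_{E_N,E_B}$ is a stability class with cut vertices, after possibly relabelling the two sides of the cut I may assume that $E$ is partitioned as $E=E_1\cup E_2$ with $E_1\subset E_B$, so that detailed balance holds on every edge of $\mathcal G[E_1]$, and $1,2$ are vertices of $\mathcal G[E_1]$. Write $X$ for the cut vertex and $V_1,V_2$ for the vertex sets of $\mathcal G[E_1],\mathcal G[E_2]$, so that $V_1\cap V_2=\{X\}$. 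A direct computation as in the proof of Proposition \ref{prop:pdb and ps} gives $\langle e_1,A^ne_2\rangle\,N_2=\sqrt{N_1N_2}\,\langle e_1,B^ne_2\rangle$ and $\langle e_2,A^ne_1\rangle\,N_1=\sqrt{N_1N_2}\,\langle e_2,B^ne_1\rangle$, so \eqref{eq:PDB} for $(1,2)$ is equivalent to $(12)$-pathwise symmetry of $B$. Moreover, detailed balance on $E_1$ translates into $B_{uv}=B_{vu}$ for every edge $(u,v)\in E_1$ (and I may assume $E_B=E_B^{*}$, since $A_{\alpha_2\alpha_1}N_{\alpha_1}=A_{\alpha_1\alpha_2}N_{\alpha_2}$ is symmetric in $\alpha_1,\alpha_2$), whereas the entries of $B$ on edges of $E_2$ need not be symmetric. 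By Lemma \ref{lem:PDB graph}, it then suffices to prove that $\sum_{w\in W_{12}^{(n)}}\mathbf a_B(w)=\sum_{w\in W_{21}^{(n)}}\mathbf a_B(w)$ for every $n\ge1$.

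The crucial structural point is that, since $X$ is a cut vertex and every edge of $E_2$ has both ends in $V_2$, a walk starting at a vertex of $V_1\setminus\{X\}$ must traverse only $E_1$-edges and stay in $V_1$ until it first reaches $X$, and symmetrically after it last leaves $X$. Assuming first $1,2\ne X$, I would split the walks $1\to2$ of length $n$ according to whether they visit $X$. A walk that never visits $X$ lies entirely in $\mathcal G[E_1]$, hence uses only edges with symmetric $B$-weight, and reversing it gives a length- and weight-preserving bijection onto the walks $2\to1$ of length $n$ that avoid $X$. A walk $w$ that does visit $X$ decomposes uniquely as $w=w^{(1)}\oplus w^{(X)}\oplus w^{(2)}$, where $w^{(1)}\colon1\to X$ is the segment up to the first visit to $X$, $w^{(2)}\colon X\to2$ the segment after the last visit, and $w^{(X)}\colon X\to X$ a closed walk in the whole graph; by the observation above $w^{(1)}$ and $w^{(2)}$ use only $E_1$-edges. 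I would then set $\Phi(w):=\overline{w^{(2)}}\oplus w^{(X)}\oplus\overline{w^{(1)}}$, reversing the two $E_1$-pieces and leaving the middle closed walk untouched. Then $\Phi(w)$ is a walk $2\to1$ of length $n$ with $\mathbf a_B(\Phi(w))=\mathbf a_B(w)$ (reversal preserves the weight of $w^{(1)}$ and $w^{(2)}$, and the middle weight is unchanged), and $\Phi$ is a bijection onto the walks $2\to1$ of length $n$ that visit $X$, its inverse being given by the same recipe applied to a $(2,1)$-walk. Adding the two bijections proves the identity. The case $1=X$ (or $2=X$) is analogous, decomposing a walk issued from $X$ at its last (resp.\ first) visit to $X$ only.

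The only genuine difficulty is the combinatorial bookkeeping: one must check carefully that the decomposition by first and last visits to $X$ is well defined, that the segments before the first and after the last visit to $X$ really do stay in $V_1$ and consist of $E_1$-edges — this is precisely where the cut-vertex hypothesis together with $E_1\subset E_B$ is used, as it is what makes those segments reversible without changing their weight — that reversed $E_1$-edges are again edges of the graph lying in $E_1$, and that $\Phi$ and its inverse are mutually inverse. Self-loops cause no trouble, since a self-loop has symmetric $B$-weight trivially and no self-loop at $X$ can occur inside $w^{(1)}$ or $w^{(2)}$. One should also record at the outset the easy facts that $\mathcal G[E_1]$ is connected and contains $X$, so that both the decomposition and the reversal step are meaningful.
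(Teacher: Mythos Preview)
Your argument is correct and follows the same overall strategy as the paper's main proof: pass to $B=S^{-1}AS$, reduce to $(12)$-pathwise symmetry of $B$ via Proposition~\ref{prop:pdb and ps} and Lemma~\ref{lem:PDB graph}, and exploit that $B$ is symmetric on $E_1$-edges. The difference is in how walks visiting $X$ are handled. The paper writes such a walk as $w_{(1,X)}\oplus c_X\oplus w_{(X,2)}$ around an $(X,X)$-cycle $c_X$, pairs $w$ with the walk $\overline w$ in which $c_X$ is reversed, and matches the pair $\{w,\overline w\}$ against $\{w^*,\overline w^*\}$; the case of several crossings of $X$ is only sketched, and the claim that a walk containing $X$ must cross it at least twice is not quite accurate (a walk staying entirely in $\mathcal G[E_1]$ can visit $X$ once). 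Your decomposition by \emph{first and last} visits to $X$ is cleaner: it is unique for every walk visiting $X$, produces a single weight-preserving bijection $\Phi$ rather than a pairing-of-pairs, uniformly handles any number of visits to $X$, and makes transparent exactly where the cut-vertex hypothesis and $E_1\subset E_B$ enter. The paper also supplies an alternative proof via a block decomposition of $A$ and Laplace transforms, which is a genuinely different route from yours.
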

\begin{proof}
Let $A \in \mathcal A_{E_N, E_B} $ and let $N$ be the associated (normalized) steady state. 
Define the matrix as $B:= S^{-1} A S$, where 
\[
S:= \operatorname{diag} \left( \{\sqrt{N_j }\}_{j=1}^L \right). 
\]
Notice that by definition $B \in \mathcal A_{E_N, E_B}$. Consider the graph $\mathcal G(B) $ associated with $B$ and $\ker(B)=\ker(B^T) = \operatorname{span}(v) $ with $v_i = \sqrt{N_i} $ for every $i = 1, \dots, L$. 
We will prove that $B$ is pathwise symmetric. By Proposition \ref{prop:pdb and ps} this will imply that $A$ satisfies the pathwise detailed balance condition. 
The graph $\mathcal G(B) $ has a cut vertex $X$ by assumption. Moreover, $E_1 \subset E_B $, where $E_1 $ is as in Definition \ref{def:articulation} and $1,2 \in\mathcal G[E_1 ]$. 

As a consequence, if the walk $w \in W^{(n)}_{12}$ for some $n\geq 1 $ is such that $X \notin w$ this implies that $ \forall e \in w $, we have $ e \in  E_B$. 
The same holds for its reverse walk $w^* \in W^{(n)}_{21} $. 
Therefore, if $X \notin w$, then 
\[ 
\textbf{a}_B (w) = \textbf{a}_B (w^*). 
\]
Therefore, for every $n \geq 1 $
\[ 
\sum_{ w\in W_{12}^{(n)}, \  X \notin w } \textbf{a}_B (w) =\sum_{w^*: w\in \Pi_{12}^{(n)}, \  X \notin w  } \textbf{a}_B (w^*)  =\sum_{w\in \Pi_{21}^{(n)}, \  X \notin w } \textbf{a}_B (w). 
\]
To conclude, we prove that the same holds for walks that contain the cut vertex $X$ , hence we need to prove that 
\begin{equation}\label{eq:path with articulation}
    \sum_{w \in W_{12}^{(n)},\  X \in w} \textbf{a}_B (w)= \sum_{w \in W_{21}^{(n)}, \ X  \in w} \textbf{a}_B (w). 
\end{equation}
To this end we consider a walk $w \in W_{12}^{(n)}$ with $ X \in w$. By the definition of cut vertex, since $1, 2 \in E_1$ we have that $w$ must pass through $X$ at least two times.
Hence there exists a $(X,X)$-section of $w$ that has length larger or equal than one. In other words this $(X,X)$-section of $w$ is a cycle. Hence the walk $w $ contains at least a cycle $c_X$ such that $X \in c_X$. 

We assume, for the moment, that all the walks $w$ such that $X \in w $ of length less than $L-1$ contain only one cycle, we denote it with $c_X$, i.e.~$w=w_{(1, X) } \oplus c_X \oplus w_{(X, 2)}$. Let us define the walk $\overline w $ as  $\overline{w} =w_{(1, X) }  \oplus c_X^* \oplus  w_{(X, 2)}$ and the walk $\overline{w}^*=w^*_{(1, X) }  \oplus c_X \oplus  w^*_{(X, 2)}$, where we recall that we denote with $w^*$ the reverse of the walk $w$. Notice that $\overline w \in W_{12}^{(n)} $ and that $ w^*, \overline w^* \in W_{21}^{(n)} $. 
 
We recall that by the definition of stability class with cut vertices (Definition \ref{def:Stability class with cut vertex}) we have that every $ e \in w_{(1, X)} $ is such that $e \in E_B$ and every $ e \in w_{(X, 2)}$ is such that $ e \in E_B$. As a consequence $\textbf{a}_B( w_{(X,1)} ) = \textbf{a}_B( w_{(1, X)}) $ and $\textbf{a}_B( w_{(2, X } )=\textbf{a}_B( w_{(X,2)} )$. 
 Hence, we have that 
\begin{align*}
    \textbf{a}_B( w ) + \textbf{a}_B (\overline w ) &= \left(  \textbf{a}_B( c^*_X)  +  \textbf{a}_B( c_X) \right) \textbf{a}_B( w_{(1,X)} ) \textbf{a}_B( w_{(X,2)} ) \\
    &=  \left(  \textbf{a}_B( c^*_X)  +  \textbf{a}_B( c_X) \right) \textbf{a}_B( w_{(X,1)} ) \textbf{a}_B( w_{(2,X)} ) 
    = \textbf{a}_B( w^* ) + \textbf{a}_B (\overline w^* ).  
\end{align*}
Hence, 
\begin{align*}
    \sum_{ w \in W_{12}^{(n)}: c_X \in w } \textbf{a}_B( w ) + \sum_{ w \in W_{12}^{(n)}: c^*_X \in w } \textbf{a}_B ( w )  
    = \sum_{ w \in W_{21}^{(n)}: c^*_X \in w } \textbf{a}_B( w ) + \sum_{ w \in W_{21}^{(n)}: c_X \in w } \textbf{a}_B ( w ).  
\end{align*}
This argument can be repeated for walks crossing $X$ more than twice, namely $2m $ times for $m \geq 2 $. Implying  
 $\langle e_1, B^n e_2 \rangle = \langle e_2, B^n e_1 \rangle$ for every $n \in \mathbb N$. 
 Therefore, by the definition of the matrix $B$ this implies that the matrix $A$ satisfies the pathwise detailed balance condition.  
\end{proof}
We provide now an alternative proof of Theorem \ref{prop:stable DB}. While in the proof above we used the topological structure of the graph with a cut vertex, in the following proof we will use the structure of the matrix $A$ that corresponds to the graph with a cut vertex. 
Since the Theorem has been already proven above in this proof we consider only the case in which the matrix $A$ has steady state $N = (1, \dots ,1)/L$ and we write only the main ideas behind the proof without technical details.
\begin{proof}[Alternative proof of Theorem \ref{prop:stable DB}]
Let $X \in \Omega $ be a cut vertex. 
Let $\alpha$ and $\beta$ be the vertices of the graphs $\mathcal G[E_1]$, $\mathcal G[E_2]$ in which $X$ divides the graph $\mathcal G $.
Moreover we know that $E_1 \subset E_B $. 
Therefore, using the notation \eqref{A alpha alpha}, the matrix $E_{\alpha \cup \{X\}  } $ satisfies the detailed balance condition.
 This implies that $E_\alpha^T = E_\alpha $ and that $A_{\alpha X} = A_{X\alpha}$. 
As a consequence we can rewrite the system of ODEs \eqref{eq:ODE} as 
    \begin{align*}
   \frac{d n_\alpha }{dt} =& E_{\alpha } n_\alpha - C_{\alpha} n_\alpha +  A_{\alpha {X} } n_X   \\
   \frac{d n_X}{dt} =& A^T_{\alpha X } n_\alpha + A^T_{X \beta} n_\beta   -  \sum_{i \in \alpha} A_{i X } n_X - \sum_{i \in \beta} A_{i X } n_X  \\
   \frac{d n_\beta}{dt} =& E_{\beta } n_\beta - C_{\beta} n_\beta +  A_{\beta {X} } n_X
    \end{align*} 
where (up-on reordering) $n=(n_1, n_X, n_2)$ and where $n(0)=(n_0, 0, \textbf{0})$.

We solve the system of ODEs by performing the Laplace transform to all the terms and we obtain that for $z>0$
\begin{align*}
  \left( z - (E_\alpha - C_\alpha )  \right) \hat{n}_\alpha (z) &= n_0 + A_{\alpha X } \hat{n}_X (z) \\
  z \hat{n}_X (z) &= A^T_{\alpha X} \hat{n}_\alpha (z) +  A_{X\beta} \hat{n}_\beta  (z)-  a \hat{n}_X \\
   \left( z - (E_\beta - C_\beta )  \right) \hat{n}_\beta (z) &=  A_{\beta X } \hat{n}_X (z) \\
\end{align*}
where $a = \sum_{i \in \alpha } A_{i X} + \sum_{ i \in \beta } A_{i X} $. 

Solving these equations we deduce that for $z>0$
\begin{equation} \label{response function cut vertex}
\left( z - (E_\alpha - C_\alpha) - \frac{ (1+  A_{X \beta} \lambda(z))  A_{\alpha X} \otimes A_{\alpha X} }{z+a}\right) \hat{n}_\alpha (z) = n_0
\end{equation}
where
\[
\lambda(z):= \frac{1}{z+a} \left( z- E_\beta+ C_\beta - \frac{ A_{\beta X} \otimes A_{X\beta } }{z+a}\right)^{-1} A_{\beta X} 
\]
Now notice that the matrix 
\[
z - (E_\alpha - C_\alpha) - \frac{ (1+  A_{X \beta} \lambda(z))  A_{\alpha X} \otimes A_{\alpha X} }{z+a}
\]
is symmetric. Hence the pathwise detailed balance condition holds for every couple of vertices in $\alpha $. 
\end{proof}
The interest of this alternative proof is that it shows that the existence of a cut vertex is connected with a specific form of the response function, namely the Laplace transform of the response function satisfies \eqref{response function cut vertex}. It would be interesting to understand if all the response function of this form correspond to networks with a cut vertex. This would allow to detect the existence of a cut vertex using only the response functions.  

Theorem \ref{prop:stable DB} immediately implies that the matrices in a stability class with cut vertices satisfy the pathwise detailed balance condition in a stable way. 
\begin{corollary}\label{cor:Stability class}
    Let $ \mathcal A_{E_N, E_B}= \mathcal{C}_{E_N, E_B} $ be a stability class with cut vertices. Consider any matrix $A\in \mathcal C_{E_N, E_B}$ that satisfies the pathwise detailed balance. Then the pathwise detailed balance condition is stable in the class $\mathcal C_{E_N, E_B}$ with respect to the topology $\mathcal T(E_N, E_B)$.
\end{corollary}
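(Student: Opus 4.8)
The plan is to observe that Corollary \ref{cor:Stability class} is an essentially immediate consequence of Theorem \ref{prop:stable DB}, so the proof amounts to unwinding Definition \ref{def:stability} against the content of that theorem. First I would recall the standing hypothesis that $\mathcal A_{E_N, E_B} = \mathcal C_{E_N, E_B}$ is a stability class with cut vertices in the sense of Definition \ref{def:Stability class with cut vertex}; hence there is a fixed partition $E = E_1 \cup E_2$ with $E_1 \subset E_B$ (say), and one may fix two states $1, 2 \in \mathcal G[E_1]$, these being determined by the class rather than by any individual matrix. By Theorem \ref{prop:stable DB}, the property $(P)$ = ``satisfies the $(12)$-pathwise detailed balance condition'' then holds for \emph{every} element of $\mathcal C_{E_N, E_B}$.

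Next, given any $A \in \mathcal C_{E_N, E_B}$ satisfying $(P)$, I would exhibit as the open set required by Definition \ref{def:stability} the whole class $\mathcal C_{E_N, E_B}$ itself: it is an open set of the topology $\mathcal T(E_N, E_B)$ (it is the ambient space of that topology), it contains $A$, and by the previous paragraph every $\bar A$ in it satisfies $(P)$. Equivalently, one may take $\mathcal O = B_\delta(A)$ as in \eqref{eq:ball in a class} for an arbitrary $\delta > 0$. Either way, $(P)$ is stable in the class $\mathcal C_{E_N, E_B}$ with respect to $\mathcal T(E_N, E_B)$, which is the claim.

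There is no genuine obstacle in the argument; the only point requiring care is that the pair of states witnessing pathwise detailed balance must be the \emph{same} for $A$ and for its perturbations, and this is exactly what Theorem \ref{prop:stable DB} delivers, since it produces the pair uniformly as any $1,2 \in \mathcal G[E_1]$ with $\mathcal G[E_1]$ fixed by $E_N, E_B$. It is worth noting — and I would remark on this — that the analogous statement for the weaker topology $\mathcal T_w(E_B)$ of \eqref{eq:alternative ball in a class} is not covered by this reasoning: a perturbation there may create new edges outside $E_B$ joining $\mathcal G[E_1]$ and $\mathcal G[E_2]$, destroying the cut vertex, so the hypotheses of Theorem \ref{prop:stable DB} need no longer hold. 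This is precisely why the corollary is stated only for $\mathcal T(E_N, E_B)$.
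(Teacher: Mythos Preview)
Your proposal is correct and matches the paper's treatment: the paper states the corollary as an immediate consequence of Theorem \ref{prop:stable DB} without writing out a separate proof, and your unwinding of Definition \ref{def:stability} against that theorem is exactly the intended argument. Your additional remark about why the weaker topology $\mathcal T_w(E_B)$ is excluded is a helpful clarification that the paper leaves implicit.
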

In particular Corollary \ref{cor:Stability class} shows that there exists a stability class of matrices satisfying the pathwise detailed balance condition and that do not satisfy the detailed balance condition. 
So if the architecture of a graph is known and belongs to the stability class with cut vertices, then it is not possible to understand if the system satisfies the detailed balance condition having information only about the two measurements $R_{12} $ and $R_{21} $.

\begin{remark}
We stress that the assumption that $ E_1 \subset E_B $ in Theorem \ref{prop:stable DB} is necessary. 
Indeed consider the matrix 
\[
A:= \left(
\begin{matrix}
    & -3 & 1 & 1 &1 & 0 \\
    &1 &-3&1& 1 &0 \\
    &1&1 &-3 & 2 & 0 \\
    & 1 & 1 & 1 & -5 & 1 \\
    &0 & 0 & 0 & 1 & -1
\end{matrix}
\right),
\]
see Figure \ref{fig:cut vertex} for the corresponding graph. The normalized steady state of $A$ is $N= \frac{1}{24} (5,5,6,4,4)$. 
Instead the energy vector of the spanning tree with edges 
\[
\{ (1,3), (1,2), (2,4), (4,5) \} 
\]
is $\mu =\frac{1}{5}(1,1,1,1,1) $. Hence by Corollary \ref{cor:DB implies invariant measures of subtree are the same} $A$ does not satisfy the detailed balance condition. 
However $A$ satisfies the $(12)$-pathwise detailed balance.
Nevertheless, even if the graph $\mathcal G(A)$ has a cut vertex, the pathwise detailed balance property is unstable for $A$. 
    \begin{figure}[ht]
	\centering
	\includegraphics[width=0.3\linewidth]{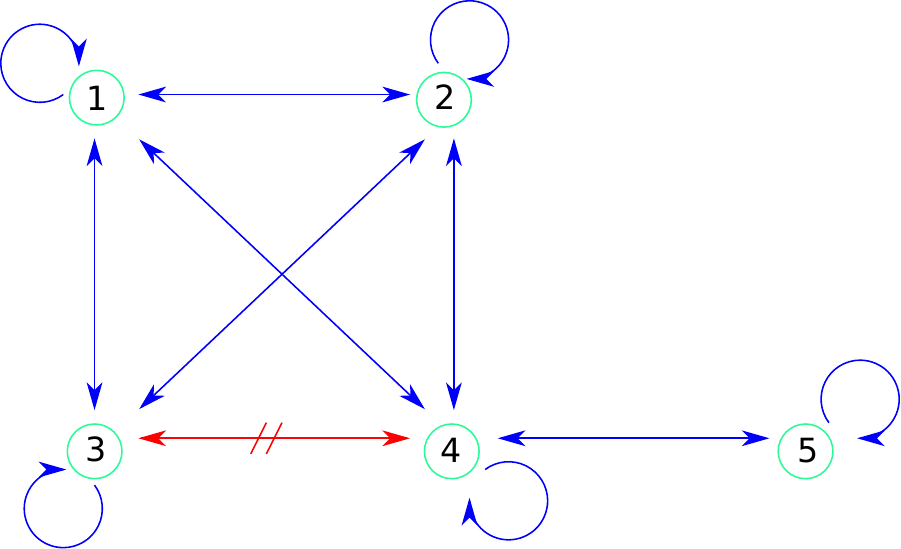}
	\caption{Graph corresponding to the matrix $A$. In red and marked with the $//$ symbol we have the only reaction without detailed balance of the network. 
 }
	\label{fig:cut vertex}
\end{figure}
\end{remark}

\section{Stable pathwise detailed balance without detailed balance implies the existence of a cut vertex} \label{sec:stable pdb implies cut vertex}
In this section we prove that the graph $\mathcal G(A) $ of a matrix $A\in \mathcal{A}$ has necessarily a cut vertex, if $A$ satisfies the pathwise detailed balance condition in a stable manner, but does not have detailed balance.

To prove this result we assume that $A$ does not satisfy the detailed balance condition, but it satisfies the pathwise detailed balance property. Moreover, we will assume that $\mathcal G(A)$ does not have a cut vertex. 
We will then construct a perturbation $D$ of the matrix $A$ such that $A + D $ does not satisfy the pathwise detailed balance condition. This implies that the pathwise detailed balance property is unstable and concludes the proof. 

To construct the aforementioned perturbation we need to prove that if $\mathcal G(V, E)$ is an undirected graph without cut vertices, there exists always a path that connects two vertices $I, F \in V $ and that contains a given vertex $\alpha=(P, Q) $. 
This will be the object of Section \ref{sec:path}.

\subsection{Construction of a suitable path in a graph without cut vertices}
\label{sec:path}
Since in this section we are interested in the topological properties of a graph we will always consider undirected graphs. 
Given $I,F \in V $ and $\alpha \in E $, let us denote with $\Pi_\alpha (I, F)$ the set of paths with origin $I$ and terminus $F$ that contain the edge $\alpha=(P,Q).$
The aim of this section is to prove that $\Pi_\alpha (I, F)\neq \emptyset$.
Notice that since the graph is undirected $\alpha = \alpha^*=(Q, P )$, hence $\Pi_\alpha (I, F) = \Pi_{\alpha^*} ( F, I)$.  

\begin{theorem} \label{thm:no articulation point, then path}
Assume $\mathcal G =(V, E) $ is an undirected connected graph that does not contain cut vertices. Let $\alpha =(P, Q) \in E $. 
For every $I, F \in V $ with  $I \neq F$, we have that 
\[
\Pi_\alpha (I, F) \neq \emptyset. 
\]
\end{theorem}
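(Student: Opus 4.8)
The plan is to reduce the statement to the standard theory of $2$-connected graphs. Recall that a connected graph with at least three vertices has no cut vertex if and only if it is $2$-connected, i.e.\ any two vertices lie on a common cycle; this is the content of Whitney's theorem (see \cite{Bondy1976graph}). Actually the cleanest tool here is a refinement: in a $2$-connected graph, for any two vertices $u,v$ and any edge $e$, there is a cycle through both $u$ and $e$, and more to the point, there is a path from any prescribed vertex to another prescribed vertex passing through any prescribed edge. So the first step I would take is to fix notation: let $\alpha=(P,Q)$ and $I,F\in V$ with $I\neq F$, and reduce to the case where $I,F,P,Q$ are four (not necessarily distinct) vertices; if $I$ or $F$ coincides with $P$ or $Q$ the argument simplifies, so assume first they are all distinct.

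The core step is to build the path by a two-stage application of Menger's theorem (or, equivalently, the fan lemma / $2$-connectedness). First, subdivide the edge $\alpha=(P,Q)$ by inserting a new vertex $m$, so $\alpha$ becomes the two edges $(P,m)$ and $(m,Q)$; the subdivided graph $\mathcal G'$ is still $2$-connected, since subdividing an edge never creates a cut vertex in a graph that had none. Now it suffices to find an $(I,F)$-path in $\mathcal G'$ that passes through $m$, because such a path necessarily uses both $(P,m)$ and $(m,Q)$, hence traverses the original edge $\alpha$; deleting $m$ and reinstating $\alpha$ then yields the desired path in $\mathcal G$. To find an $(I,F)$-path through $m$: since $\mathcal G'$ is $2$-connected, between $m$ and the set $\{I,F\}$ there exist, by the fan version of Menger's theorem, two paths from $m$ to $\{I,F\}$ that are internally disjoint and meet $\{I,F\}$ only at their endpoints, one ending at $I$ and one ending at $F$. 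Concatenating these two paths at $m$ produces exactly an $(I,F)$-path through $m$. This shows $\Pi_\alpha(I,F)\neq\emptyset$.

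For the degenerate cases: if $\{I,F\}\cap\{P,Q\}\neq\emptyset$, say $I=P$, then I would instead use that $\mathcal G$ is $2$-connected to find a cycle through the edge $\alpha$ and the vertex $F$ (every edge and every vertex of a $2$-connected graph lie on a common cycle — again a standard consequence of Menger), and then cut that cycle open at $F$ and at $I=P$ to extract an $(I,F)$-path containing $\alpha$; one has to check that the arc chosen does contain $\alpha$, which is arranged by choosing, of the two arcs of the cycle between $P$ and $F$, the one traversing $\alpha$. The case $I=P, F=Q$ is immediate since the edge $\alpha$ itself is an $(I,F)$-path, and the remaining sub-cases $I=Q$, $F=P$, $F=Q$ are symmetric.

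The main obstacle I anticipate is not conceptual but bookkeeping: making sure the paths produced really are paths (no repeated vertices) rather than merely walks, and that the gluing at $m$ does not accidentally reuse a vertex — this is exactly what the \emph{internally disjoint} conclusion of Menger's fan lemma guarantees, so the proof hinges on invoking that form rather than bare connectivity. If the authors prefer a self-contained argument avoiding a citation to Menger, the alternative is an induction on $|E|$ using the ear decomposition of $2$-connected graphs (every $2$-connected graph is built from a cycle by successively adding ears); the edge $\alpha$ lies in some ear or the base cycle, and one threads the path through that ear, but this is more laborious than the Menger route. I would go with the Menger/fan-lemma argument as the primary proof.
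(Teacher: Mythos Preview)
Your argument is correct and considerably shorter than the paper's. The subdivision trick together with the fan version of Menger's theorem is exactly the right tool: once $m$ is inserted, the two internally-disjoint $m$--$\{I,F\}$ paths must leave $m$ along its two distinct edges $(m,P)$ and $(m,Q)$, so their concatenation traverses $\alpha$ after un-subdividing. The degenerate cases are in fact already covered by the same argument (since $m\notin\{I,F\}$ regardless of whether $I$ or $F$ coincides with $P$ or $Q$), so you could even drop that paragraph; the only genuinely separate case is $|V|=2$, which is trivial.

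The paper takes a very different, bare-hands route: it first shows (Lemma~\ref{lem:a path from X to alpha exists}) that at least one of the four path families $\mathcal P_{\{F,Q\}}(I,P)$, $\mathcal P_{\{F,P\}}(I,Q)$, $\mathcal P_{\{I,Q\}}(F,P)$, $\mathcal P_{\{I,P\}}(F,Q)$ is nonempty, then handles the situation where some of these are empty (Lemmas~\ref{lem:compl nonempty} and~\ref{lem:path when empty sets}), and finally, when all four are nonempty, runs an induction on the minimum length $D(I,F)$ of a path in their union, with an elaborate case analysis on how an inductively-obtained path from a neighbour $X$ of $I$ can be rerouted. Your approach trades this case analysis for a single citation to Menger; the paper's approach is fully self-contained but substantially longer. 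Either is a valid proof of the theorem; yours is the standard graph-theoretic one.
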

Given the vertices $X,Y$ we use the notation $\mathcal P_{ \overline V } (X,Y)$ to denote the set of the paths connecting $X$ and $Y$ that do not contain the vertices in $\overline V \subset V$. 
\begin{lemma} \label{lem:a path from X to alpha exists}
 Assume $\mathcal G =(V, E) $ is an undirected connected graph that does not contain cut vertices. Let $\alpha =(P, Q) \in E $. 
 For every $I, F \in V $ with $I \neq F$ we have that at least one of the sets
 \begin{equation} \label{four sets}
        \mathcal P_{ \{ F, Q\}  } (I, P), \, \mathcal P_{ \{ F ,P\}  }  (I, Q), \, \mathcal P_{ \{ I, Q\}  } (F, P), \, \mathcal P_{ \{ I, P\}  } (F, Q)
 \end{equation}
 is non empty.  
\end{lemma}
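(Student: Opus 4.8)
The plan is to prove Lemma \ref{lem:a path from X to alpha exists} by contradiction, leveraging the hypothesis that $\mathcal G$ has no cut vertex together with Menger's theorem (or its immediate consequence: a graph with at least three vertices is $2$-connected if and only if it has no cut vertex, which is equivalent to saying that every two vertices lie on a common cycle). First I would observe that the four sets in \eqref{four sets} exhaust, up to symmetry, the ways a path can "approach" the edge $\alpha=(P,Q)$ from either $I$ or $F$: a path from $I$ to $P$ avoiding $Q$ (so it can be extended across $\alpha$), a path from $I$ to $Q$ avoiding $P$, and the two analogous statements with $F$ in place of $I$. So the claim is really that not all four can simultaneously fail.

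Assume for contradiction that all four sets are empty. Emptiness of $\mathcal P_{\{F,Q\}}(I,P)$ means every path from $I$ to $P$ must pass through $F$ or through $Q$; similarly for the other three. The key step is to extract structural consequences. Since $\mathcal G$ is $2$-connected, for any two distinct vertices $u,v$ there are two internally disjoint $(u,v)$-paths. Apply this to the pair $I,P$: there are two internally disjoint $I$–$P$ paths, and by assumption each must hit $\{F,Q\}$; since the paths are internally disjoint, one of them goes through $F$ (and avoids $Q$) and the other through $Q$ (and avoids $F$), or one of them has both $F$ and $Q$ as internal vertices. I would then do the same for $I,Q$, for $F,P$, and for $F,Q$, and combine the resulting "separator" statements. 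The cleanest route is probably to argue that the assumed emptiness forces $\{P,Q\}$ together with one of $I$ or $F$ to behave like a cut set isolating something, or more precisely that some single vertex among $I,F,P,Q$ becomes a cut vertex — contradicting the hypothesis. Concretely: if every $I$–$P$ path avoiding $Q$ must use $F$, and every $F$–$P$ path avoiding $Q$ must use $I$, then in the graph $\mathcal G - Q$ the vertices $I$ and $F$ "separate" $P$ from the rest in a way that, chased carefully, produces a cut vertex of $\mathcal G$; and the remaining assumptions handle the symmetric case where $P$ is replaced by $Q$.

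An alternative and perhaps more robust line I would pursue in parallel: delete the edge $\alpha$ (but keep its endpoints). Either $\mathcal G - \alpha$ is still connected, in which case I can reason about how paths in $\mathcal G - \alpha$ reach $P$ and $Q$; or $\mathcal G - \alpha$ is disconnected, which (since $\mathcal G$ has no cut vertex, so $\alpha$ is not a bridge unless $|V|=2$) cannot happen for $|V|\ge 3$ — so $\alpha$ lies on a cycle $C$. Using the cycle $C$ through $\alpha$ and the $2$-connectivity-provided disjoint paths from $I$ and from $F$ to $C$ (via a fan/Menger-type argument), I would locate where the $I$-path and the $F$-path first meet $C$: depending on whether they enter $C$ on the $P$-side or the $Q$-side of the arc avoiding $\alpha$, one directly reads off membership in one of the four sets. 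The main obstacle I anticipate is the bookkeeping of which endpoint ($P$ or $Q$) and which source ($I$ or $F$) gets excluded in each of the four cases, and handling the degenerate situations where $I$ or $F$ coincides with $P$ or $Q$, or where $I,F$ lie on the cycle $C$ itself; these need to be dispatched separately but are each short. Once this lemma is in hand, Theorem \ref{thm:no articulation point, then path} should follow by concatenating the path provided by the lemma with the edge $\alpha$ and a second path from the other endpoint of $\alpha$ to the remaining terminal, again invoking $2$-connectivity to keep the pieces internally disjoint.
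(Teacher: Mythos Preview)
Your approach is workable but considerably more elaborate than necessary. The paper's proof is essentially two lines: since $F$ is not a cut vertex, $\mathcal G - F$ is connected, so there exists an $(I,P)$-path $\pi$ avoiding $F$. If $\pi$ also avoids $Q$, then $\pi \in \mathcal P_{\{F,Q\}}(I,P)$; if $\pi$ contains $Q$, then $Q$ is an internal vertex of $\pi$ (since $P$ is the terminus and $P \neq Q$), so the $(I,Q)$-section of $\pi$ avoids both $F$ and $P$ and hence lies in $\mathcal P_{\{F,P\}}(I,Q)$. This in fact proves the stronger statement that one of the \emph{first two} sets in \eqref{four sets} is always nonempty --- the other two are not needed, and neither is Menger's theorem.

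Your proposal instead invokes $2$-connectivity and Menger to run a contradiction argument on all four sets simultaneously. This can be pushed through (your sketch in $\mathcal G - Q$ is essentially right: if every $I$--$P$ path avoiding $Q$ uses $F$ and every $F$--$P$ path avoiding $Q$ uses $I$, then taking an $I$--$P$ path in $\mathcal G - Q$, its $(F,P)$-section would have to revisit $I$, which is impossible in a path; so no such path exists and $Q$ is a cut vertex), but the machinery is disproportionate to the task. The bookkeeping and degenerate-case analysis you flag as the main obstacle are real and entirely avoidable. The cycle-plus-fan alternative is likewise heavier than needed. The simple observation you are missing is that removing a single non-cut vertex preserves connectivity, and that alone suffices.
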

\begin{proof} 
Since $\mathcal G $ does not contain cut vertices the vertex $F$ is not a cut vertex. Hence the graph $\mathcal G_F:= (V_F, E_F) $ with $V_F= V \setminus \{ F\} $ and $E_F=\{ e=(e_1, e_2)  \in E : e_1, e_2  \in V_F\} $ is connected. 
By connectivity the set of the paths going from $I$ to $P$ and not containing $F$ and the set of the paths going from $I$ to $Q$ that do not contain $F$ are nonempty. 
If there exists a path $\pi$ connecting $I$ with $P$ that does not contain $F$, but contains $\alpha$, i.e.~$\pi=I \dots Q\alpha P $, then the $(I,Q)$ section of $\pi$ in $ \mathcal{P}_{\{F,P\}}(I,Q) $. Otherwise, $\mathcal P_{\{ F,Q\} }(I, P) $ is not empty.
Hence, one of the two sets $\mathcal P_{\{ F,Q\} }(I, P) $ or $\mathcal P_{\{ F,P \} }(I, Q) $ is not empty. 
\end{proof}

\begin{lemma} \label{lem:compl nonempty}
     Assume $\mathcal G =(V, E) $ is an undirected connected graph that does not contain cut vertices. Let $\alpha =(P, Q) \in E $. 
Assume that for some $I,F \in V$, with $I \neq F$ it holds that $ \mathcal P_{\{ F, Q \}} (I, P)\neq \emptyset$ and $\mathcal P_{\{ F, P \}} (I, Q)=\emptyset$.  
Then $\mathcal P_{\{ P, I \}}(Q, F)\neq \emptyset$. 
\end{lemma}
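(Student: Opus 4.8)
The plan is to use the absence of cut vertices to produce an $I$--$Q$ path avoiding $P$, to force it through $F$ by means of the emptiness hypothesis, and then to truncate it. Concretely: since $\mathcal G$ has no cut vertex, $P$ in particular is not a cut vertex, so the graph $\mathcal G\setminus\{P\}$ obtained by deleting $P$ together with all edges incident to it is connected. As $(P,Q)\in E$ we have $Q\neq P$, and in the generic situation $I\neq P$ as well; the coincidences $I=P$ and $F=Q$ and similar ones can be disposed of directly, since in each of them the conclusion reduces to the connectivity of $\mathcal G\setminus\{P\}$ together with the trivial observation $Q\notin\{P,I\}$. So, generically, both $I$ and $Q$ are vertices of $\mathcal G\setminus\{P\}$, and hence there is a path $\rho$ from $I$ to $Q$ none of whose vertices equals $P$.

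Next I would argue that $\rho$ necessarily passes through $F$: if $F$ were not among the vertices of $\rho$, then $\rho$ would be a path from $I$ to $Q$ avoiding both $F$ and $P$, i.e.~an element of $\mathcal P_{\{F,P\}}(I,Q)$, contradicting the hypothesis that this set is empty. Thus $F$ is a vertex of $\rho$, and I may consider the $(F,Q)$-section $\rho_{(F,Q)}$ of $\rho$, which is a path from $F$ to $Q$. Its vertex set is contained in that of $\rho$, so it still avoids $P$; moreover, since the vertices of the path $\rho$ are pairwise distinct, $I$ is its first vertex, and $F\neq I$, the vertex $I$ occurs strictly before $F$ along $\rho$ and therefore does not appear in $\rho_{(F,Q)}$. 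Hence $\rho_{(F,Q)}$ is a path from $F$ to $Q$ avoiding $\{P,I\}$, and since $\mathcal G$ is undirected its reversal is a path from $Q$ to $F$ with the same property, so $\mathcal P_{\{P,I\}}(Q,F)\neq\emptyset$.

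The argument is short and I do not anticipate a genuine obstacle; the only point demanding care is the bookkeeping of which vertices the truncated path $\rho_{(F,Q)}$ still avoids, which is exactly where both the pairwise distinctness of the vertices of a path and the hypothesis $I\neq F$ get used, together with the handling of the degenerate coincidences among $I,F,P,Q$. Note that the remaining hypothesis $\mathcal P_{\{F,Q\}}(I,P)\neq\emptyset$ is needed only to rule out such degenerate coincidences (for instance it is incompatible with $I=Q$); the substance of the statement is carried by the no-cut-vertex assumption and by $\mathcal P_{\{F,P\}}(I,Q)=\emptyset$.
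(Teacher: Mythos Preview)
Your argument is correct and is essentially the contrapositive of the paper's proof: the paper assumes $\mathcal P_{\{P,I\}}(Q,F)=\emptyset$, argues that every $Q$--$F$ path not through $P$ would yield (via its $I$--$Q$ section) an element of $\mathcal P_{\{F,P\}}(I,Q)$, and concludes that $P$ is a cut vertex; you instead start directly from an $I$--$Q$ path in $\mathcal G\setminus\{P\}$, force it through $F$, and truncate. The two arguments use exactly the same ingredients (no-cut-vertex at $P$ and emptiness of $\mathcal P_{\{F,P\}}(I,Q)$) and differ only in direct versus by-contradiction packaging.
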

\begin{proof}
 Assume by contradiction that $\mathcal P_{\{I, P \} }  (Q, F)=\emptyset$. This implies that all the paths connecting $F $ with $Q$ must contain $P$. Indeed they cannot contain $I$ due to the assumption $\mathcal P_{\{ F, P\} }(I, Q) = \emptyset$. 
 Hence $P$ is a cut vertex. This is a contradiction, therefore $\mathcal P_{\{I, P \} }  (Q, F)\neq\emptyset$. 
\end{proof}

\begin{lemma} \label{lem:path when empty sets}
    Assume $\mathcal G =(V, E) $ is an undirected connected graph that does not contain cut vertices.  Let $\alpha =(P, Q) \in E $ and let $I, F \in V $ be such that $I\neq F$.
    Assume that  
     \begin{equation} \label{two empty alpha}
\mathcal P_{\{F, Q \} }(I, P)\neq \emptyset, \   \mathcal P_{\{F,P \} } (I, Q)= \emptyset,  \  \mathcal P_{\{I, P \} } (Q,F) \neq  \emptyset. 
    \end{equation}
    Then $\Pi_\alpha (I, F) \neq \emptyset$. 
\end{lemma}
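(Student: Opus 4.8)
The plan is to exhibit an explicit path from $I$ to $F$ through the edge $\alpha=(P,Q)$ by concatenating two of the paths whose existence is assumed, after checking that these two paths may be taken vertex-disjoint.

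First I would fix a path $\pi_1\in\mathcal P_{\{F,Q\}}(I,P)$ and a path $\pi_3\in\mathcal P_{\{I,P\}}(Q,F)$, available by the first and third conditions in \eqref{two empty alpha}. The candidate element of $\Pi_\alpha(I,F)$ is the concatenation $w:=\pi_1\oplus\alpha\oplus\pi_3$, a walk from $I$ to $F$ that traverses the edge $\alpha$. Since $\pi_1$ avoids $Q$ and $\pi_3$ avoids $P$, if in addition $\pi_1$ and $\pi_3$ have no common vertex, then all vertices of $w$ are distinct; moreover $\alpha$ joins $P\in\pi_1$ to $Q\notin\pi_1$ and $Q\in\pi_3$ to $P\notin\pi_3$, so $\alpha$ is an edge of neither $\pi_1$ nor $\pi_3$, whence all edges of $w$ are distinct as well. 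Thus $w$ is a path and $w\in\Pi_\alpha(I,F)$, and everything reduces to the claim $\pi_1\cap\pi_3=\emptyset$.

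The key step, and the only point where the emptiness assumption $\mathcal P_{\{F,P\}}(I,Q)=\emptyset$ enters, is the proof of this claim by contradiction. Suppose $v\in\pi_1\cap\pi_3$; since $\pi_1$ avoids $F$ and $Q$ while $\pi_3$ avoids $I$ and $P$, necessarily $v\notin\{I,F,P,Q\}$, so $v$ is an interior vertex of each of the two paths. Form the walk that follows the section $(\pi_1)_{(I,v)}$ from $I$ to $v$ and then the reverse of the section $(\pi_3)_{(Q,v)}$ from $v$ to $Q$; this is a walk from $I$ to $Q$. It avoids $P$, because the piece taken from $\pi_1$ is truncated strictly before reaching $\pi_1$'s endpoint $P$, and all of $\pi_3$ avoids $P$; and it avoids $F$, because all of $\pi_1$ avoids $F$ and the piece taken from $\pi_3$ is truncated strictly before reaching $\pi_3$'s endpoint $F$. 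Passing from this walk to a path contained in it yields an element of $\mathcal P_{\{F,P\}}(I,Q)$, contradicting the hypothesis. Hence $\pi_1\cap\pi_3=\emptyset$, and by the previous paragraph $\Pi_\alpha(I,F)\neq\emptyset$.

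I expect the main difficulty to be purely bookkeeping: keeping track of which forbidden vertices each truncated sub-walk still avoids, in particular the observation that a proper initial segment of $\pi_1$ omits its endpoint $P$ and a proper initial segment of $\pi_3$ omits its endpoint $F$. It is worth noting that the standing hypothesis that $\mathcal G$ contains no cut vertex is not used directly in this lemma; it has been invoked beforehand only to guarantee that the path sets in \eqref{two empty alpha} behave as stated, and the argument here proceeds entirely from \eqref{two empty alpha} itself.
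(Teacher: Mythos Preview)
Your proof is correct and follows essentially the same approach as the paper: both arguments show that a path from $\mathcal P_{\{F,Q\}}(I,P)$ and one from $\mathcal P_{\{I,P\}}(Q,F)$ must be vertex-disjoint, since any common vertex would allow one to splice together an $(I,Q)$-walk avoiding $\{F,P\}$ and then extract a path in $\mathcal P_{\{F,P\}}(I,Q)$, contradicting the emptiness hypothesis. Your observation that the no-cut-vertex hypothesis is not used in the argument itself is also accurate; the paper's proof likewise proceeds entirely from \eqref{two empty alpha}.
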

\begin{proof}
    Assume that \eqref{two empty alpha} holds. If there exists two paths $\gamma_1 \in  \mathcal P_{\{ F, Q \} }(I, P) $ and $\gamma_2\in  \mathcal P_{\{ I, P \} }(Q, F)$ such that $\gamma_1 \cap \gamma_2=\emptyset$. 
    Then the path $\pi =\gamma_1 \oplus P\alpha Q \oplus \gamma_2$ belongs to the set $\Pi_\alpha (I, F)$. Hence the desired result follows in this case. 
    
    Assume, instead, that for every couple of paths $(\gamma_1, \gamma_2) $ with $\gamma_1 \in  \mathcal P_{\{ F, Q \}}(I, P) $ and $\gamma_2\in  \mathcal P_{\{ I, P  \}}(Q, F)$ we have that $\gamma_1 \cap \gamma_2 \neq \emptyset$. Consider a vertex $X \in \gamma_1 \cap \gamma_2 $ and define $\overline \gamma_1 $ as the $(I,X)$-section of $\gamma_1 $ and $\overline \gamma_2 $ as  the $(X, Q)$-section of $\gamma_2 $. Then $w=\overline {\gamma}_1 \oplus \overline{\gamma}_2 $ is a walk connecting $I$ with $Q$ and does not contain $P, F$. 
    Since we can extract a path $\pi $ from the walk $w$ this implies that $P_{\{F, P \} }  (I, Q)\neq \emptyset$, leading to a contradiction. Hence there exists two paths $\gamma_1 \in  \mathcal P_{\{ F, Q \} }(I, P) $ and $\gamma_2\in  \mathcal P_{\{ I, P \} }(Q, F)$ such that $\gamma_1 \cap \gamma_2=\emptyset$. As a consequence, as shown before, it follows that $\Pi_\alpha  (I, F)\neq \emptyset$. 
\end{proof}

\begin{proof}[Proof of Theorem \ref{thm:no articulation point, then path}]
 Lemma \ref{lem:a path from X to alpha exists} implies that at least one of the four sets in \eqref{four sets} is not empty. 
 Therefore, without loss of generality we can assume that $\mathcal  P_{\{ F, Q \}} (I,P) \neq \emptyset$. 
 Now we have two possibilities, either $\mathcal  P_{\{ F, P \}} (I,Q) \neq \emptyset$ or $\mathcal  P_{\{ F, P \}} (I,Q) = \emptyset$. 
 If $\mathcal  P_{\{ F, P \}} (I,Q) = \emptyset$ holds, then Lemma \ref{lem:compl nonempty} implies that $\mathcal P_{\{ I, P \} } (Q,F) \neq \emptyset$. 
Hence Lemma \ref{lem:path when empty sets} implies that $\Pi_\alpha (I, F) \neq \emptyset$. 
If instead we have that $\mathcal  P_{\{ F, P \}} (I,Q) \neq \emptyset$, then we have three options: 
\begin{enumerate}
    \item either $\mathcal P_{\{I, P \} } (F, Q ), \, \mathcal P_{\{I, Q \} } (F, P ) \neq \emptyset $
    \item or $\mathcal P_{\{I, P \} } (F, Q ) = \emptyset, \,  \mathcal P_{\{I, Q \} } (F, P ) \neq \emptyset $
    \item or $\mathcal P_{\{I, P \} } (F, Q ) \neq \emptyset, \, \mathcal P_{\{I, Q \} } (F, P ) = \emptyset $.
\end{enumerate}
If one of the two last cases occurs, we can repeat the argument above relabelling $(I,P ) $ with $(F,Q) $ in case (c) and $(I,P) $ with $(F, P)$ in case (b). In both cases we deduce that $\Pi_\alpha (I, F) \neq \emptyset$. It remains to study the first case, in which all the sets \eqref{four sets} are non empty. 

Therefore we assume from now on that all the sets \eqref{four sets} are non empty. 
We use the following notation
\[
\mathcal P(I, F; P, Q) := \mathcal P_{\{ F, Q  \} }(I,P) \cup \mathcal P_{\{ F, P  \} } (I,Q) \cup  \mathcal P_{\{ I, P \} } (Q,F) \cup  \mathcal P_{\{ I, Q  \} }(P, F ). 
\]
Let 
   \[
   D(I,F):= \min_{p \in \mathcal P(I, F; P, Q) } \ell(p) 
   \]
We aim at proving that $\Pi_\alpha  (I, F) \neq \emptyset$ by induction over $D(I,F)$. 

   Assume that $D(I,F)=1$. 
   Then there exists a path $\pi_1$ of length $1$ in the set $ \mathcal P(I,F; P, Q) $. 
   Without loss of generality we assume that this path $\pi_1$ belongs to the set $\mathcal P_{\{ F, Q\}}(I,P)$. Hence, since $\ell(\pi_1)=1$ we have that $\pi_1=I (I,P) P.$
   By assumption we have that $\mathcal P_{\{ I,P\}} (Q,F) \neq \emptyset$. 
  Hence we can consider a $ \pi_2 \in \mathcal P_{\{ I,P\}} (Q,F)$. The path $ \pi_2$ does not contain the edge $\alpha $ and does not contain the vertices $I,P$. 
   Hence $\pi_1 \oplus P \alpha Q \oplus \pi_2$ is a path from $I$ to $F$ that contains $\alpha $.
   Hence $\Pi_\alpha  (I, F) \neq \emptyset$.

We now assume that $D(I, F; P, Q) = d+1 $ with $d \geq 1 $.
Hence there exists a path $\pi \in \mathcal P(I, F) $ such that $\ell(\pi)=d+1$. 
Without loss of generality, we assume that $\pi \in \mathcal P_{\{ F, Q \}}(I, P)$. As a consequence there exists a $X \in V $, with $X \neq P$, such that $\pi=I (I, X) X \dots P$. Notice that, since $\pi \in \mathcal P_{\{ F, Q \}}(I, P)$ we have that $X \neq F $, $X \neq Q$ and $X \neq I$.
Notice, that by construction $\mathcal P_{ \{ F, Q \} } (X, P) \neq \emptyset$, and therefore $\mathcal P (X, F ; P, Q) \neq \emptyset$. As a consequence we can compute $D(X, F) $ and we have that $D(X, F)=d $ because the $(X, P)$-section  of $\pi $ has length $d$ by construction and belongs to $\mathcal P(X, F; P, Q)$. 
The induction hypothesis implies that the set $\Pi_\alpha (X, F ) \neq \emptyset$. 
We have two possibilities
\begin{enumerate}
    \item $\forall \gamma \in \Pi_\alpha (X, F ) $ we have that $I \in \gamma $;
    \item $\exists \gamma \in \Pi_\alpha  (X, F ) $ such that $I \notin \gamma $. 
\end{enumerate}
If $(b) $ holds then $I(I,X)X \oplus \gamma$ is a path that belongs to the set $\Pi_\alpha (I, F)$ and the desired result follows. 

Assume now that $(a) $ holds. Then we have the following four possibilities 
\begin{enumerate}[(1)]
    \item there exists a $\gamma $ of the form $\gamma =\gamma_1 \oplus \gamma_2 \oplus P \alpha  Q  \oplus  \gamma_3 $ with $ \gamma_1 \in \mathcal P_{\{ P, Q, F\}}(X, I )$,  $ \gamma_2 \in \mathcal P_{\{ X, Q, F\}}( I, P  )$ and $\gamma_3 \in \mathcal P_{\{ P, I, X \}}( Q, F)$; 
      \item there exists a $\gamma $ of the form $\gamma =\gamma_1 \oplus \gamma_2 \oplus Q\alpha P \oplus  \gamma_3 $ with $ \gamma_1 \in \mathcal P_{\{ P, Q, F\}}(X, I )$, $ \gamma_2 \in \mathcal P_{\{ X, P, F\}}( I, Q  )$ and $\gamma_3 \in \mathcal P_{\{ Q, I, X \}}( P, F)$; 
                   \item there exists a $\gamma $ of the form $\gamma =\gamma_1  \oplus P \alpha Q  \oplus \gamma_2\oplus  \gamma_3 $ with $\gamma_1 \in \mathcal P_{\{I, F, Q \}}(X, P )$, $\gamma_2 \in \mathcal P_{\{ F, P, X\}  } (Q, I )$ and $\gamma_3 \in \mathcal P_{ \{ P, Q, X \} } (I, F )$;
         \item there exists a $\gamma $ of the form $\gamma =\gamma_1 \oplus  Q \alpha  P  \oplus \gamma_2 \oplus \gamma_3 $ with $\gamma_1 \in \mathcal P_{\{ P, I, F \} }  P ( X, Q )$, $\gamma_2 \in \mathcal P_{\{ Q, X, F\}} (P, I )$ and $\gamma_3 \in \mathcal P_{\{ X, P, Q\} }  P (I, F )$.
\end{enumerate}

\textbf{Analysis of the case (1) and (2).} 
Assume that $1$ holds. Then $\gamma_2 \oplus P\alpha Q  \oplus \gamma_3 \in \Pi_\alpha (I,F)$ and the desired conclusion follows. 
Similarly, if $2$ holds then $\gamma_2 \oplus Q \alpha  P  \oplus \gamma_3 \in \Pi_\alpha  (I, F)$.

\textbf{Analysis of the case (3).} 
Assume now that $3 $ holds. Then we have two possibilities 
\begin{enumerate}
    \item $\exists \beta \in \mathcal P_{\{ X, P \} } (Q, F) $ such that $ I \notin \beta$; 
    \item $\forall \beta \in \mathcal P_{\{ X, P \} } (Q, F) $ we have that $I \in \beta$. 
\end{enumerate}
Assume that $(a)$ holds. 
Then we analyse all the possible intersections of $\gamma $ with $\beta$. 
\begin{itemize}
    \item If $ \beta \cap \gamma_1 =\emptyset$, then $I (I, X) X  \oplus \gamma_1 \oplus P\alpha Q \oplus \beta \in \Pi_\alpha (I, F)$. 
     \item  If $\beta \cap \gamma_1 \neq \emptyset$ and $\beta \cap \gamma_2 =\emptyset$, then there exists a $Y \in \gamma_1 \cap \beta$ such that
     \begin{align} \label{minimality1}
         \operatorname{dist}_\beta (F, Y) = \min \{ \operatorname{dist}_\beta (F, Z) \, : \, Z\in \gamma_1\cap \beta\}.
     \end{align}
     We stress that here we are using the notation \eqref{distance between wertices on a path} for $\operatorname{dist}_\beta (F, Z)$.
     We obtain that $\gamma_2 \oplus Q \alpha P \oplus {\gamma_1}_{(P,Y)} \oplus \beta_{(Y,F)} \in \Pi_\alpha  (I, F)$. Note that $\beta_{(Y,F)} \cap {\gamma_1}_{(P,Y)} = \{Y\} $ by \eqref{minimality1}.
    \item  If $\beta \cap \gamma_1 \neq \emptyset$ and $\beta \cap \gamma_2\neq \emptyset $, then there exists a $Y_1 \in \gamma_1 \cap \beta$ and a $Y_2 \in \gamma_2 \cap \beta$ such that
\begin{align*}
    \operatorname{dist}_\beta (F, Y_1) &= \min \{ \operatorname{dist}_\beta (F, Z) \, : \, Z\in \gamma_1\cap \beta\},
\\
\operatorname{dist}_\beta (F, Y_2) &= \min \{ \operatorname{dist}_\beta (F, Z) \, : \, Z\in \gamma_2\cap \beta\}.
\end{align*}
Notice that since $\gamma $ is a path $\gamma_1 \cap \gamma_2 = \emptyset$. Then $Y_1 \neq Y_2$.
If $\operatorname{dist}_\beta (F, Y_2 ) > \operatorname{dist}_\beta (F, Y_1 )  $, then $ \gamma_2 \oplus Q \alpha P  \oplus {\gamma_1}_{(P, Y_1)} \oplus {\beta}_{( Y_1, F) } \in \Pi_\alpha (I, F)$. If instead $\operatorname{dist}_\beta (F, Y_2 ) < \operatorname{dist}_\beta (F, Y_1 )  $, then $ I(I,X) X \oplus \gamma_1 \oplus P\alpha Q  \oplus {\gamma_2}_{(Q, Y_2)} \oplus \beta_{(Y_2, F) }\in \Pi_\alpha  (I, F)$.  
\end{itemize}
In all the possible cases we have $\Pi_\alpha  (I, F)\neq \emptyset $ , hence if $(a) $ holds the result follows. 

Let us assume now that $(b) $ holds. We prove now that $(b)$ implies that for every $\eta \in \mathcal P_{\{I,P\} } (Q, F)$ we have that $X \in \eta$.  Hence  $\gamma_1 \cap \eta \neq \emptyset$.
This can be proven by contradiction. Assume that there exists $\eta \in \mathcal P_{\{ I, P \} } (Q, F)$ such that $X \notin \eta $. This implies that $\eta \in  \mathcal P_{ \{I, P, X\} } (Q, F) \subset \mathcal P_{ \{ P, X \} } (Q, F) $. Hence assumption $(b) $ implies $I \in \eta $, which is a contradiction. Therefore we have that every $\eta \in \mathcal P_{ \{I, P\} } (Q, F)$ is such that $X \in \eta$. 
In particular this implies that $\gamma_1 \cap \eta \neq \emptyset$. 
Also in this cases we have two possibilities.  
\begin{itemize}
    \item If there exists a $\eta \in \mathcal P_{\{I, P\} } (Q, F)$ such that $ \eta \cap \gamma_1 \neq \emptyset$ and $\eta \cap \gamma_2 =\emptyset$. 
    Let $ Y \in \gamma_1 \cap \eta $ be such that
      \begin{align} \label{minimality2}
         \operatorname{dist}_\eta (F, Y) = \min \{ \operatorname{dist}_\eta (F, Z) \, : \, Z\in \gamma_1\cap \eta\}.
     \end{align}   
    then $\gamma_2 \oplus Q\alpha P \oplus {\gamma_1}_{(P, Y) } \oplus \eta_{(Y, F) } \in\Pi_\alpha  (I,F)$. 
    \item  Assume every $\eta \in \mathcal P_{\{I, P\} } (Q, F)$ is such that $\eta \cap \gamma_1 \neq \emptyset$ and $\eta \cap \gamma_2 \neq \emptyset$. 
    In this case consider
     one of the paths $\eta \in  \mathcal P_{ \{ I, P \} } (Q, F)$.    Let $Y_1 \in \eta \cap \gamma_1 $ and $Y_2 \in \eta \cap \gamma_2 $
         \begin{align*}
    \operatorname{dist}_\eta (F, Y_1) &= \min \{ \operatorname{dist}_\eta (F, Z) \, : \, Z\in \gamma_1\cap \eta\},
\\
\operatorname{dist}_\eta (F, Y_2) &= \min \{ \operatorname{dist}_\eta (F, Z) \, : \, Z\in \gamma_2\cap \eta\}.
\end{align*}
Notice that since $\gamma $ is a path $\gamma_1 \cap \gamma_2 = \emptyset$. Then $Y_1 \neq Y_2$.
  If $\operatorname{dist}_{\eta} (F, Y_2 ) < \operatorname{dist}_{\eta} (F, Y_1 ) $,
    then $I (I, X) X \oplus \gamma_1 \oplus P \alpha Q \oplus {\gamma_2}_{(Q, Y_2) } \oplus \eta_{(Y_2, F )} \in \Pi_\alpha  (I,F)$. 
    If instead $\operatorname{dist}_{\eta} (F, Y_2 ) > \operatorname{dist}_{\eta} (F, Y_1 ) $, then $ \gamma_2 \oplus Q \alpha P \oplus {\gamma_1}_{(P, Y_1)} \oplus \eta_{(Y_1,F)} \in \Pi_\alpha (I,F)$.  
\end{itemize}
Hence $(b) $ implies that $\Pi_\alpha (I, F) \neq \emptyset . $
Hence the desired conclusion follows if assumption (3) holds.

\textbf{Analysis of the case (4).} This case can be analysed as the case $(3) $ relabelling $P$ with $Q$.
\end{proof}

\subsection{Stable pathwise detailed balance implies the existence of a cut vertex}
\label{sec:theorem cut vertex}

In this section we prove that stable pathwise detailed balance implies the existence of a cut vertex. 

\begin{theorem} \label{thm:stable pdb implies articulation}
Let $A \in \mathcal A_{E_N, E_B} $ and let $\mathcal G =\mathcal G(A)$ be the graph associated to $A$, where $E_N,\, E_B\subset \mathcal{S}$ are as in Definition \ref{def:class}. Assume that $A$ satisfies the pathwise detailed balance condition and assume that this property is stable in $\mathcal A_{E_N, E_B}$ with respect to the topology $ \mathcal{T}(E_N,E_B) $. Then we have two possibilities
\begin{enumerate}
    \item either the matrix $A$ satisfies the detailed balance condition; 
    \item or the graph $\mathcal G $ has a cut vertex. 
\end{enumerate}
\end{theorem}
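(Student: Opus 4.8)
The plan is to argue by contraposition: assuming that $A$ neither satisfies the detailed balance condition nor has a cut vertex in its graph $\mathcal G:=\mathcal G(A)$, I would construct arbitrarily small perturbations $A_\varepsilon\in\mathcal A_{E_N,E_B}$ violating the pathwise detailed balance condition, contradicting the assumed stability. First I would pass to the rescaled matrix. Since $A$ satisfies, say, the $(12)$-pathwise detailed balance condition, Proposition~\ref{prop:pdb and ps} gives that $B:=S^{-1}AS$, with $S=\operatorname{diag}(\{\sqrt{N_j}\})$, lies in the same class $\mathcal A_{E_N,E_B}$, is $(12)$-pathwise symmetric, and satisfies $\ker(B)=\ker(B^T)=\operatorname{span}(v)$ with $v_i=\sqrt{N_i}$. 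Because $A$ fails detailed balance, Proposition~\ref{prop:DB and Symmetry} forces $B\neq B^T$, so there is an undirected edge $\alpha=\{P,Q\}$ of $\mathcal G$ with $B_{PQ}\neq B_{QP}$; since the class constraints make $B$ symmetric on every pair of $E_B$ and zero on every pair of $E_N$, necessarily $\alpha\in E_{n-B}$, i.e.~detailed balance fails for $A$ on $\{P,Q\}$. (Conjugating $B-B^T$ by $\operatorname{diag}(v)$ produces a nonzero circulation supported on the $E_{n-B}$-subgraph, so that subgraph even contains a cycle all of whose edges are of this kind; one may take $\alpha$ to lie on such a cycle.)

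Next I would invoke Theorem~\ref{thm:no articulation point, then path} with $I=1$, $F=2$ and the edge $\alpha$: since $\mathcal G$ has no cut vertex there is a path $\pi$ from $1$ to $2$ that contains $\alpha$. Write $m:=\ell(\pi)\leq L-1$ and let $w_0,w_0^*$ be the two walks obtained by traversing $\pi$ in its two directions. By Lemma~\ref{lem:PDB graph} and Remark~\ref{remark:diff A pdb}, pathwise detailed balance is equivalent to $\Delta_n(A)=0$ for all $n$, and by Lemma~\ref{lem:L-1 powers to prove PDB} it suffices to exhibit a single $n\leq L-1$ with $\Delta_n(A_\varepsilon)\neq 0$. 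The point of running $\pi$ through $\alpha$ is made transparent by telescoping the detailed balance identities that hold along the $E_B$-edges of $\pi$: one obtains $\textbf{a}_A(w_0)/\textbf{a}_A(w_0^*)=\frac{A^P_Q}{A^Q_P}\cdot\frac{N_P N_2}{N_1 N_Q}$, which equals $N_2/N_1$ if and only if detailed balance holds on $\alpha$; thus the pair $\{w_0,w_0^*\}$ is a length-$m$ witness of the detailed balance defect, masked inside the identity $\Delta_m(A)=0$ by the remaining walks of that length.

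The remaining, and principal, task is to build the perturbation and to check that it unmasks this defect. One natural choice modifies only the rates on the edges of $\pi$: rescale both directions of each $E_B$-edge of $\pi$ by a common factor $1+O(\varepsilon)$, which preserves the detailed balance ratio on those edges, and rescale the two directions of $\alpha$ by different factors, adjusting the diagonal entries so that the matrix stays Markovian; a first-order correction of the rates on the $E_B$-edges off $\pi$ then restores detailed balance on those edges relative to the (slightly shifted) steady state, the solvability of this correction being a finite-dimensional linear-algebra fact I would deduce from the Markovian structure and the connectivity of the $E_B$-subgraph enlarged by $\pi$. Alternatively one may use a steady-state-preserving perturbation: a small circulation along the $E_{n-B}$-cycle through $\alpha$ with weights proportional to $1/N_i$ keeps $A_\varepsilon N=0$, hence automatically preserves detailed balance on $E_B$ and leaves the pairs of $E_N$ untouched. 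In both cases the genuinely delicate point — which I expect to be the real obstacle — is to verify that the perturbation actually changes $\Delta_m$ (equivalently, the difference of the sums of $(1,2)$- and $(2,1)$-walk rates of length $m$ for the rescaled matrix $B_\varepsilon$). This reduces to a combinatorial count of the length-$m$ walks between $1$ and $2$ that traverse the perturbed edges, and it is precisely here that the property that $\pi$ is a genuine path realizing such a passage is what keeps the various contributions from cancelling. Once $\Delta_m(A_\varepsilon)\neq 0$ is secured for small $\varepsilon\neq 0$, $A_\varepsilon$ fails pathwise detailed balance by Remark~\ref{remark:diff A pdb}, and the contradiction is complete.
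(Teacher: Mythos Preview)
Your overall strategy matches the paper's: assume neither detailed balance nor a cut vertex, pick an edge $\alpha=(P,Q)$ on which detailed balance fails, invoke Theorem~\ref{thm:no articulation point, then path} to obtain a path $\pi\in\Pi_\alpha(1,2)$ of some length $n$, and perturb inside the class to break $\Delta_n=0$. Where you diverge, and where the real content lies, is precisely the step you yourself flag as the ``genuinely delicate point''.

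The paper's device to bypass the walk-cancellation difficulty is to introduce $n-1$ \emph{independent} perturbation parameters, one $\varepsilon(e)$ for each edge $e\in\pi\setminus\{\alpha\}$, and to choose each elementary perturbation $D(e)$ so that it is Markovian, satisfies detailed balance with respect to the \emph{original} steady state $N$ (hence $D(e)N=0$), and touches only the pair $\{e,e^*\}$ and the two incident diagonals. Because $N$ is preserved exactly, the factors $N_1,N_2$ in $\Delta_n(\overline A)$ do not move, and membership in $\mathcal A_{E_N,E_B}$ is automatic regardless of whether the non-$\alpha$ edges of $\pi$ lie in $E_B$ or in $E_{n-B}$ (your telescoping formula tacitly assumes they are all in $E_B$, which need not hold). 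One then computes the mixed partial derivative $\bigl(\prod_{e\in\pi,\,e\neq\alpha}\partial_{\varepsilon(e)}\bigr)\Delta_n(\overline A)$: this annihilates every length-$n$ walk that misses some edge of $\pi\setminus\{\alpha\}$, and since $\pi$ is a path the only survivors from $W_{12}^{(n)}$ and $W_{21}^{(n)}$ are $\pi$ and $\pi^*$. Their contribution is a nonzero multiple of $N_{P}A^{P}_{Q}/N_{Q}-A^{Q}_{P}\neq0$. A nonzero mixed partial forces $\Delta_n(\overline A)\neq0$ for some small choice of the $\varepsilon(e)$.

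Your first proposed perturbation (rescale along $\pi$, including $\alpha$, then correct the off-$\pi$ $E_B$-edges against a shifted steady state) works with essentially a single parameter and lets $N$ move; both features destroy the clean isolation of $\pi$ in the walk sum, and the off-$\pi$ correction you appeal to is not obviously available within the class. Your second proposed perturbation (a divergence-free circulation along an $E_{n-B}$-cycle through $\alpha$) does preserve $N$, but the edges it touches lie on that cycle rather than along $\pi$, so differentiating in its parameter does not single out $\pi$ among the length-$n$ walks, and the noncancellation argument evaporates. The essential missing idea is to tailor the perturbation to the path itself, with one independent degree of freedom per edge of $\pi\setminus\{\alpha\}$ and leaving $\alpha$ untouched, so that the polynomial structure in the $\varepsilon(e)$'s permits the mixed-derivative isolation.
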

\begin{remark}
    Notice that in Theorem \ref{thm:stable pdb implies articulation} we allow perturbations merely in the stronger topology $\mathcal{T} (E_N,E_B)$. 
    As a consequence when we can also consider perturbations of the matrix $A$ in the weaker topology $\mathcal{T}_w(E_B)$.
\end{remark}
Before writing the proof of this theorem we give an informal explanation of the proof. We assume that there exists a matrix $A$ that satisfies the pathwise detailed balance condition in a stable manner and is such that the graph induced by $A$, $\mathcal G(A)$ does not have a cut vertex.  We prove that this implies that $A$ satisfies the detailed balance condition. 

We argue by contradiction. Assume that there exists an edge $\alpha $ on which the detailed balance condition does not hold. Since $A$ satisfies the pathwise detailed balance condition $\Delta_n (A)=0$ in a stable manner, we have $\Delta_n(\overline A) =0$ for $\overline A = A+ D$, where $D$ is any admissible, sufficiently small perturbation. However, we prove that $\Delta_n(\overline A) \neq0$ for a certain choice of $D$, leading to a contradiction. We now explain how we construct this suitable perturbation $D$.

Notice that the sums in $\Delta_n(\overline A)$ involve positive and negative terms. It is therefore not easy to find a perturbation avoiding cancellations. To this end, we perturb all the rates of the matrix $A$ that correspond to edges, different from $\alpha$, of a specific path $\pi\in \Pi_\alpha (1, 2)$. Theorem \ref{thm:no articulation point, then path} guarantees the existence of such a path. Furthermore, we choose the perturbation of these edges in such a way that the matrix $D$ has steady state $N$ and satisfies the detailed balance condition with respect to $N$.

Differentiating $\Delta_n(\overline A) $ with respect to the non-zero entries of $D$ allows to reduce the sums in $\Delta_n(\overline A) $  to a single term. This term turns out to be non-zero due to the fact that in $\alpha$ the detailed balance condition fails. This implies that $\Delta_n (\overline A) \neq 0 $ leading to a contradiction.

\begin{proof}[Proof of Theorem \ref{thm:stable pdb implies articulation}]
Consider a matrix $A \in \mathcal A_{E_N, E_B}$ and assume that the pathwise detailed balance condition holds in a stable manner in $\mathcal A_{E_N, E_B}$. Furthermore, assume that the graph $\mathcal G$ does not have a cut vertex. We want to prove that this implies that $A$ satisfies the detailed balance condition. 
To this end we proceed by contradiction and assume that there exists an edge $\alpha =(P, Q )$ of the graph $\mathcal G $ that does not satisfy the detailed balance condition. In other words we have that $A_{PQ } N_Q \neq A_{Q P} N_P  $, where $N$ is the normalized steady state of $A$. 

Since the graph $\mathcal{G} $ does not have cut vertices, Theorem \ref{thm:no articulation point, then path} implies that for every $I, F \in \Omega$ we have that $\Pi_\alpha  (I, F) \neq \emptyset$. 
Let us consider a path $\pi  \in \Pi_\alpha (1, 2) $ for some $1,2 \in \Omega$ with $1 \neq 2 $.
We aim at perturbing all the rates of the edges that belong to the path $\pi$. 

Given an edge $e=(e_1, e_2) \in \pi $ such that $e \neq \alpha $, we define the matrix $D(e)$ as
\begin{align}\label{eq:D DB}
    \begin{split}
        D_{ ij}(e)  &=0 \quad \forall (i, j ) \neq e,e^*, (e_1,e_1), (e_2, e_2),
        \\
        D_{e_1 e_2 }(e) &= \frac{N_{e_1}}{N_{e_2}}, \ D_{e_2, e_1}(e)=1,\ D_{e_1 e_1} = - \frac{N_{e_1}}{N_{e_2}}, \ D_{e_2 e_2 }(e) =- 1.
    \end{split}
\end{align}
Notice that $\textbf{e}^T_L D(e)=0$ and that $D(e)N =0$. Therefore, also in this case $N$ is the unique steady state of $\overline A(e)= A + \varepsilon (e) D(e)$, where $\varepsilon (e) >0$. 
Moreover $\overline A(e) \in \mathcal A_{E_N, E_B}$. 

Let 
\[
D(\pi ):= \sum_{e \in \pi : e \neq \alpha } \varepsilon (e) D(e). 
\]
 
We define the perturbation $\overline A (\pi)$ of $A$ on $\pi \setminus \{ \alpha \}$ as 
\begin{equation}\label{eq:perturbed A any set}
\overline A (\pi)= A +  D(\pi). 
\end{equation}
 By construction $N$ is the unique steady state of $\overline A(\pi)$ and $\overline A(\pi) \in \mathcal A_{E_N, E_B}$.

Let $n:= \ell (\pi)$.  
Since $A$ satisfies the pathwise detailed balance condition we know that 
$\Delta_n(A) =0$, where $\Delta_n(A) $ is defined as in \eqref{differential}. 
 The fact that the pathwise detailed balance condition is stable in $\mathcal{A}_{E_N, E_B}$ implies that also the matrix $\overline A(e) \in \mathcal {A}_{E_N, E_B}$ satisfies the pathwise detailed balance condition, hence $\Delta_n( \overline A(\pi)) =0$, if all $\varepsilon (e) $ in the definition of $D(\pi)$ are sufficiently small.

 On the other hand,
 \begin{align} \label{delta A bar}
 \Delta_n (\overline A(\pi )) =  N_1 \sum_{w \in W_{12}^{(n)}} \textbf{a}_{\overline A(\pi )} (w) - N_2\sum_{w \in W^{(n)}_{21} } \textbf{a}_{\overline A(\pi ) } (w) 
 \end{align}
 where $W_{12}^{(n)}$ is the set of the walks from $1$ to $2$ of length $n$ and $\textbf{a}_{\overline A(\pi )  } (w) $ is given by \eqref{rate of walk}.  

Now notice that
\begin{align} \label{a(w)}
\textbf{a}_{\overline A(\pi )} (w) &= \prod_{ e=(e_1, e_2) \in w}  \overline{A}^{e_1}_{ e_2} (\pi ) = \prod_{ e=(e_1, e_2) \in w}  \left(A^{e_1}_{ e_2} + D^{e_1}_{e_2} (\pi ) \right) \nonumber \\
&= \prod_{ e=(e_1, e_2) \in w}  A^{e_1}_{ e_2} +  \prod_{ e=(e_1, e_2) \in w}  D^{e_1}_{e_2} (\pi ) + \sum_{E(w) \subset w } \   \prod_{ e \in E(w) }  D^{e_1}_{e_2} (\pi ) \prod_{ e \in w \setminus  E(w) }  A^{e_1}_{e_2}. 
\end{align}
Here we are using the notation $E(w) \subset w $ to say that $E(w) $ is a subset of the set of the edges of $w$. 

We now compute $\textbf{a}_{\overline A(\pi )} ( \pi ) $ applying \eqref{a(w)} to $w = \pi $
\begin{align*}
\textbf{a}_{\overline A(\pi )} (\pi)  &= \prod_{ e=(e_1, e_2) \in \pi }  A^{e_1}_{ e_2} +  \prod_{ e=(e_1, e_2) \in \pi }  D^{e_1}_{e_2} (\pi ) + \sum_{E(\pi) \subset \pi  } \ \prod_{ e \in E(\pi ) }  D^{e_1}_{e_2} (\pi ) \prod_{ e \in \pi  \setminus  E(\pi ) }  A^{e_1}_{e_2} \\
&=   \prod_{ e=(e_1, e_2) \in \pi }  A^{e_1}_{ e_2}  + \sum_{E(\pi) \subset \pi  }  \ \prod_{ e \in E(\pi ) }  D^{e_1}_{e_2} (\pi ) \prod_{ e \in \pi  \setminus  E(\pi ) }  A^{e_1}_{e_2}. 
\end{align*}
The last equality is due to the fact that $\prod_{ e=(e_1, e_2) \in \pi }  D^{e_1}_{e_2} (\pi )=0$ because $D^{\alpha_1}_{\alpha_2 }(\pi) = D^{\alpha_2}_{\alpha_1}(\pi) =0$. 
Moreover we can rewrite $\textbf{a}_{\overline A(\pi )} (\pi)$ as 
\begin{align} \label{a(pi)}
\textbf{a}_{\overline A(\pi )} (\pi ) =\textbf{a}_{ A} (\pi ) + A^{\alpha_1}_{\alpha_2} \prod_{ e \in \pi :  e \neq \alpha  }  D^{e_1}_{e_2} (\pi ) + \sum_{\{ E(\pi ) \subset \pi : E(\pi) \neq \pi \setminus \{ \alpha \}  \} } \quad  \prod_{ e \in E(\pi ) }  D^{e_1}_{e_2} (\pi ) \prod_{ e \in \pi  \setminus  E(\pi ) }  A^{e_1}_{e_2}. 
\end{align}

Similarly, 
\begin{align} \label{a(pi*)}
\textbf{a}_{\overline A(\pi )} (\pi^* ) =\textbf{a}_{ A} (\pi^* ) + A^{\alpha_2}_{\alpha_1} \prod_{ e \in \pi^* :  e \neq \alpha^*  }  D^{e_1}_{e_2} (\pi ) + \sum_{\{ E(\pi^* ) \subset \pi : E(\pi^*) \neq \pi \setminus \{ \alpha^* \} \} }  \prod_{ e \in E(\pi^* ) }  D^{e_1}_{e_2} (\pi ) \prod_{ e \in \pi  \setminus  E(\pi^* ) }  A^{e_1}_{e_2}. 
\end{align}

Now notice that, since $D(\pi) $ satisfies the detailed balance condition with respect to the steady state $N$ we have that 
\begin{align*}
N_1 A^{\alpha_1}_{\alpha_2} \prod_{ e \in \pi :  e \neq \alpha  }  D^{e_1}_{e_2} (\pi )= A^{\alpha_1}_{\alpha_2} \frac{N_{\alpha_1} N_2  }{N_{\alpha_2} }   \prod_{ e \in \pi^* :  e \neq \alpha^* }  D^{e_1}_{e_2} (\pi).
\end{align*}
Therefore 
\begin{align*}
N_1 A^{\alpha_1}_{\alpha_2} \prod_{ e \in \pi :  e \neq \alpha  }  D^{e_1}_{e_2} (\pi )- N_2 A^{\alpha_1}_{\alpha_2} \prod_{ e \in \pi^* :  e \neq \alpha  }  D^{e_1}_{e_2} (\pi )    = N_2 \prod_{ e \in \pi^* :  e \neq \alpha^* }  D^{e_1}_{e_2} (\pi) \left( \frac{N_{\alpha_1}  }{N_{\alpha_2} }   A^{\alpha_1}_{\alpha_2}-  A^{\alpha_2}_{\alpha_1} \right) \neq 0 . 
\end{align*}
As a consequence we notice that 
\begin{equation} \label{nonzero term}
\left( \prod_{e \in \pi : \alpha \neq e } \frac{d}{d \varepsilon (e) } \right) \left( N_1 \textbf{a}_{\overline A(\pi )} (\pi ) - N_2 \textbf{a}_{\overline A(\pi )} (\pi^* ) \right) = N_2  \left( \frac{N_{\alpha_1}  }{N_{\alpha_2} }   A^{\alpha_1}_{\alpha_2}-  A^{\alpha_2}_{\alpha_1} \right) \neq 0. 
\end{equation} 
Indeed we have that 
\[
\left( \prod_{e \in \pi : \alpha \neq e } \frac{d}{d \varepsilon (e) } \right) \textbf{a}_{ A} (\pi^* )= \left(  \prod_{e \in \pi : \alpha \neq e } \frac{d}{d \varepsilon (e) } \right)  \textbf{a}_{ A} (\pi)=0
\]
as well as 
\[
\left( \prod_{e \in \pi : \alpha \neq e } \frac{d}{d \varepsilon (e) } \right) \sum_{\{ E(\pi ) \subset \pi : E(\pi) \neq \pi \setminus \{ \alpha \} \} } \ \prod_{ e \in E(\pi ) }  D^{e_1}_{e_2} (\pi ) \prod_{ e \in \pi  \setminus  E(\pi ) }  A^{e_1}_{e_2}  =0
\]
and 
\[
\left( \prod_{e \in \pi : \alpha \neq e } \frac{d}{d \varepsilon (e) } \right) \sum_{\{ E(\pi^* ) \subset \pi^* : E(\pi^*) \neq \pi^* \setminus \{ \alpha^* \} \} } \ \prod_{ e \in E(\pi^* ) }  D^{e_1}_{e_2} (\pi^* ) \prod_{ e \in \pi ^* \setminus  E(\pi^* ) }  A^{e_1}_{e_2}  =0. 
\]
Notice that here we are using the important assumption that $\pi $ is a path, hence $\alpha^* \notin \pi $, since $\alpha \in \pi$. 
This is the reason why we need Theorem \ref{thm:no articulation point, then path}. 

Now notice that for every $w \in W_{12}^{(n)} $, $w \neq \pi $ there exists $e \in \pi \setminus \{ \alpha \} $ such that $ e \notin w $.
Therefore we have first of all that, if $w \neq \pi $, then 
\[
  \prod_{ e \in w  }  D^{e_1}_{e_2} (\pi ) =0, 
\]
as well as 
\[
\left( \prod_{e \in \pi : \alpha \neq e } \frac{d}{d \varepsilon (e) } \right) \sum_{E(w) \subset w }  \ \prod_{ e \in E(w) }  D^{e_1}_{e_2} (\pi ) \prod_{ e \in w \setminus  E(w) }  A^{e_1}_{e_2}=0. 
\]
Finally since $ \textbf{a}_{ A} (w)$ does not depend on $D(\pi) $ we also have that
\[
\left( \prod_{e \in \pi : \alpha \neq e } \frac{d}{d \varepsilon (e) } \right) \textbf{a}_{ A} (w)=0. 
\]

The above equalities, together with \eqref{delta A bar}, \eqref{a(pi)}, \eqref{a(pi*)}, \eqref{nonzero term} and \eqref{a(w)}, that 
\begin{align*}
\left( \prod_{e \in \pi : \alpha \neq e } \frac{d}{d \varepsilon (e) } \right) \Delta_n (\overline A) &=
\left( \prod_{e \in \pi : \alpha \neq e } \frac{d}{d \varepsilon (e) } \right) \left( N_1 \textbf{a}_{\overline A(\pi )} (\pi ) - N_2 \textbf{a}_{\overline A(\pi )} (\pi^* ) \right) \\
&= N_2  \left( \frac{N_{\alpha_1}  }{N_{\alpha_2} }   A^{\alpha_1}_{\alpha_2}-  A^{\alpha_2}_{\alpha_1} \right) \neq 0. 
\end{align*}
This implies that $\Delta_n (\overline A) \neq 0$ for sufficiently small $\varepsilon (e) =0 $ for every $e \in \pi \setminus \{ \alpha \}  $. This contradicts $\Delta_n (\overline A) = 0$ and concludes the proof.
\end{proof}
Theorem \ref{thm:stable pdb implies articulation} implies that, if a matrix does not have a cut vertex and satisfy pathwise detailed balance in a stable manner, then it satisfies the detailed balance condition. Hence we have the following corollary of Theorem \ref{thm:stable pdb implies articulation}. 
 
\begin{corollary} \label{cor:unstable PDB complete graph with one detailed balance}
Let $L>3$. Assume that the matrix $A \in \mathbb R^{L \times L } $ belongs to the class $\mathcal A_{\{ \emptyset, \{(1,2), (2,1)\} \} } $. Assume that $A$ satisfies the pathwise detailed balance condition in a stable manner with respect to the topology $\mathcal{T}(\emptyset,\{(1,2), (2,1)\} )$. Then, $A$ satisfies the detailed balance condition.
\end{corollary}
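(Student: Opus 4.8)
The plan is to deduce the statement from Theorem~\ref{thm:stable pdb implies articulation} together with the closedness of the detailed balance set (Proposition~\ref{prop:non-B open set}). The point is that in the topology $\mathcal{T}(\emptyset,\{(1,2),(2,1)\})$ one may perturb $A$ into a matrix of the same class whose induced graph is the complete graph $K_L$, and a complete graph on $L>3$ vertices has no cut vertex (removing any vertex leaves $K_{L-1}$, which is connected). Hence Theorem~\ref{thm:stable pdb implies articulation}, applied to such a perturbation, can only yield its first alternative, namely the detailed balance condition.

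Concretely, let $N$ be the normalized steady state of $A$, so that $A_{12}N_2=A_{21}N_1$. For every unordered pair $\{i,j\}$ which is \emph{not} an edge of $\mathcal{G}(A)$, let $D^{(ij)}$ be the perturbation matrix defined exactly as in \eqref{eq:D DB} with $e=(i,j)$; recall that then $\textbf{e}_L^T D^{(ij)}=0$, that $D^{(ij)}N=0$, and that $D^{(ij)}$ satisfies the detailed balance condition with respect to $N$. For small parameters $\varepsilon_{ij}>0$ set $\overline{A}:=A+\sum_{\{i,j\}\notin\mathcal{G}(A)}\varepsilon_{ij}D^{(ij)}$. Then $\overline{A}$ is Markovian, $N$ is a steady state of $\overline{A}$, for $\varepsilon$ small $\overline{A}$ is ergodic (adding positive off-diagonal entries cannot destroy irreducibility) so $N$ is its unique steady state, the graph $\mathcal{G}(\overline{A})$ is the complete graph, and the detailed balance condition on the edge $(1,2)$ still holds with respect to $N$ --- either because $\{1,2\}$ was left untouched, or because the term we added to it is itself $N$-detailed-balanced. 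Thus $\overline{A}\in\mathcal{A}_{\emptyset,\{(1,2),(2,1)\}}$ and $\|\overline{A}-A\|\to 0$ as all $\varepsilon_{ij}\to 0$.

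Now I would use stability: since $A$ satisfies the pathwise detailed balance condition stably in $\mathcal{A}_{\emptyset,\{(1,2),(2,1)\}}$ for the topology $\mathcal{T}(\emptyset,\{(1,2),(2,1)\})$, there is an open set $\mathcal{O}$ of that topology with $A\in\mathcal{O}$ on which pathwise detailed balance holds. For all sufficiently small $\varepsilon_{ij}$ we have $\overline{A}\in\mathcal{O}$, so $\overline{A}$ satisfies pathwise detailed balance, and since $\mathcal{O}$ is open it is also a neighbourhood of $\overline{A}$, so $\overline{A}$ satisfies pathwise detailed balance \emph{stably}. As $\mathcal{G}(\overline{A})=K_L$ has no cut vertex, Theorem~\ref{thm:stable pdb implies articulation} forces $\overline{A}$ to satisfy the detailed balance condition. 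Letting the $\varepsilon_{ij}$ tend to zero produces a sequence of detailed balance matrices in $\mathcal{A}$ converging to $A$, and Proposition~\ref{prop:non-B open set} then gives that $A$ itself satisfies the detailed balance condition.

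The one place that needs care is the construction of $\overline{A}$: one must check simultaneously that the added perturbation keeps the matrix Markovian and ergodic, does not move the steady state $N$, and preserves detailed balance on $(1,2)$; all of this follows from the identities $\textbf{e}_L^T D^{(ij)}=0$ and $D^{(ij)}N=0$ and the fact that $D^{(ij)}$ is $N$-detailed-balanced, exactly as in the proof of Theorem~\ref{thm:stable pdb implies articulation}. Everything else is a direct application of results already established.
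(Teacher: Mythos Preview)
Your argument is correct and follows the same essential route as the paper: invoke Theorem~\ref{thm:stable pdb implies articulation} for a matrix whose graph has no cut vertex. The paper's proof is two lines: it asserts that $\mathcal{G}(A)$ is complete ``because $E_N=\emptyset$'' and applies the theorem directly to $A$. You are more cautious, allowing for the possibility that $A$ might have accidental zero off-diagonal entries (which a strict reading of Definition~\ref{def:class} does not exclude), and therefore first perturb within the class to a genuinely complete $\overline{A}$, apply Theorem~\ref{thm:stable pdb implies articulation} there, and then pass to the limit via Proposition~\ref{prop:non-B open set}. This extra layer is harmless and in fact makes the proof robust to the formal reading of the class; the paper, by contrast, is relying on its informal convention that the ``architecture'' (hence the edge set) is fixed within a class, so that $E_N=\emptyset$ really does mean all edges are present.
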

\begin{proof}
    The graph $\mathcal G(A)$ is complete because $E_N = \emptyset$, hence it does not contain a cut vertex. Theorem~\ref{thm:stable pdb implies articulation} implies the statement.
\end{proof}

\section{Approximating the response functions using the time series of the Markov process associated to the biochemical network}\label{sec:Stochastic}
A possible way to measure the response function $R_{12} (t) $ is to inject substance $1$ in the system and measure substance $2$. If we have an homogeneous system of cells this would give us $R_{12}$. 
On the other hand, one could imagine other types of experiments, where the observed quantity is a time series of events. 
 In the case of chemical systems modelled by equations with the form \eqref{eq:ODE}, a natural time series is the sequence of the times in which some of the states of the system take palace, say $1$ and $2$. In this section we will indicate how to detect the property \eqref{eq:intro p(DB)} in such time series. Specifically, we explain how to identify \eqref{eq:intro p(DB)} in a single realization of the (hidden) Markov process associated to \eqref{eq:ODE} with probability one.

First of all, we recall that a chemical network given by ODEs of the form \eqref{eq:ODE}, with a Markovian matrix $ A $, is associated with a pure-jump Markov process $ \{ X_t \}_{ t \geq 0} $ with transition probability rate $A_{ji}dt$ from state $i$ to $j$ in the time interval $[t,t+dt]$. 

The response function $R_{21}$ is related to the process $ (X_t)_{t\geq0} $, initialized at time $t=0$ as $X_0$ with probability density $ n_0=(0,1,0,\ldots,0) $, by the formula
\begin{align*}
    \P(X_t=1\mid X_0=2) = n_1(t) =  R_{21}(t).
\end{align*}
Recall that the Markov process is translation invariant with respect to time, in particular we have
\begin{align*}
    \P(X_{t_2}=1\mid X_{t_1}=2) = R_{21}(t_2-t_1), \quad t_2>t_1\geq0.
\end{align*}
We recall that the stability of the property \eqref{eq:intro p(DB)} implies detailed balance condition for systems without cut vertices, see Theorem \ref{thm:stable pdb implies articulation}. 

In principle, to compute the response functions $R_{12} $, $R_{21} $, would require a detailed knowledge of the underlying Markov process $X_t$, or in other words of the rates of all the reactions taking place in the biochemical network. On the other hand, the two functions $R_{12} (t) $ and $R_{21}(t)$ can be computed in several different ways by means of realizations of stochastic processes. We first consider the case in which, to reconstruct the response functions, we use multiple independent realizations of the process and apply the Law of Large Numbers, (see e.g. \cite{feller1991introduction}). 
\begin{theorem} \label{thm:stoch}
	Consider infinitely many independent measurements $X_t^k$ and $\bar X_t^k$, $k\in \N$, of the process $(X_t)_{t\geq0}$ with $X_0=1$ and of the process $(\bar X_t)_{t\geq0}$ with $\bar X_0=2$, respectively. Then we have
    \begin{align*}
	    \lim_{N\to \infty} \dfrac{1}{N} \sum_{k=1}^N \ind_{\{X_t^k = 2\}} = R_{12}(t),
        \quad 
        \lim_{N\to \infty} \dfrac{1}{N} \sum_{k=1}^N \ind_{\{\bar X_t^k = 2\}} = R_{21}(t).
\end{align*}
\end{theorem}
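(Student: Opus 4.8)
The plan is to recognize both limits as instances of the Strong Law of Large Numbers applied to i.i.d.\ Bernoulli variables; the only substantive point is the identification of the common mean with the corresponding response function, which has essentially already been made at the beginning of this section.

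First I would fix $t\ge 0$ and set $Y_k := \ind_{\{X_t^k = 2\}}$ for $k\in\N$. Since, by hypothesis, the realizations $X^k$ are independent copies of the process $X$ started at $X_0 = 1$ (all realized, say, on a common product probability space), the family $(Y_k)_{k\in\N}$ consists of independent, identically distributed Bernoulli random variables; in particular each $Y_k$ is bounded, hence in $L^1$, with
\[
\E[Y_1] = \P(X_t^1 = 2) = \P(X_t = 2 \mid X_0 = 1).
\]
The key step is that this probability equals $R_{12}(t)$: as recalled above and already used in Section~\ref{sec:linear biochemical systems}, the solution $n(t) = e^{tA} n_0$ of \eqref{eq:ODE} is precisely the law of the associated pure-jump Markov process at time $t$ with initial distribution $n_0$, so taking $n_0 = e_1$ (which encodes $X_0 = 1$) gives $\P(X_t = 2\mid X_0 = 1) = n_2(t) = \langle e_2, e^{tA} e_1\rangle = R_{12}(t)$ by the definition of the response function.

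Applying Kolmogorov's Strong Law of Large Numbers (see e.g.\ \cite[Chapter~VIII]{feller1991introduction}) to the i.i.d.\ integrable sequence $(Y_k)_{k\in\N}$ then yields $\frac1N\sum_{k=1}^N \ind_{\{X_t^k = 2\}} \to \E[Y_1] = R_{12}(t)$ almost surely as $N\to\infty$, which is the first assertion. The second is proved verbatim for the independent family $\bar X^k$ started at $\bar X_0 = 2$: the relevant indicator $\ind_{\{\bar X_t^k = 1\}}$ has mean $\P(\bar X_t = 1 \mid \bar X_0 = 2) = n_1(t)$ with $n_0 = e_2$, which is $\langle e_1, e^{tA} e_2\rangle = R_{21}(t)$, and the Strong Law gives the stated limit.

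I do not expect a genuine obstacle. The one step deserving care is the probabilistic reading of \eqref{eq:ODE} used in the computation of $\E[Y_1]$, but this is already available. For completeness one might add that the convergence is asserted pointwise in $t$; if one wished for a single almost sure event valid simultaneously for all $t\ge 0$, one could combine the càdlàg regularity of the sample paths with the continuity of $R_{12}$ and $R_{21}$ to promote the convergence to locally uniform convergence in $t$, but this refinement is not needed for the statement as given.
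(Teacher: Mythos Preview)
Your approach is exactly the paper's: the paper does not give a detailed proof but merely invokes the Law of Large Numbers (citing Feller), and your argument spells out this application by identifying the i.i.d.\ Bernoulli variables and computing their mean via the probabilistic interpretation of \eqref{eq:ODE}.

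One point worth flagging explicitly rather than silently correcting: in the second limit you switch from the stated indicator $\ind_{\{\bar X_t^k = 2\}}$ to $\ind_{\{\bar X_t^k = 1\}}$. You are right to do so, since with $\bar X_0 = 2$ the empirical average of $\ind_{\{\bar X_t^k = 2\}}$ would converge to $\P(\bar X_t = 2\mid \bar X_0 = 2) = R_{22}(t)$, not $R_{21}(t) = \langle e_1, e^{tA}e_2\rangle$. So the statement as printed contains a typo and your proof addresses the intended assertion; it would be cleaner to say this outright rather than let the reader wonder whether the discrepancy is an oversight on your part.
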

The above theorem allows us to check if the condition $R_{12}(t)/R_{21}(t)=const. $, cf. \eqref{eq:intro p(DB)}, holds.

Let us now explain how we can, alternatively, verify condition \eqref{eq:intro p(DB)} using a single realization of the process using the ergodicity assumption. Let us fix two times $t_2>t_1>0$. Consider the process $(X_t)_{t\geq0}$ starting at $X_0=1$. We then make the following experiments $N$ times leading to an approximation of $R_{12}(t)$.
\begin{enumerate}[(1)]
    \item Wait for time $t_1$ and measure if $X_{t_1}=2$ or $X_{t_1}\neq2$.
    \item Wait until time $t_2>t_1$ and measure if $ X_{t_2}=2$ or $X_{t_2}\neq2$.
    \item Wait now until the process reaches the state $1$. Due to the ergodicity of the process this will happen in finite time, say $\tau$.
    \item Restart the process at time $t_2+\tau$ and go back to Step (1).
\end{enumerate}
Let us mention that this yields $N$ independent measurements $(X^k_{t_1})_{k=1}^N$, $(X^k_{t_2})_{k=1}^N$ of the process $(X_t)_{t\geq0}$ starting in $X_0=1$ due to the strong Markov property. Accordingly, in the same realization we can wait until the process reaches the state $2$. Again, this happens in finite time due to the ergodicity. Then, we can repeat the above procedure exchanging $1$ and $2$. This yields $N$ independent measurements $(\bar X^k_{t_1})_{k=1}^N$, $(\bar X^k_{t_2})_{k=1}^N$ of the process $(\bar X_t)_{t\geq0}$ starting in $\bar X_0=2$. Using Theorem \ref{thm:stoch} this yields approximations of the response functions $R_{12}(t), \, R_{21}(t) $ for $t=t_1,\, t_2$. If the equality $R_{12}(t_1)/R_{21}(t_1) = R_{12}(t_2)/R_{21}(t_2) $ fails, we deduce that detailed balance fails. Otherwise, we repeat the experiments for different times. Notice that with a sufficiently large number of times $t_1, t_2$ we can approximate the response functions with arbitrary accuracy.

The use of time series suggested above requires very long measurements. 
Most likely there  is a way of optimizing the information in a time series, in order to obtain $R_{12}(t) $ and $R_{21}(t) $. However we are not going to study this in this paper. 

\section{Extended detailed balance condition} \label{sec:extended db}
All the results in the previous sections are for Markovian matrices $A$. From the biological point of view this means that we restrict our attention to chemical systems that are isolated, because we assume that the total concentration of substances in the system does not change in time. 

In this section we illustrate how it is possible to extend the previous results to the case of chemical systems where the number of substance in the population might vary, due to the presence of sources and sinks. 
In particular, we extend the definition of detailed balance to chemical networks with sources and sinks and we prove that, if a matrix $A$ satisfies the extended detailed balance condition, then it satisfies also the pathwise detailed balance condition. 

To this end we adopt the following notation in \eqref{A alpha beta}-\eqref{A loss}.

\begin{definition}[Matrix with sources and sinks]
Let $A \in \mathbb R^{L \times L } $ be a Markovian matrix.
Assume that the matrix $A$ is such that there exist a partition $\{ \alpha, \alpha_{in}, \alpha_{out} \} $ of $\Omega$, such that
\begin{equation} \label{A out in} 
 A_{\alpha , \alpha_{in} }\neq \textbf{0}, \quad  A_{ \alpha_{in}, \alpha  } = \textbf{0},  \quad
A_{\alpha_{out}, \alpha } \neq \textbf{0}, \quad  A_{ \alpha , \alpha_{out}} = \textbf{0}
 \quad
A_{\alpha_{out}, \alpha_{in} } = \textbf{0}, \quad  A_{ \alpha_{in} , \alpha_{out}} = \textbf{0}.
\end{equation}
Finally, we assume that the dynamics in each of the compartments is Markovian. 
Hence we assume that $ e_{|\alpha |}^T E_{\alpha }=0 $,  $e_{|\alpha_{in} |}^T E_{\alpha_{in} }=0$ and  $e_{|\alpha_{out} |}^T E_{\alpha_{out} }=0$. Then, we call $\alpha_{in}$ the source and $\alpha_{out}$ the sink of the chemical network.
\end{definition}

Notice that, if $A$ is a matrix with sources and sinks, then the assumptions \eqref{A out in} on $A$ imply that
\[ 
C_\alpha = \operatorname{diag}\left( \left\{ \sum_{j \in \alpha_{out}} A^i_j \right\}_{i\in\alpha} \right), \quad  C_{\alpha_{in} }= \operatorname{diag}\left( \left\{ \sum_{j \in \alpha} A^i_j \right\}_{i\in\alpha_{in}} \right) \text{ and } C_{\alpha_{out} }=\textbf{0}. 
\]

Using the notations introduced above and the assumptions on $A$ and on the set of compartments $X=\{\alpha , \alpha_{in}, \alpha_{out}\} $ we can rewrite \eqref{eq:ODE} as the following equation for $n(t)=(n_\alpha , n_{\alpha_{in}} , n_{\alpha_{out}})$, 
 \begin{align*}
 \frac{dn_\alpha (t)  }{dt } &= E_{\alpha} n_\alpha (t)  - C_\alpha n_\alpha (t) + A_{\alpha , \alpha_{in} } n_{\alpha_{in} } (t)  \\
  \frac{dn_{\alpha_{in}} (t)   }{dt } &= E_{\alpha_{in}} n_{\alpha_{in}}(t)  - C_{\alpha_{in}} n_{\alpha_{in}}(t)    \\
 \frac{dn_{\alpha_{out}} (t)  }{dt } &= E_{\alpha_{out}} n_{\alpha_{out}} (t) + A_{\alpha_{out}, \alpha } n_{\alpha } (t).  
  \end{align*}
 We refer to \cite{franco2023description} for more details on the reduction of systems of ODEs of the form \eqref{eq:ODE} in system of equations for the density of elements in a compartment.

\begin{definition}[Extended detailed balance]\label{def:EDB}
Assume that $A\in \mathbb R^{L \times L} $ is a Markovian matrix with sources and sinks. 
 We say that the matrix $A$ satisfies the extended detailed balance condition if the matrix $E_\alpha $ satisfies the detailed balance condition. Namely, there exists an $N \in \mathbb R_+^{|\alpha|} $ such that $E_\alpha N=0$ and such that the matrix $B_\alpha := S_\alpha^{-1} E_\alpha S_\alpha $, where  
 \[
 S_\alpha:=\operatorname{diag} \left( \{ \sqrt{N_j} \}_{j \in \alpha} \right)
 \]
is symmetric.
\end{definition}

In particular notice that if the ODE \eqref{eq:ODE} satisfies the detailed balance condition then $E_\alpha  =A $ and $n_\alpha=n$.

When a matrix $A$ has sources and sinks then we define for $i,j \in \alpha  $
\begin{equation}\label{response functions sources and sinks}
R_{ij} (t) := 
\langle e_j , e^{ t \left( E_\alpha - C_\alpha  \right)  } e_i \rangle \text{ for } i, j \in \alpha 
\end{equation}
as the response function of $j$ to a signal $i$. 
We refer to \cite{franco2023description} for more details on response functions.

\begin{theorem} \label{thm:EDB implies R12=R21}
    Assume that the matrix $A$ has sources and sinks and it satisfies the extended detailed balance condition for $\alpha \subset \Omega $. Assume $1, 2 \in \alpha $.
    Then 
    \begin{equation} \label{reversibility response} 
    R_{21}(t)= \frac{N_1}{N_2}R_{12}(t) \quad \text{ for all } t \geq 0 
    \end{equation}
    where the response functions $R_{12}, R_{21} $ are given by \eqref{response functions sources and sinks} and $N \in \mathbb R_+^{|\alpha|} $ is such that $E_\alpha  N=0$. 
\end{theorem}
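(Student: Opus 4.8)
The plan is to reduce the statement to the symmetry of a conjugated matrix, exactly as in the proof of Proposition \ref{lem:DB implies pDB}, but now applied to the sub-Markovian generator $M := E_\alpha - C_\alpha$ that governs the dynamics inside the compartment $\alpha$. Recall from \eqref{response functions sources and sinks} that $R_{ij}$ is the $j$-th component of the solution started at $e_i$, so the asserted identity \eqref{reversibility response} reads $\langle e_1, e^{tM}e_2\rangle = (N_1/N_2)\,\langle e_2, e^{tM}e_1\rangle$ for all $t\geq 0$.

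First I would use the hypothesis. Since $A$ satisfies the extended detailed balance condition, the matrix $B_\alpha = S_\alpha^{-1}E_\alpha S_\alpha$ is symmetric, where $S_\alpha=\operatorname{diag}(\{\sqrt{N_j}\}_{j\in\alpha})$ and $E_\alpha N=0$. The key elementary observation is that $C_\alpha$ is a diagonal matrix and $S_\alpha$ is diagonal, hence they commute and $S_\alpha^{-1}C_\alpha S_\alpha = C_\alpha$. Consequently
\[
S_\alpha^{-1} M S_\alpha = S_\alpha^{-1}E_\alpha S_\alpha - S_\alpha^{-1}C_\alpha S_\alpha = B_\alpha - C_\alpha =: \widetilde M ,
\]
and $\widetilde M$ is symmetric, being a difference of symmetric matrices.

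Next I would exponentiate. From $M = S_\alpha \widetilde M S_\alpha^{-1}$ we get $e^{tM} = S_\alpha e^{t\widetilde M} S_\alpha^{-1}$ for every $t\geq 0$. Testing against $e_1,e_2\in\alpha$ and using that $S_\alpha$ is diagonal with $(S_\alpha)_{ii}=\sqrt{N_i}$ gives
\[
R_{21}(t) = \langle e_1, e^{tM} e_2\rangle = \frac{\sqrt{N_1}}{\sqrt{N_2}}\,\langle e_1, e^{t\widetilde M} e_2\rangle , \qquad R_{12}(t) = \langle e_2, e^{tM} e_1\rangle = \frac{\sqrt{N_2}}{\sqrt{N_1}}\,\langle e_2, e^{t\widetilde M} e_1\rangle .
\]
Since $\widetilde M$ is symmetric, so is $e^{t\widetilde M}$, hence $\langle e_1, e^{t\widetilde M} e_2\rangle = \langle e_2, e^{t\widetilde M} e_1\rangle$; dividing the two displayed identities yields $R_{21}(t)/R_{12}(t) = N_1/N_2$, which is \eqref{reversibility response}.

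There is essentially no hard step here: the entire content is the remark that $C_\alpha$ is diagonal, so conjugation by $S_\alpha$ leaves it untouched, and therefore the symmetrisation that works for the closed compartmental dynamics (governed by $E_\alpha$ alone) survives the addition of the loss term $C_\alpha$. If one prefers to avoid the matrix exponential, the same conclusion follows by expanding $e^{tM}$ in powers of $t$ and matching coefficients of $t^n$, as in Proposition \ref{prop:R12=R21 implies PSEUDO DB}, thereby reducing the claim to $\langle e_1, M^n e_2\rangle = (N_1/N_2)\,\langle e_2, M^n e_1\rangle$ for all $n\in\mathbb N$, which again follows at once from the symmetry of $\widetilde M = S_\alpha^{-1} M S_\alpha$. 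The only point requiring minor care is bookkeeping the index convention of the response functions in \eqref{response functions sources and sinks} so that the factors $\sqrt{N_1}$ and $\sqrt{N_2}$ land on the correct sides.
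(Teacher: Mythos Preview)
Your proof is correct and takes essentially the same approach as the paper: both conjugate $E_\alpha - C_\alpha$ by $S_\alpha$, use that the resulting matrix $B_\alpha - S_\alpha^{-1}C_\alpha S_\alpha$ is symmetric, and transfer this symmetry to the exponential. The only cosmetic difference is that you make the identity $S_\alpha^{-1}C_\alpha S_\alpha = C_\alpha$ explicit at the outset, whereas the paper carries the expression $S_\alpha^{-1}C_\alpha S_\alpha$ through the computation and uses its symmetry implicitly when taking the transpose.
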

\begin{proof}
    The extended detailed balance condition implies that 
    \begin{align*}
    R_{21}(t)&=\langle e_1, e^{ t \left( E_\alpha - C_\alpha \right)  } e_2 \rangle = \langle e_1, e^{ t S_\alpha \left( B_\alpha- S_\alpha^{-1}  C_\alpha S_\alpha \right) S_\alpha^{-1} } e_2 \rangle = \langle e_1, e^{ t S_\alpha \left( B_\alpha- S_\alpha^{-1}  C_\alpha S_\alpha \right) S_\alpha^{-1} } e_2 \rangle \\
    &=  \langle S_\alpha e_1  , e^{ t \left( B_\alpha- S_\alpha^{-1}  C_\alpha S_\alpha \right)  } S_\alpha^{-1}e_2 \rangle =\frac{\sqrt{N_1}}{\sqrt{N_2}} \langle e_1  , e^{ t \left( B_\alpha- S_\alpha^{-1}  C_\alpha S_\alpha \right)  } e_2 \rangle \\
    &= \frac{\sqrt{N_1}}{\sqrt{N_2}} \langle  e^{ t \left( B_\alpha- S_\alpha^{-1}  C_\alpha S_\alpha \right)^T } e_1, e_2 \rangle = \frac{N_1}{N_2} \frac{\sqrt{N_2}}{\sqrt{N_1}} \langle  e_2, e^{ t \left( B_\alpha- S_\alpha^{-1}  C_\alpha S_\alpha \right)  } e_1 \rangle \\
    &= \frac{N_1}{N_2} \langle  e_2, e^{ t S_\alpha \left( B_\alpha- S_\alpha^{-1}  C_\alpha S_\alpha \right)  S_\alpha^{-1}} e_1 \rangle =  \frac{N_1}{N_2}  \langle e_2, e^{ t \left( E_\alpha - C_\alpha \right)  } e_1 \rangle = \frac{N_1}{N_2} R_{12}(t).
    \end{align*}
\end{proof}

\section{Alternative choices of measurements}
\label{sec:non reverse measurements}
In the previous section we study if the detailed balance condition of a matrix $A$ can be detected by reciprocal measurements $R_{12} $, $R_{21} $ or non-reciprocal measurements. 
In this section we explain why we decided to focus our attention only on these reverse measurements.
More precisely, we consider the set of response functions $R_{12}, R_{13}$ corresponding to matrices $A \in \mathbb R^{ 4 \times 4 } $ that corresponds to a complete graph $\mathcal G(A) $.
We will show that, the fact that $A$ satisfies the detailed balance condition does not lead to any restriction on the response functions $R_{12}, R_{13} $. Hence this type of measurements are not useful in order to detect if the matrix $A$ satisfies or not detailed balance. 

Before doing that we write a heuristic computation that give some insights on the dimensionality of different sets of matrices. For instance, we consider the set of matrices satisfying the detailed balance condition respectively the pathwise detailed balance condition. 
These heuristic computations are in agreement with the results in the previous sections and allow us to see the notion of stability from a "geometrical perspective". For the purposes of this section we assume that the manifolds of matrices satisfying the detailed balance property respectively the pathwise detailed balance condition are well defined.

Consider a matrix $A \in \mathcal A(L) $ corresponding to a complete graph $\mathcal G(A)$. We recall that $\mathcal A $ is the set of Markovian and ergodic matrices.

We use the notation
\[
\mathcal B := \{  A \in \mathcal A : A \text{ satisfies the detailed balance condition} \} 
\] 
Finally, we denote with 
\[ 
\mathcal C  :=  \{  A \in \mathcal A : A \text{ satisfies the pathwise detailed balance condition} \} 
\] 
Notice that since the matrices in $\mathcal A $ are Markovian we have that 
\[
\dim (\mathcal A)= L^2 - L =L(L-1). 
\]

We now compute the dimension of the manifold $\mathcal B $. 
Notice that for every edge $e$ we can chose freely a reaction rate, while the reaction rate of the reverse edge  $e^*$ is determined by the steady state and by the reaction rate of $e$.
Therefore, the dimension of $\mathcal B$ is given by the number of edges in the graph, that is $\frac{L(L-1)}{2}$, plus the possible choices for the steady state, that is $L-1$. In total we have
\[ 
\dim(\mathcal B ) = \frac{L(L-1)}{2}+L-1 =\frac{1}{2} (L-1)(L+2). 
\]
Finally $\dim(\mathcal C ) $ is $\dim (\mathcal A ) $ minus the $L-1$ constraints (assuming that they are all independent) of the pathwise detailed balance condition $+1$, due to the choice of the constant $c$ in the pathwise detailed balance condition
\[
 \dim(\mathcal C)=  L (L-1) - (L-1) +1  =L (L-2) + 2. 
 \] 
Notice that for $L=3 $ we have $\dim (\mathcal B )= \dim (\mathcal C) =5 $ as expected from Corollary \ref{cor:L=3}, where the equivalence between pathwise detailed balance and detailed balance is proven. 
Instead, for $L=4 $ we have $9= \dim (\mathcal B ) <  \dim (\mathcal C) =10$, as expected from Example \ref{exam:L=4 strongly connected}, where it is proven that pathwise detailed balance is not equivalent to detailed balance. 

Notice that for generic $L > 3 $ we have $\dim (\mathcal A ) > \dim (\mathcal C) > \dim (\mathcal B) $.
As a consequence of the fact that $\dim (\mathcal A ) > \dim (\mathcal C) $ we can expect that, given a matrix $A\in \mathcal C $, every open set $U \subset \mathcal A $ with $A \in U$ satisfies $U \cap (\mathcal A \setminus \mathcal C) \neq \emptyset$. 
In other words, this geometrical argument suggests that the pathwise detailed balance condition is unstable in the class of complete graphs with at least one edge without detailed balance. 
This is exactly what is proven in Corollary \ref{cor:unstable PDB complete graph with one detailed balance}.

Moreover, let $\mathcal C_d $ be the submanifold of $\mathcal A $ for which pathwise detailed balance holds for $ d $ couple of points. 
Then we compute $\dim(\mathcal C_d) $ as we did before with $\mathcal C $, but taking into account that now we have $d (L-1)$ constraints due to the pathwise detailed balance condition and we have $d $ different constants. Here we assume that all $d$ measurements give $d$ independent constraints. Hence,
\[
\dim (\mathcal C_d ) = L(L-1) - d (L-1) + d. 
\]
Assume now that $L $ tends to infinity. We want to understand how many couples of reverse measurements (hence of the form $R_{ij}, R_{ji}$) we should do in order to have $\dim (\mathcal C_d) = \dim (\mathcal B) $.
It turns out that 
\[
d =  \left(  L^2 / 2 - 3/2 L +1 \right)/(L-2).
\]
Hence as $L \rightarrow \infty $ we have $d \approx L /2 $. This implies that, in order to be able to prove the detailed balance condition, performing reverse measurements we need to make a number of measurements of the order of $L$.
Therefore, from the practical point of view this approach is not feasible. This is why in this paper we focus instead on studying the stability of the pathwise detailed balance condition for some classes of matrices. 

Finally, consider the Markovian matrices $A \in \mathbb R^{L \times L } $ such that $\mathcal G(A)$ is a cycle graph.
We denote this set of matrices as $\mathcal A_c$. 
Notice that, these matrices (as well as their corresponding steady state) are determined by the weights of the edges along the cycle, hence 
\[
\dim (\mathcal A_c) = 2L.
\]
Notice that, under these assumptions, the pathwise detailed balance condition provides only two constraint, indeed \eqref{eq:PDB} is non trivial only for $n=1$ and $n=L-1$. 
Hence, including the degree of freedom given by the steady state, we get
\[ 
\dim \left( \{  A \in \mathcal A_c : A \text{ satisfies the pathwise detailed balance condition} \} \right)=2L -2 +1=2L -1 . 
\]
Finally the dimension of the set of the matrices in $\mathcal A_c $ that satisfy the detailed balance condition can be computed as above, namely it is the sum of the number of edges plus the number of degree of freedom provided by the steady state, namely
\[
\dim \left( \{  A \in \mathcal A_c : A \text{ satisfies the detailed balance condition} \} \right)= L + L-1 =  2L -1 . 
\]
Therefore, the dimension of the set of matrices in $\mathcal A_c $ that satisfy the detailed balance condition is equal to the dimension of the set of matrices in $\mathcal A_c $ that satisfy the pathwise detailed balance condition. This is in agreement with the equivalence of the two properties (detailed balance and pathwise detailed balance), proven in Lemma \ref{lem:p-DB iff DB on cycles}.

We  now consider different sets of possible response functions. We assume that $L= 4 $ and that the set of measurements is $\mathcal M :=\{ (1,2), (1,3) \} \subset \{1, \dots, 4\}^2  $, i.e.~we consider the response functions 
\[
R_{12} (t)= \langle e_2 , e^{ t A } e_1 \rangle \text{ and } R_{13} (t)= \langle e_3 , e^{ t A } e_1 \rangle . 
\]
Since we are considering $L=4 $ the response function $R_{12} $ is characterized by the constants $\langle e_2 , A  e_1 \rangle$, $\langle e_2$ , $A^2  e_1 \rangle, \langle e_2 , A^3  e_1 \rangle $ while the response function $R_{13} $ is characterized by the constants $\langle e_3 , A  e_1 \rangle$, $\langle e_3$ , $A^2  e_1 \rangle, \langle e_3 , A^3  e_1 \rangle $. 

Let $F : \mathcal A \rightarrow \mathbb R^{6} $ be given by
\[
F : A \mapsto (  \langle e_2 , A  e_1 \rangle, \langle e_2 , A^2  e_1 \rangle, \langle e_2 , A^3  e_1 \rangle \langle e_3 , A  e_1 \rangle, \langle e_3 , A^2  e_1 \rangle, \langle e_3 , A^3  e_1 \rangle )
\]
Let 
\[
A_0 = \left( 
\begin{matrix}
    &-3&1 &1 &1 \\
    & 1 & -3 &1 &1 \\
    & 1 & 1 & -3 &1 \\
    &1 &1 &1 &-3
\end{matrix}\right).
\] 
We want to show that $F(B_\delta(A_0))$, with $B_\delta(A_0) \subset \mathcal A_{\emptyset,\{ 1, \dots, 4\}^2 } $ is an open set in $\mathbb R^6$. 
To this end we show that the linearization of $F$ around $A_0 $ is of full rank.  
Notice that 
\[
DF(A_0) (\xi) = (x_1,x_2, x_3, x_4,x_5,x_6)
\] 
where $\xi = (\xi_{12}, \xi_{13}, \xi_{14}, \xi_{23}, \xi_{24}, \xi_{34})$ is the vector of the values that characterize a $4 \times 4 $ symmetric matrix in $\mathcal A $,  and where 
\begin{align*}
x_1&= \xi_{12}, \\
x_2&=\xi_{13}+\xi_{32} + \xi_{14}+\xi_{42}, \\
 x_3&=4 \xi_{13} +7\xi_{12}+4\xi_{14} +2\xi_{34}+2\xi_{42} +2\xi_{32},\\
 x_4&= \xi_{13}, \\
 x_5 &=\xi_{12} +\xi_{23}+\xi_{14}+\xi_{43} \\
 x_6 &= 7 \xi_{13}+3\xi_{12} +3\xi_{14} +3\xi_{34}+2 \xi_{42}+3\xi_{32}.
\end{align*}

Then 
\[
DF(A_0)(\xi) = M \xi
\] 
where 
\[
M := \left( \begin{matrix}
   & 1 & 0 & 0 &0 & 0 & 0 \\
   & 0 & 1 &1 &1 &1 &0 \\
   & 7 & 4 & 4 & 2 &2 &2 \\
   & 0 & 1 & 0 & 0 &0 & 0  \\
   & 1 & 0  &1 & 1 & 0 &1 \\
   & 3 & 7 & 3 &3 &2 &3 
\end{matrix}\right)
\]
Since $\det M \neq 0 $ the desired conclusion follows: we can cover an open set of coefficients of the response functions $R_{12}$ and $R_{13}$ with matrices that satisfy the detailed balance condition.
Hence the fact that a matrix $A$ is symmetric (or satisfies the detailed balance condition) does not impose any condition on the response functions $R_{12}, R_{13}$. It is therefore convenient to consider the relation between the response functions  $R_{12} $ and $R_{21}$, as we do in this paper. Indeed in that case we have that the detailed balance condition of $A$ imposes strong conditions on the relation between $R_{12}$ and $R_{21}$. See Proposition \ref{lem:DB implies pDB} and Proposition \ref{prop:R12=R21 implies PSEUDO DB}.

\section{Concluding remarks} \label{sec:conclusion}
We conclude summarizing the main results that we obtained in this paper and indicate possible continuations of this work. 
The aim of this paper is to understand whether a biochemical network satisfies the detailed balance condition by performing a couple of reciprocal measurements ($R_{ij}$ and $R_{ji}$). The advantage of this approach is that it does not require a detailed knowledge of the rates of the reactions of the network. 
We proved that the pathwise detailed balance condition is a necessary condition for detailed balance to hold. 
We studied the stability property of the pathwise detailed balance condition in suitably defined classes of biochemical networks. Interestingly, it turns out that the stability of the pathwise detailed balance property is strictly related with the topological properties of the network, in particular to the existence of cut vertices. 
In particular, we proved that if a network does not contain a cut vertex and satisfies the pathwise detailed balance in a stable way, then it satisfies the detailed balance condition. 

The definition of stability introduced in this paper provides a possible rigorous definition of robustness for biochemical networks. 
In this paper we refer essentially always to the stability of the pathwise detailed balance condition, but similar definitions of stability/robustness could also be extended to other contexts. This definition could be modified in order to adapt to the knowledge available about the evolution of biochemical networks (see \cite{noda2018metabolite}). 

Finally, it would be interesting to generalize the results of this paper to non-linear reaction networks. The challenge in this case is that one cannot rely on a straightforward connection with graph theory, as we did in this paper. 
It would also be interesting to analyse the case in which the set of states $\Omega $ is not discrete. For instance, it is relevant to determine if the  molecular concentrations in a cell are due to passive transport processes, like diffusion, or active processes, for instance the action of molecular motors.

\bigskip 

\textbf{Acknowledgements}
The authors gratefully acknowledge the support by the Deutsche Forschungsgemeinschaft (DFG) through the collaborative research centre "The mathematics of emerging
effects" (CRC 1060, Project-ID 211504053). B. Kepka is funded by the Bonn International Graduate School of Mathematics at the Hausdorff Center for Mathematics
(EXC 2047/1, Project-ID 390685813). E. Franco and J. J. L. Velázquez are funded by the DFG
under Germany's Excellence Strategy-EXC2047/1-390685813. The funders had no role in study design, analysis, decision to
publish, or preparation of the manuscript.

\bibliographystyle{habbrv}
\bibliography{References}

 \end{document}